\newtheorem{theorem}{Theorem}
\newtheorem{open}{Open Question}
\newtheorem{claim}[theorem]{Claim}
\newtheorem{lemma}[theorem]{Lemma}
\newtheorem{observation}[theorem]{Observation}
\newtheorem{proposition}[theorem]{Proposition}
\newtheorem{definition}[theorem]{Definition}
\newtheorem{corollary}[theorem]{Corollary}
\newtheorem{fact}[theorem]{Fact}
\newtheorem{remark}[theorem]{Remark}
\newcommand{\depth}{\mathsf{Depth}}
\newcommand{\RPC}{Replacement Path Covering}
\newcommand{\FT}{\mathsf{FT}}
\def\cA{{\cal A}}
\def\LAB{\mbox{\tt Label}}
\def\FTBFS{\mbox{\tt FT-BFS}}
\newcommand{\poly}{\mathsf{poly}}
\newcommand{\polylog}{\mathsf{polylog}}
\renewcommand{\paragraph}[1]{\vspace{0.15cm}\noindent {\bf #1}}
\def\Root{\mbox{\tt r}}
\def\Leaf{\mbox{\tt Leaf}}
\def\NLeaf{\mbox{\tt nLeaf}}
\def\NodesIn{\mbox{\tt N}}
\def\mod{\mbox{\tt mod}}
\def\SPT{\mbox{\tt SPT}}
\def\len{\textbf{w}}
\newcommand{\dist}{\mbox{\rm dist}}
\def\distc{\Delta}
\newcommand{\abs}[1]{\left|#1\right|}
\newcommand{\sett}[2]{\left\{ #1 \left| \; \vphantom{#1 #2} \right. #2  \right\}} %set with condition
\newcommand{\rs}{\mathsf{C_{\textsf{RS}}}}
\newcommand{\con}{\mathsf{C_{\textsf{AG}\circ\textsf{RS}}}}
\newcommand\code[2]{$\left[ #1 \right]_{#2}$\xspace}
\renewcommand{\H}{\mathcal{H}}
\newcommand{\HM}{$\mathsf{HM}$ hash family\xspace}
\newcommand{\HMC}[1]{$\mathsf{HM}_2(#1)$\xspace}
\newcommand{\SHMC}[1]{$\mathsf{SHM}_2(#1)$\xspace}
\newcommand{\RP}{$\mathsf{RPC}$\xspace}
\newcommand{\CV}{$\mathsf{CV}$}
\newcommand{\DSO}{$\mathsf{DSO}$\xspace}
\begin{document}
%\graphicspath{{./graphics/}}%helpful if your graphic files are in another directory

%\bibliographystyle{plainurl}% the mandatory bibstyle
\title{\textbf{Deterministic Replacement Path Covering}\footnote{An extended abstract of this article appeared in SODA'21.}}
\author{
Karthik C.\ S.\footnote{This work was   supported by the  Israel Science Foundation (grant number 552/16), the Len Blavatnik and the Blavatnik Family foundation and by   the Simons Foundation, Grant Number 825876, Awardee Thu D. Nguyen.} \\
        \small Tel Aviv University\\
        \small \texttt{karthik0112358@gmail.com} 
\and				
Merav Parter \footnote{Partially supported by the Israel Science Foundation (grant number 2084/18).}\\
        \small Weizmann Institute of Science \\
        \small \texttt{merav.parter@weizmann.ac.il}
}

\date{}
\maketitle

\begin{abstract}
In this article, we provide a unified and simplified approach to derandomize central results in the area of fault-tolerant graph algorithms. Given a graph $G$, a vertex pair $(s,t) \in V(G)\times V(G)$, and a set of edge faults $F \subseteq E(G)$, a replacement path $P(s,t,F)$ is an $s$-$t$ shortest path in $G \setminus F$. For integer parameters $L,f$, a \emph{replacement path covering} (\RP) is a collection of subgraphs of $G$, denoted by $\mathcal{G}_{L,f}=\{G_1,\ldots, G_r \}$, such that for every set $F$ of at most $f$ faults (i.e., $|F|\le f$) and every replacement path $P(s,t,F)$ of at most $L$ edges, there exists a subgraph $G_i\in \mathcal{G}_{L,f}$ that contains all the edges of $P$ and does not contain any of the edges of $F$. The covering value of the \RP $\mathcal{G}_{L,f}$ is then defined  to be the number of subgraphs in $\mathcal{G}_{L,f}$.\vspace{0.2cm}

\begin{sloppypar}In the randomized setting, it is easy to build an $(L,f)$-\RP with covering value of $O(\max\{L,f\}^{\min\{L,f\}}\cdot \min\{L,f\}\cdot \log n)$, but to this date, there is no efficient \emph{deterministic} algorithm with matching bounds. As noted recently by Alon, Chechik, and Cohen~(ICALP~2019) this poses the key barrier for derandomizing known constructions of distance sensitivity oracles and fault-tolerant spanners. We show the following:\end{sloppypar}
\begin{itemize}
\item There exist efficient deterministic constructions of $(L,f)$-\RP{}s whose covering values almost match the randomized ones, for a wide range of parameters. Our time and value bounds improve considerably over the previous construction of Parter (DISC 2019). Our algorithms are based on the introduction of a novel notion of hash families that we call \emph{Hit and Miss} hash families. We then show how to construct these hash families from (algebraic) error correcting codes such as Reed-Solomon codes and Algebraic-Geometric codes. 

\item For every $L,f$, and $n$, there exists an $n$-vertex graph $G$ whose $(L,f)$-\RP covering value is $\Omega(L^f)$. This lower bound is obtained by exploiting connections to the problem of designing 
sparse fault-tolerant BFS structures. 
\end{itemize}

An applications of our above deterministic constructions is the derandomization of the algebraic construction of the distance sensitivity oracle by Weimann and Yuster (FOCS 2010). The preprocessing and query time of  our deterministic algorithm nearly match the randomized bounds. This resolves the open problem of Alon, Chechik and Cohen (ICALP 2019).\vspace{0.2cm}

Additionally, we show a derandomization of the randomized construction of vertex fault-tolerant spanners by Dinitz and Krauthgamer (PODC 2011) and Braunschvig et al.\ (Theor.\ Comput.\ Sci., 2015). The time complexity and the size bounds of the output spanners nearly match the randomized counterparts.
\end{abstract}

\newpage
\tableofcontents
\newpage

\section{Introduction}
Resilience of combinatorial graph structures to faults is a major requirement in the design of modern graph algorithms and data structures. The area of fault tolerant (FT) graph algorithms is a rapidly growing subarea of network design in which resilience against faults is taken into consideration. The common challenge
addressed in those algorithms is to gain immunity against all possible fault events without losing out on the efficiency of the computation. Specifically, for a given graph $G$ and some bound $f$ on the number of faults, the FT-algorithm is required, in principle, to address all ${|E(G)|}\choose{f}$ fault events, but (usually) using considerably less space and time. The traditional approach to 
mitigate these challenges is based on a combinatorial exploration of the structure of the graph under faults. While this approach has led to many exciting results in the area, it is however limited in two aspects. First, in many cases the combinatorial characterization is considerably harder when moving from a single failure event to events with two or more failures. Second, this characterization is mostly problem specific and rarely generalizes to more than one class of problems.

One of the most notable techniques in this area which overcomes the aforementioned two limitations is  the fault-tolerant sampling technique introduced by Weimann and Yuster \cite{weimann2013replacement}. This technique is inspired by the color-coding technique \cite{alon1995color}, and provides a  general recipe for translating a given fault-free algorithm for a given task into a fault-tolerant one while paying a relatively small overhead in terms of computation time and other complexity measures of interest (e.g., space). Indeed this approach has been applied in the context of distance sensitivity oracles \cite{GrandoniW20J,GrandoniW20J,ChechikC20}, fault-tolerant spanners \cite{dinitz2011fault,BraunschvigCPS15,dinitz2020efficient}, fault-tolerant reachability preservers \cite{ChakrabortyC20}, distributed minimum-cut computation \cite{parter2019small}, and resilient distributed computation \cite{ParterYPODC19,ParterYSODA19,ChuzhoyPT20,BCHPArxiv20}.  The high-level idea of this technique is based on sampling a (relatively) small number of subgraphs $G_1,\ldots, G_\ell$ of the input graph $G$ by oversampling edges (or nodes) to act as faulty-edges, in a way that a single sampled subgraph accounts for potentially many fault events. An additional benefit of this approach is that it smoothly extends to accommodate multiple edge and vertex faults.

Two central applications of the above approach that we focus on are  distance sensitivity oracles and fault-tolerant spanners. An $f$-sensitivity distance oracle ($f$-\DSO) is a data-structure that reports shortest path distances when at most $f$ edges of the graph fail. Weimann and Yuster \cite{weimann2013replacement} employed the above technique to provide the first randomized construction of $f$-\DSO for $n$-vertex directed graphs accomodating $f=O(\log n/\log\log n)$ many number of faults. Their data-structure has subcubic preprocessing time and subquadratic query time, and these bounds are still the state-of-the-art results for a wide range of parameters. Recently, van-den Brand and Saranurak \cite{van2019sensitive} presented a randomized monte-Carlo \DSO that can handle $f\geq \log n$ updates. For small edge weights, their bounds improve over \cite{weimann2013replacement}. 
For the single failure case, Grandoni and Williams \cite{GrandoniW20J} also employed the sampling technique to provide an improved $1$-\DSO with subquadratic preprocessing time and sublinear query time. Very recently, Chechik and Cohen \cite{ChechikC20} improved their construction and obtained subcubic preprocessing time with $\widetilde{O}(1)$ query time. 
Since the key randomized component in these \DSO constructions is  the sampling of the subgraphs $\{G_i\}_{i\in[\ell]}$, Alon, Chechik and Cohen \cite{AlonCC19} posed the following question (stated specifically here for $f$-\DSO{}s):

\begin{center}
\emph{``It remains an open question if there exists a \DSO with subcubic deterministic\\ preprocessing algorithm and subquadratic deterministic query algorithm,\\ matching their randomized equivalents''.}
\end{center}

Another important application of this sampling technique appears in the context of fault-tolerant spanners. 
Given an $n$-vertex graph $G$, and integer parameters $f$ and $k$, an $f$-fault-tolerant $k$-spanner $H \subseteq G$ is a subgraph that contains a $k$-spanner in $G \setminus F$ for any set $F \subseteq V$ of at most $f$ vertices in $G$. The problem of designing sparse fault-tolerant spanners resilient to vertex faults was introduced by Chechik et al. \cite{chechik2010fault}. Using a careful combinatorial construction they showed that one can build such spanners while paying an additional overhead of $k^f$ in the size of the output spanner (when compared to the standard $k$-spanner). Dinitz and Krauthgamer \cite{dinitz2011fault} simplified and improved their construction. Using the sampling technique with the right setting of parameters, they provided a meta-algorithm for constructing fault-tolerant spanners where the time and size overheads are bounded by the factor $O(k^{2-1/f})$. Their approach was later  extended by Braunschvig et al. \cite{BraunschvigCPS15} to provide the first (and currently state-of-the-art) constructions of nearly-additive fault-tolerant spanners.  Very recently, Chakraborty and Choudhary \cite{ChakrabortyC20} employed this technique to provide a randomized construction of strong-connectivity
preservers of directed graphs under $f$ failures with $\widetilde{O}(f 2^f \cdot n^{2-1/f})$ edges.
To this date, there are no known efficient deterministic constructions that match the size bounds of these above-mentioned randomized constructions.

In this work we provide a unified and simplified approach for derandomizing the above mentioned central results. We introduce the notion of \emph{replacement path covering} (\RP{}) which captures the key properties of the collection of sampled subgraphs obtained by the FT-sampling technique. Given a graph $G$, a vertex pair $(s,t) \in V(G)\times V(G)$, and a set of edge faults $F \subseteq E(G)$, a replacement path $P(s,t,F)$ is an $s$-$t$ shortest path in $G \setminus F$.  To avoid repetitive descriptions, we mostly consider in this paper the setting of edge faults. However, all our definitions of \RP and their constructions naturally extend to vertex faults.
\begin{definition}[Replacement Path Covering (\RP)]\label{def:RP}
A subgraph $G' \subseteq G$ \emph{covers} a replacement path $P(s,t,F)$ if 
$P(s,t,F) \subseteq G' \mbox{~~and~~} F \cap E(G') =\emptyset.$

A collection of subgraphs of $G$, say $\mathcal{G}_{L,f}$, is an $(L,f)$-\emph{\RP} if for every $s,t \in V$ and every $F\subseteq E$ such that $|F|\leq f$, we have that each $P(s,t,F)$ replacement path\footnote{In case there are multiple $s$-$t$ shortest paths in $G \setminus F$ with at most $L$ edges, it is sufficient to cover one of them.} with at most $L$ edges is covered by some subgraph $G'$ in $\mathcal{G}_{L,f}$. The \emph{covering value} (\CV) of an $(L,f)$-\RP $\mathcal{G}_{L,f}$ is the number of subgraphs in $\mathcal{G}_{L,f}$, i.e., \CV($\mathcal{G}_{L,f}$):=$|\mathcal{G}_{L,f}|$.
\end{definition}

In some algorithmic applications of $(L,f)$-\RP, we have that $L \leq f$ and in others applications we have $L >f$. However, for simplicity of the discussion of this paragraph, we assume that $L >f$. The FT-sampling technique provides an efficient randomized procedure for computing 
an $(L,f)$-\RP of covering value $r=c\cdot fL^f\log n$ for some constant $c$ (e.g., Lemma 2 in \cite{GrandoniW20J}): Sample $r$ subgraphs $G_1,\ldots, G_r$ where each $G_i \subseteq G$ is formed by sampling each edge $e \in E(G)$ into $G_i$ independently with probability
%\footnote{In the case where $L\leq f$, the sampling probability is $p=1/f$, and the arguments are similar.} 
$p=1-1/L$. By taking $c$ to be large enough, it is easy to show that a subgraph $G_i$ covers a fixed $P(s,t,F)$ with probability of $\Omega(1/L^f)$. Thus by using Chernoff and employing the union bound over all $n^{O(f)}$ distinct
%\footnote{This counting holds if one breaks shortest path ties in a consistent manner, and for every $s,t,F$, it sets $P(s,t,F)$ to be the unique $s$-$t$ shortest path in $G \setminus F$. For the purpose of the $(L,f)$-\RP it is sufficient to cover one such path  for every $s,t,F$.} 
$P(s,t,F)$ paths, one gets that
this graph collection is an $(L,f)$-\RP, with high probability (see Lemma \ref{lem:rand-RPC} for a formal proof). The computation time of this randomized procedure is $O(r \cdot m)$ (where $m:=|E(G)|$).  Alon, Chechik and Cohen \cite{AlonCC19} noted that in many settings, the deterministic computation of $(L,f)$-\RP poses the main barrier for derandomization, and raised the following question:\\

\begin{center}
\emph{``What is the minimum $r$ such that we can deterministically compute\\ such graphs $G_1\ldots, G_r$ in $\widetilde{O}(n^2 r)$ time such that for every $P(s,t,F)$\\ on at most $L$ nodes there is a subgraph $G_i$ that does not contain $F$\\ but contains $P(s,t,F)$?''}
\end{center}

\cite{AlonCC19} also mentioned that it is not clear \emph{how to efficiently derandomize a degenerated version of the above construction} and proposed some relaxation of these requirements, for which we indeed obtain improved bounds in this paper. 

Independently to the work of \cite{AlonCC19}, Parter \cite{parter2019small} recently provided\footnote{In \cite{parter2019small}, the term $(L,f)$-\RP is not used, and instead the deterministic algorithm is referred to as a derandomization of the FT-sampling technique.} a deterministic construction of $(L,f)$-\RP for the purposes of providing an efficient \emph{distributed} computation of small cuts in a graph. These \RP{}s are obtained by introducing the notion of $(n,k)$ universal hash functions. For the purpose of small cuts computation, $L$ was taken to be the diameter of the graph, and $f$ was considered to be constant. The goal in 
\cite{parter2019small} was to provide an $(L,f)$-\RP of value $\poly(L)$. Their construction in fact yields a value of $L^{4f+1}$. This value is already too large for several applications such as the \DSO by \cite{weimann2013replacement}. Indeed, for our \emph{centralized} applications, it is desirable to improve both the computation time as well as the covering value of these $(L,f)$-\RP constructions, and to match  (to the extent possible) the bounds of their randomized counterparts.

%\begin{observation}[Randomized \RP]
%There exists a randomized algorithm that w.h.p. computes an $(L,f)$-\RP of covering value $r=O(\min(f,L)\cdot \max(L,f)^{\min(L,f)}\log n)$ in time $O(r\cdot m)$.
%\end{observation}
%First consider the case where $L>f$. The randomized $(L,f)$-\RP is given by collection of $r=O(fL^{f}\log n)$ subgraphs $\mathcal{G}=\{G_1,\ldots, G_r\}$ where each $G_i$ is obtained by sampling each edge $e \in E(G)$ into $G_i$ independently with probability $p=1-1/L$. 
%It is easy to see that the probability that $G_i$ covers a fixed $L$-length segment $P \subseteq P(s,t,F)$ is $(1-1/L)^{|P|}\cdot 1/L^{|F|}=\Omega(1/L^f)$. Thus, by Chernoff bound, the probability that $P \subseteq P(s,t,F)$ is covered by at least one of the subgraphs is $1-1/n^{cf}$ for some constant $c\geq 5$. By applying the union bound over all $O(n^{4f})$ replacement path segments avoiding $f$ faults, we get that $\mathcal{G}$ is an $(L,f)$-\RP with high probability. 
%
%In case where $L\leq f$, the \RP subgraphs are given by sampling every edge into $G_i$ with probability of $1/f$. The probability that a sequence of $L$ edges $P$ are all sampled but none of $f$ edges $F$ is sampled is at least $1/f^\ell\cdot (1-1/f)^f$. Thus w.h.p. each $L$-segment $P \subseteq P(s,t,F)$ is covered by one of the $r$ subgraphs. 
%The goal of this paper is to provide efficient deterministic algorithms for computing $(L,f)$-\RP{}s whose bound almost match the one obtained by the randomized construction. 
\subsection{Our Contributions}
We take a principled approach for efficiently computing almost optimal $(L,f)$-\RP for a wide range of parameters of interest.  Our algorithms extend the approach of \cite{parter2019small} and are based on the introduction of a novel notion of hash families that we call \emph{Hit and Miss} ($\mathsf{HM}$) hash families. We show how any Boolean alphabet \HM can be used to build a \RP, and in turn give near optimal constructions of  \HM based on (algebraic) error correcting codes such as Reed-Solomon codes and Algebraic-Geometric codes.  Our key result is as follows: %\textbf{\color{red} Karthik: Should we replace below a by L and b by f, as in the paragraph after Definition 1 we say that we assume $L>f$ for our discussion here?}

\begin{theorem}[$(L,f)$--\RP]\label{thm:general-covering}
Given a graph $G$ on $m$ edges, length parameter $L$, and fault parameter $f$, there is a deterministic algorithm $\mathcal{A}$ for computing an     $(L,f)$-\RP of $G$ denoted by $\mathcal{G}_{L,f}$ such that,
  \[
    \text{\CV}(\mathcal{G}_{L,f})\le \left\{\begin{array}{lr}
       (\alpha c Lf)^{b+1}, & \text{if }a\ge m^{\nicefrac{1}{c}},\ \text{for some constant }c\in\mathbb{N},\\
        (\alpha Lf)^{b+2}\cdot \log m, & \text{if } a=m^{o(1)}\text{ and }b=\Omega(\log m),\\
                (\alpha Lf)^{b+2}\cdot \log m, & \text{if } a\le \log m,\\
       (\alpha Lf\log m)^{b+1}, & \text{otherwise,}
        \end{array}\right.
  \]
where $a=\max\{L,f\}$, $b=\min\{L,f\}$,  and $\alpha\in\mathbb N$ is some small universal constant.  
Moreover, the running time of $\mathcal{A}$ denoted by $T(\mathcal{A})$ is,
$$
T(\mathcal{A})=
\left\{\begin{array}{lr}
 m^{1+o(1)}\cdot \text{\CV}(\mathcal{G}_{L,f})&\text{if } a=m^{o(1)}\text{ and }b=\Omega(\log m),\\
m\cdot (\log m)^{O(1)}\cdot \text{\CV}(\mathcal{G}_{L,f}), & \text{otherwise.}
        \end{array}\right.
$$
\end{theorem}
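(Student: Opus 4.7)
The plan is to prove the theorem by reducing the construction of an $(L,f)$-\RP to a combinatorial object called a $\mathsf{HM}$ (Hit-and-Miss) hash family, building this object from algebraic error correcting codes, and then balancing parameters to match each of the four promised bounds.

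\textbf{Stage 1: Reduction to $\mathsf{HM}$ hash families.} Define a Boolean $(m,L,f)$-$\mathsf{HM}$ hash family as a collection $\mathcal{H}=\{h_1,\ldots,h_r\}$ of functions $[m]\to\{0,1\}$ such that for every pair of disjoint sets $A,B\subseteq [m]$ with $|A|\le L$ and $|B|\le f$, some $h_i\in\mathcal{H}$ satisfies $h_i\equiv 1$ on $A$ and $h_i\equiv 0$ on $B$. Numbering $E(G)$ from $1$ to $m$ and setting $G_i=\{e : h_i(e)=1\}$, a replacement path $P(s,t,F)$ of length at most $L$ with $|F|\le f$ is exactly such an $(A,B)$ pair (with $A=E(P)$ and $B=F$), so a Boolean $\mathsf{HM}$ family of size $r$ directly yields an $(L,f)$-\RP of covering value $r$, and a family constructed in time $T$ yields an algorithm running in time $T$ up to the cost of reading $G$.

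\textbf{Stage 2: Lift to a large alphabet, then reduce back to Boolean.} Introduce a $(q,m,L,f)$-$\mathsf{HM}$ family as a collection of maps $h_i:[m]\to [q]$ such that for every disjoint $A,B$ with $|A|\le L$ and $|B|\le f$, some $h_i$ satisfies $h_i(A)\cap h_i(B)=\emptyset$. Any such certifier is converted to a Boolean one by composing with the indicator of a subset $T\subseteq [q]$ containing $h_i(A)$ and disjoint from $h_i(B)$; by taking $T$ to be either $h_i(A)$ itself or the complement of $h_i(B)$, whichever is smaller, it suffices to enumerate $T$'s with $\min\{|T|,\ q-|T|\}\le b=\min\{L,f\}$, giving only $O(q^b)$ choices per $h_i$. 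Hence a $(q,m,L,f)$-$\mathsf{HM}$ family of size $r$ yields a Boolean one of size at most $r\cdot O(q^b)$, which is the conversion that puts Boolean $\mathsf{HM}$ families within reach of algebraic constructions.

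\textbf{Stage 3: Build large-alphabet $\mathsf{HM}$ families from codes.} Use Reed-Solomon codes over $\mathbb{F}_q$ of dimension $k$: fixing $m$ distinct evaluation points $\alpha_1,\ldots,\alpha_m\in\mathbb{F}_q$, a polynomial $p$ of degree $<k$ fails to separate $(A,B)$ iff $p(\alpha_a)=p(\alpha_b)$ for some $a\in A$, $b\in B$, which imposes one linear constraint on the coefficient vector of $p$; union-bounding over the $|A|\cdot|B|\le Lf$ such constraints shows that the family of all $q^k$ polynomials is a $(q,m,L,f)$-$\mathsf{HM}$ family as soon as $q$ exceeds a small multiple of $Lf$. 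Combined with Stage~2, this gives a Boolean family of size $O(q^{k+b})$, and a case analysis balancing $q$ and $k$ against $L$, $f$, and $m$ recovers each of the four stated covering values. The running time follows from the standard $\widetilde{O}(m)$-time evaluation of each codeword at its $m$ points; the $m^{1+o(1)}$ overhead in the intermediate regime reflects the cost of evaluating Algebraic-Geometric codewords.

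The main obstacle is Stage~3 in the regime where $m$ is super-polynomial in $L$ and $f$ (the case $a=m^{o(1)}$). There Reed-Solomon forces $q\ge m$ and thereby incurs an unacceptable $\log m$ factor in every exponent, so the argument must invoke Algebraic-Geometric codes, which decouple alphabet size from block length and allow $q$ to remain polynomial in $Lf$ even when $m$ is much larger. Identifying the right code in each of the four regimes, and tightening the coefficient-constraint argument so that the exponent comes out to $b+1$ (rather than $b+2$) in the most favourable cases, is the technical heart of the proof and is what separates these bounds from the previous $L^{4f+1}$ value of~\cite{parter2019small}.
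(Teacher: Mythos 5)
Your Stages 1 and 2 are essentially the paper's route: the reduction from an $(L,f)$-\RP{} to a Boolean hit-and-miss family, and the alphabet-reduction step that enumerates $O(q^{b})$ subsets of the range, both appear (with the same counting) in the paper. The gap is in Stage 3, which is the heart of the construction, and as written it does not work: you have transposed the roles of codewords and coordinates. In your construction the edges of $G$ are identified with $m$ distinct \emph{evaluation points} of $\mathbb{F}_q$ and the hash functions are the $q^{k}$ \emph{polynomials} of degree $<k$. This forces $q\ge m$, and the resulting family has size $q^{k}\ge m$; after the $O(q^{b})$ blow-up of Stage 2 you get at least $m^{\,b+1}$ subgraphs, which is worse than the trivial $\binom{m}{\le b}$-size covering and nowhere near the claimed $(\alpha Lf)^{O(b)}\cdot\mathrm{polylog}(m)$ values. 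Your union bound ("all $q^{k}$ polynomials form an $\mathsf{HM}$ family once $q\gtrsim Lf$") only shows that a good polynomial \emph{exists} among all $q^{k}$ of them for each fixed pair $(A,B)$; it does not produce a small explicit family, and no small subfamily of polynomials is identified. Consequently your diagnosis of why Algebraic-Geometric codes are needed ("Reed--Solomon forces $q\ge m$") is also an artifact of this transposition rather than the actual bottleneck.

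The correct orientation is the opposite one: identify each edge with a \emph{message} (equivalently, a codeword) of a code $C$ with $q^{k}\ge m$, i.e.\ $k=\log_q m$, and index the hash functions by the $\ell$ \emph{coordinates} of the code, setting $h_i(x)=C(x)_i$. Then the family size is the block length $\ell$, not the number of codewords, and the hit-and-miss property follows from the relative distance: if $\Delta(C)>1-\frac{1}{Lf}$, any two distinct codewords agree on fewer than an $\frac{1}{Lf}$-fraction of coordinates, so a union bound over the at most $Lf$ pairs in $A\times B$ yields a coordinate separating all of them. With Reed--Solomon this needs only $q\approx \ell\approx \frac{Lf\log m}{\log(\max\{L,f\})}$ (the alphabet scales with $Lf\log_q m$, not with $m$), which after Stage 2 gives the $(\alpha c Lf)^{b+1}$ and $(\alpha Lf\log m)^{b+1}$ regimes; concatenating an Algebraic-Geometric outer code with a Reed--Solomon inner code decouples the alphabet ($O(Lf)$) from the block length ($O((Lf)^2\log m)$) and gives the $(\alpha Lf)^{b+2}\log m$ regimes. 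Without this transposition your argument cannot reach any of the four stated bounds, so the proposal has a genuine missing idea rather than a fixable slack.
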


This resolves the open problem of Alon, Chechik and Cohen \cite{AlonCC19} and considerably improves over the bounds of the second author \cite{parter2019small} in the entire range of parameters.  We further improve on the parameters of Theorem~\ref{thm:general-covering} (see Theorem~\ref{thm:relaxed-RPC}) when instead of accounting for all fault events, we only have to be resilient to a list of fault events that are given to us. Even this relaxed version was mentioned   in \cite{AlonCC19}.
%
%We also show that every sequence of $F$ faults is fully avoided by $\widetilde{O}(L)$ subgraphs in $\mathcal{G}_{L,f}$.

At a meta level, \RP{}s are designed to handle faults in graphs, and error correcting codes are constructed to handle errors in messages. Both do this by adding redundancy to the underlying information in some way: the encoding of a message adds many new coordinates to the message without adding any new additional information, and similarly \RP{} of a graph is a redundant way to represent a graph, as we only store subgraphs of the same original graph. In this work, we formalize this meta-connection   to an extent through the ideas involved in proving Theorem~\ref{thm:general-covering}.

\paragraph{Lower Bound for $(L,f)$-\RP{}s.} We also prove lower bounds on the covering value of \RP, which to the best of our knowledge had not been addressed before. That is, despite the ubiquity of the FT-sampling approach to build $(L,f)$-\RP{}s, it is still unclear whether the bound that it provides on the covering value is the best possible. This question is interesting even if the items to be covered correspond to arbitrary subsets of edges. The question becomes even more acute in our setting where the covered items are structured, i.e., correspond to shortest-paths in some underlying subgraphs.
The optimality of the randomized procedure in this context is even more questionable, as it is totally invariant to the structure of the graph. In principle, one might hope to improve these bounds by taking the graph structure into account. 

Perhaps surprisingly we show that the covering values obtained by the randomized FT-sampling procedure are nearly optimal, at least for the setting where $L \geq f$. Since our deterministic bounds almost match the randomized ones, we obtain almost-optimality for our bounds.
\begin{theorem}[Lower Bound for the Covering Value of $(L,f)$-\RP]\label{thm:lower-bound}
For every integer parameters $n$, $L,$ and $ f$ such that $(L/f)^{f+1}\leq n$, there exists an $n$-vertex weighted graph $G^*=(V,E,w)$, such that any $(L,f)$-\RP of $G$ has \CV\ of $\Omega((L/f)^f)$.
\end{theorem}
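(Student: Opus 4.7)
The plan is to construct an explicit weighted graph $G^*$ realizing the claimed lower bound via a direct combinatorial argument on pairwise conflicting replacement paths. I would take a ``base path'' $s=v_0,v_1,\ldots,v_{L-f}=t$ with unit-weight edges $e_i=(v_{i-1},v_i)$ for $i\in[L-f]$, and for each $i$ attach a ``bypass'' vertex $u_i$ joined to $v_{i-1}$ and $v_i$ by two unit-weight edges. I then pad with isolated vertices up to $n$ total; under the hypothesis $(L/f)^{f+1}\le n$, this $O(L)$-vertex non-isolated part easily fits inside $n$. For each $f$-subset $F\subseteq\{e_1,\ldots,e_{L-f}\}$, the unique shortest $s$--$t$ path in $G^*\setminus F$ is the canonical path $P_F$ that traverses $e_j$ for $j\notin F$ and routes through the bypass $u_i$ for each $e_i\in F$; it has exactly $(L-2f)+2f=L$ edges, so it must be covered by the definition of $(L,f)$-\RP, and any alternative path would use an unnecessary bypass and be strictly longer.

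The core of the proof is then a pairwise conflict property: for any two distinct $f$-subsets $F\neq F'$, no single subgraph $G'\subseteq G^*$ can simultaneously cover both $(P_F,F)$ and $(P_{F'},F')$. Indeed, since $|F|=|F'|=f$ there exists some $i\in F\setminus F'$; covering $(P_F,F)$ forces $e_i\notin E(G')$ (because $e_i\in F$ must be excluded from $G'$), while covering $(P_{F'},F')$ forces $e_i\in E(G')$ (because $i\notin F'$ means $e_i$ lies on $P_{F'}$, which is fully contained in $E(G')$). These requirements contradict each other. Consequently, any $(L,f)$-\RP of $G^*$ must dedicate a distinct subgraph to each of the $\binom{L-f}{f}$ $f$-subsets, giving covering value at least $\binom{L-f}{f}\ge ((L-f)/f)^f = \Omega((L/f)^f)$ in the regime where $L$ is a suitable constant factor larger than $f$.

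The main obstacle is proving tightness uniformly across the entire parameter range allowed by $(L/f)^{f+1}\le n$: when $L/f$ approaches $1$, the simple bypass gadget does not yield enough replacement paths to match $(L/f)^f$ up to absolute constants, since the binomial estimate degrades sharply in this regime. One would need to replace the simple bypass with a richer gadget---for instance, a nested multi-bypass structure or a construction imported from the multi-fault FT-BFS lower bounds alluded to in the introduction---to recover the tight bound across the full parameter regime. Verifying uniqueness of the shortest replacement path in such refined gadgets (so that the $(L,f)$-\RP is truly forced to cover a specific canonical path, not merely one of several ties) is the technical heart of the argument.
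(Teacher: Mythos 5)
Your gadget and the pairwise-conflict argument are correct as far as they go: for fault sets consisting of base edges only, $P_F$ is indeed the unique shortest replacement path, it has exactly $L$ edges, and no single subgraph can cover both $(P_F,F)$ and $(P_{F'},F')$ for $F\neq F'$, so the covering value of your graph is at least $\binom{L-f}{f}$. The genuine gap is the one you flag at the end, and it is not a matter of sharpening constants: the theorem is claimed for all $L,f$ with $(L/f)^{f+1}\le n$, in particular for $L=cf$ with $c$ a small constant and $f$ as large as $\Theta(\log n)$, where the claimed bound $c^{f}$ is still polynomial in $n$ and hence far from trivial. In that regime your construction simply does not deliver: for $f\le L<2f$ it produces no admissible fault sets at all, at $L=2f$ it gives the trivial bound $1$ against the required $2^{f}$, and for $L=cf$ with $c$ below roughly $5/2$ one has $\binom{(c-1)f}{f}\approx\bigl((c-1)^{c-1}/(c-2)^{c-2}\bigr)^{f}$, which is exponentially smaller in $f$ than $c^{f}$. (Even your displayed estimate $((L-f)/f)^f=\Omega((L/f)^f)$ requires $f^2=O(L)$, since $(1-f/L)^f\approx e^{-f^2/L}$; the true value of the binomial rescues you only once $L/f$ exceeds a constant around $5/2$.) So the ``richer gadget'' you defer to is precisely the missing mathematical content, not a routine patch.

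The paper closes exactly this regime by a different route that works for every integer $d=L/f\ge 1$: it never exhibits pairwise-conflicting paths directly, but reduces to the known lower bound for $f$-failure \FTBFS\ structures. Concretely, it takes the recursive lower-bound graph $G_f(d)$ of \cite{parter2015dual}, replaces each long padding path by a single weighted edge so that every $s$--$t$ replacement path has $O(fd)$ edges (Claim \ref{cl:hop}), and then observes that any $(L,f)$-\RP $\mathcal{G}$ of this graph with $L=fd$ yields an $f$-failure \FTBFS\ for $s$ of size $O(|\mathcal{G}|\cdot n)$, namely $\bigcup_{G'\in\mathcal{G}}\SPT(s,G')$. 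Since every $f$-failure \FTBFS\ of this graph must contain the complete bipartite block with $\Omega(n\cdot d^{f})$ edges (Lemma \ref{lem:size}), it follows that $|\mathcal{G}|=\Omega(d^{f})=\Omega((L/f)^{f})$ across the whole parameter range. To complete your proof along your lines you would essentially have to rebuild that recursive structure inside your gadget; as written, your argument establishes the theorem only when $f=O(1)$ or $L/f$ is a sufficiently large constant (and certainly fails for $L/f\le 2$ with growing $f$).
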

Interestingly, the lower bound graph is obtained by employing slight modifications to the lower bound graphs used by \cite{parter2015dual} in the context of fault-tolerant \emph{FT-BFS} structures. 
For a given (possibly weighted) graph $G=(V,E)$ and a source vertex $s \in S$, a subgraph $H \subseteq G$ is an $f$-fault-tolerant (FT)-BFS if $\dist(s,t,H \setminus F)=\dist(s,t,G \setminus F)$ for every vertex $t \in V$ and every sequence of $F$ edge faults. The definition can be naturally extended to vertex faults as well. The second author and Peleg \cite{ParterP16} presented a lower-bound construction for $f=1$ with $\Omega(n^{3/2})$ edges. The second author extended this lower bound construction to any $f\geq 1$ faults with size bounds of $\Omega(n^{2-1/(f+1)})$ edges \cite{parter2015dual}. We show that a slight modification to the (unweighted) lower-bound graph of \cite{parter2015dual} by means of introducing weights, naturally implies a lower bound for the covering value of an $(L,f)$-\RP.

\paragraph{Derandomization of the Algebraic \DSO by Weimann-Yuster.}
Our key  application of the construction of efficient $(L,f)$-\RP is for implementing the \emph{algebraic} \DSO of \cite{weimann2013replacement}. \cite{AlonCC19} presented a derandomization of the combinatorial $f$-\DSO of \cite{weimann2013replacement}, resulting with a preprocessing time of $\widetilde{O}(n^{4-\alpha})$ and a query time of $\widetilde{O}(n^{2-2\alpha/f})$, matching the randomized bounds of \cite{weimann2013replacement}. 
In this paper we focus on derandomizing the algebraic algorithm of \cite{weimann2013replacement} as the latter can be implemented in subcubic preprocessing time and subquadratic query time. We show:
\begin{theorem}\label{thm:alg-dso-wy}\begin{sloppypar}
Let $G = (V, E)$ be a directed $n$-vertex $m$-edge graph with real edge weights in $[-M,M]$. There exists a deterministic algorithm that given $G$ and parameters $f = O(\log n/\log\log n)$  
and $0 < \alpha < 1$, constructs an $f$-sensitivity distance oracle in time 
\begin{enumerate} 
\item $O(M n^{3.373+2/f-\alpha}\cdot (c'f)^{f+1})$ if $\alpha=1/c$ for some constant $c$,
\item  $O(M n^{3.373+2/f-\alpha}\cdot (c'f \log n)^{f+1})$ if $\alpha=o(1)$,
\end{enumerate}
for some constant $c'$. Given a query $(s, t, F)$ with $s, t \in V$ and $F \subseteq E \cup V$  being a set of at most $f$ edges or vertices, the deterministic query algorithm computes in $O(n^{2-2(1-\alpha)/f})$ time the distance from $s$ to $t$ in the graph $G \setminus F$. \end{sloppypar}
\end{theorem}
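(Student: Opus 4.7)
The plan is to take the randomized algebraic \DSO of Weimann and Yuster~\cite{weimann2013replacement} and replace its random edge-sampling step with the deterministic $(L,f)$-\RP from Theorem~\ref{thm:general-covering}, leaving the rest of the algebraic machinery untouched. Recall that for $L=n^{(1-\alpha)/f}$, Weimann and Yuster sample $r=\widetilde{O}(L^f)$ subgraphs $G_1,\dots,G_r$ of $G$ (each edge kept independently with probability $1-1/L$) and, for every $G_i$, build via polynomial matrix multiplication over the min-plus semiring a per-subgraph data structure of size $\widetilde{O}(Mn^{2.373+2/f})$ that answers, for any query $(s,t,F)$ with $F\cap E(G_i)=\emptyset$, the $s$-$t$ distance in $G_i$ in time $\widetilde{O}(n^{2-2(1-\alpha)/f})$. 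A query enumerates those $G_i$ avoiding $F$ and returns the minimum reported distance; correctness relies on the fact that for every short replacement path $P(s,t,F)$ of at most $L$ edges some $G_i$ both covers $P$ and misses $F$. Long replacement paths (more than $L$ edges) are handled by the entirely deterministic ``long-path'' procedure of \cite{weimann2013replacement}, which our modifications do not touch.

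The second step is to feed $L=n^{(1-\alpha)/f}$ and $f=O(\log n/\log\log n)$ into Theorem~\ref{thm:general-covering} and identify the relevant case. Since $L\ge f$ in this regime, $a=L$ and $b=f$. When $\alpha=1/c$ is a positive constant, $L\ge m^{1/c'}$ for some constant $c'$, and Case~1 of Theorem~\ref{thm:general-covering} yields a deterministic $(L,f)$-\RP of covering value at most $(\alpha_0 Lf)^{f+1}$ for the universal constant $\alpha_0$ from that theorem. When $\alpha=o(1)$ we have $L=m^{o(1)}$, and since $f=O(\log n/\log\log n)$ is \emph{not} $\Omega(\log m)$, the ``otherwise'' case of Theorem~\ref{thm:general-covering} applies, producing covering value $(\alpha_0 Lf\log m)^{f+1}$. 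In both regimes the RPC is only a factor $(c'f)^{f+1}$ (respectively $(c'f\log n)^{f+1}$) larger than the randomized $O(fL^f\log n)$ bound, and it is computed deterministically in near-linear time per subgraph.

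Next, I would combine the new covering value with the unchanged per-subgraph algebraic cost of $\widetilde{O}(Mn^{2.373+2/f})$ from \cite{weimann2013replacement}. Using $L^{f}=n^{1-\alpha}$ and absorbing the $\widetilde{O}(m)$ RPC-construction overhead into the algebraic cost, the total preprocessing works out to $O(Mn^{3.373+2/f-\alpha}\cdot (c'f)^{f+1})$ in the first regime and $O(Mn^{3.373+2/f-\alpha}\cdot (c'f\log n)^{f+1})$ in the second, matching the statement of Theorem~\ref{thm:alg-dso-wy}. For the query, given $(s,t,F)$ we enumerate only the subgraphs of $\mathcal{G}_{L,f}$ that avoid $F$, invoke their precomputed oracles, and output the minimum distance. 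Correctness on short replacement paths is immediate from Definition~\ref{def:RP}, and the long-path routine of \cite{weimann2013replacement} covers the rest. The vertex-fault case is reduced to edge faults via the standard ``split each vertex into an in/out edge'' gadget, which multiplies $n$ and $m$ by a constant factor and does not affect the exponents.

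The hard part will be matching the randomized query-time bound $\widetilde{O}(n^{2-2(1-\alpha)/f})$ while enumerating over a deterministic RPC, since we no longer enjoy the probabilistic guarantee that only a small fraction of subgraphs are fault-free for a typical $F$. I expect the cleanest route is to organize $\mathcal{G}_{L,f}$ in a trie keyed by the Hit-and-Miss hash values used inside the construction of Theorem~\ref{thm:general-covering}, so that, given $F$, the subgraphs avoiding $F$ that can realistically contribute a short $s$-$t$ replacement path are located with a polylogarithmic overhead per relevant subgraph; the per-subgraph algebraic query cost of $\widetilde{O}(n^{2-2(1-\alpha)/f})$ then absorbs this overhead. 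Verifying that the number of \emph{relevant} subgraphs, and not merely the total size of $\mathcal{G}_{L,f}$, controls the query time is the most delicate step of the plan, and I expect it will require a direct charging argument against the Hit-and-Miss structure inherited from the RPC.
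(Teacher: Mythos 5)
There are two genuine gaps in your plan, and both are exactly where the paper has to do real work beyond ``swap the sampling for Theorem~\ref{thm:general-covering}.'' First, the long-path machinery of Weimann--Yuster is \emph{not} deterministic: it relies on a randomly sampled hitting set $R$ of $O(fn\log n/L)$ vertices that w.h.p.\ hits every $L$-length segment of every long replacement path, so you cannot ``leave it untouched.'' Derandomizing $R$ is a substantive part of the proof: the paper first builds all fault-tolerant trees $\FT_{L,f}(s,t)$ (Theorem~\ref{thm:ft-trees-alg}), extracts from their node labels an \emph{explicit} polynomial-size family $\mathcal{D}_L$ of critical paths of length in $[L/4,L]$, and only then applies the deterministic blocking/hitting-set routine of Lemma~\ref{lem:det-hitting-set} to $\mathcal{D}_L$. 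Showing that hitting these particular paths suffices for long queries is itself delicate, because shortest paths are not assumed unique; this is the content of Claim~\ref{cl:wy-query-corr}, which picks the replacement path with fewest edges and argues every segment endpoint pair forces $\ge L/4$ edges, so the corresponding FT-tree paths lie in $\mathcal{D}_L$.

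Second, your proposed fix for the query time---a trie over the Hit-and-Miss hash values so that only polylogarithmically many ``relevant'' subgraphs are touched---cannot work as stated, because the number of subgraphs of the deterministic RPC that avoid a fixed fault set $F$ is inherently $\widetilde{\Theta}(fL)=n^{\Theta((1-\alpha)/f)}$ (Theorem~\ref{lem:num-collide-faults}; see also the discussion in Section~\ref{sec:random} of the deterministic-vs-randomized gap), not polylogarithmic as in the randomized construction where only $O(f\log n)$ sampled subgraphs avoid $F$. If at query time you take, for each of the $|R|^2=\widetilde{O}(n^2/L^2)$ pairs of hitting-set vertices, a minimum over the $\widetilde{O}(fL)$ fault-avoiding subgraphs, you get query time $\widetilde{O}(n^{2}/L)=\widetilde{O}(n^{2-(1-\alpha)/f})$, a factor $L$ worse than the claimed $O(n^{2-2(1-\alpha)/f})$; the paper flags exactly this obstacle. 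Its resolution is structurally different from yours: the enumeration over $\mathcal{G}_F$ is pushed entirely into preprocessing, where Lemma~\ref{lem:tree-label-node-compute} (using Theorem~\ref{lem:num-collide-faults} plus per-subgraph $APSP^{\le L}$) computes each FT-tree node in $\widetilde{O}(L)$ time; at query time each weight $w(x,y)=d^{L}(x,y,G\setminus F)$ of the dense graph $G^F$ is obtained by descending the tree $\FT_{L,f}(x,y)$ in $O(f^2\log L)$ time (Lemma~\ref{lem:query-FT-trees}), independent of the covering value, which is what yields $\widetilde{O}(|R|^2f^2)=\widetilde{O}(n^{2-2(1-\alpha)/f})$. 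Without an FT-tree-like intermediate structure (or some other mechanism giving per-pair query cost polylogarithmic in $L$), your outline does not reach the stated query bound, even though your choice of parameters in Theorem~\ref{thm:general-covering} and the resulting preprocessing accounting are essentially right.
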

Observe that for constant number of at least $f\geq 7$ faults, the preprocessing time of our construction even improves over that of Weimann-Yuster when fixing the query time to be $O(n^{2-2(1-\alpha)/f})$. This is because our algorithm also integrates ideas and optimizations from \cite{AlonCC19} and \cite{ChechikC20}.
This resolves the open problem of \cite{AlonCC19} concerning existence of deterministic \DSO with subquadratic preprocessing time and subquadratic query time (at least with small edge weights). 

While the deterministic $(L,f)$-\RP of Theorem \ref{thm:general-covering} constitutes the key tool for the derandomization, the final algorithm requires additional effort. Specifically, we use the notion of FT-trees introduced in \cite{AlonCC19} for the purpose of the deterministic combinatorial \DSO. We provide an improved algebraic construction of these trees using the $(L,f)$-\RP{}s. One obstacle that we need to handle is that the approach of \cite{AlonCC19} assumed that shortest path are unique by providing an algorithm that breaks the ties in a consistent manner. In our setting, the computation time of this algorithm is too heavy and thus we avoid this assumption, by making more delicate arguments. 

\paragraph{Derandomization of Fault-Tolerant Spanner Constructions.} Finally, we show that the integration of the $(L,f)$-\RP of Theorem \ref{thm:general-covering} into the existing algorithms for (vertex) fault-tolerant spanners provide the first deterministic constructions of these structures. The running time and the size bounds of the spanners nearly match the one obtained by the randomized counter parts. 
Specifically, for $f$-fault tolerant multiplicative spanners, we provide a nearly-optimal derandomization of the Dinitz and Krauthgamer's construction \cite{dinitz2011fault}. This follows directly by using our vertex variant of $(L,f)$-\RP of Theorem \ref{thm:general-covering} with $L=2$. A subgraph $H \subseteq G$ is an $f$-fault tolerant $t$-spanner if $\dist(s,t,H \setminus F)\leq t \cdot\dist(s,t,G\setminus F)$ for every $s,t, F \subseteq V$, $|F|\leq f$. We show:
\begin{theorem}[Derandomized of Theorem 2.1 of \cite{dinitz2011fault}, Informal]\label{thm:spanner-mult}
If there is a deterministic algorithm $\cA$ that on every $n$-vertex $m$-edge graph builds a $t$-spanner of size $s(n)$ and time $\tau(n,m,t)$, then there is an algorithm that on any such graph builds an $f$-fault tolerant $t$-spanner of size $\widetilde{O}(f^3 \cdot s(2n/f))$ and time $\widetilde{O}(f^3 (\tau(2n/f,m,t)+ m))$.
\end{theorem}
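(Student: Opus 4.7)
The plan is to derandomize the Dinitz--Krauthgamer sampling template by replacing its random vertex subsampling with an explicit vertex-fault $(L,f)$-\RP from Theorem~\ref{thm:general-covering} instantiated at $L=2$. Concretely, I first invoke Theorem~\ref{thm:general-covering} (in its vertex-fault variant) to obtain a family $\mathcal{G}_{2,f}=\{G[V_1],\dots,G[V_r]\}$ of induced subgraphs of $G$ with $r=\widetilde{O}(f^3)$ (the dominant regime gives $r\le (\alpha cLf)^{b+1}$ with $b=2$, and at worst an additional polylogarithmic factor appears in the remaining regimes). For $L=2$ the covering guarantee specializes to: for every edge $(u,v)\in E$ and every $F\subseteq V$ with $|F|\le f$ and $u,v\notin F$, there exists an index $i$ such that $\{u,v\}\subseteq V_i$ and $V_i\cap F=\emptyset$.

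Second, I run the given deterministic $t$-spanner algorithm $\cA$ on each $G[V_i]$ to produce a $t$-spanner $H_i$ of size $s(|V_i|)$ in time $\tau(|V_i|,m,t)$, and output $H=\bigcup_{i=1}^r H_i$. Correctness follows directly from the covering guarantee. Fix a query $(s,t,F)$ with $|F|\le f$ and let $P$ be a shortest $s$-$t$ path in $G\setminus F$; for every edge $(u,v)$ on $P$ we have $u,v\notin F$, hence some $V_i$ contains $u,v$ and avoids $F$, and so $H_i\subseteq H$ contains a $u$-$v$ path of length at most $t$ whose vertices lie in $V_i$ and are therefore disjoint from $F$. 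Concatenating these local detours along the edges of $P$ yields an $s$-$t$ path in $H\setminus F$ of length at most $t\cdot d_{G\setminus F}(s,t)$, so $H$ is an $f$-fault-tolerant $t$-spanner.

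For the quantitative bounds, the total size is $\sum_i s(|V_i|)$ and the overall time is the $\widetilde{O}(f^3 m)$ cost of producing $\mathcal{G}_{2,f}$ plus $\sum_i \tau(|V_i|,m,t)$, so the advertised $\widetilde{O}(f^3\cdot s(2n/f))$ size and $\widetilde{O}(f^3(\tau(2n/f,m,t)+m))$ time bounds follow once one guarantees the pointwise size bound $|V_i|\le 2n/f$ for every $i$. I expect this pointwise size bound to be the main technical friction: the randomized DK construction obtains it for free via Chernoff applied to a $\Theta(1/f)$ vertex-sampling, but Theorem~\ref{thm:general-covering} promises only a bound on the covering value. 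To close this gap I would either (i) inspect the \HM-based construction underlying the theorem and argue that at $L=2$ it already selects vertices with density $\Theta(1/f)$, so each $V_i$ has size $O(n/f)$ by construction; or (ii) post-process by intersecting each oversized $V_i$ with a deterministic balanced partition of $V$ into $\Theta(f)$ blocks, producing $O(f)$ sub-subgraphs of size at most $2n/f$ which jointly preserve the covering property at the cost of only a $\polylog n$ overhead absorbed into the $\widetilde{O}$. All remaining ingredients are deterministic, so combining them yields the stated derandomization.
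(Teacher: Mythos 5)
Your overall route is the paper's route: build a vertex-variant $(2,f)$-\RP of covering value $\widetilde{O}(f^3)$, run $\cA$ on each subgraph, take the union, and argue stretch edge-by-edge along a replacement path (this matches the proof of Theorem~\ref{thm:spanner-mult-full}). But the point you flag as ``technical friction'' is a genuine gap, not a detail: Theorem~\ref{thm:general-covering} by itself only bounds the \emph{number} of subgraphs, and nothing in its statement prevents a subgraph from containing $\Omega(n)$ vertices, in which case $\sum_i s(|V_i|)$ and $\sum_i \tau(|V_i|,m,t)$ do not collapse to the claimed $\widetilde{O}(f^3\cdot s(O(n/f)))$ and $\widetilde{O}(f^3\cdot\tau(O(n/f),m,t))$. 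The paper closes exactly this gap with Theorem~\ref{lem:num-collide-paths}, properties (I5v)/(I6v): since $L=2\le f$, one may discard the complementary subgraphs $G_{i,1}$ and keep only the $G_{i,0}$'s, and because the underlying codes (Reed--Solomon, and AG concatenated with Reed--Solomon) are \emph{linear}, hence $1$-wise independent, each kept subgraph selects at most an $|S|/q\le L/q\le 1/f$ fraction of the vertices, i.e.\ $|V_i|=O(n/f)$, and its vertex set is computable in $\widetilde{O}(n/f)$ time. So your option (i) is the right idea, but it is precisely the content that has to be proved (via linearity/1-wise independence), not merely ``inspected.''

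Your fallback option (ii) does not work as stated. If you intersect an oversized $V_i$ with a balanced partition of $V$ into $\Theta(f)$ blocks, an edge $(u,v)$ whose covering subgraph was $G[V_i]$ may have $u$ and $v$ in different blocks, and then no resulting sub-subgraph contains both endpoints while avoiding $F$; there is no guarantee that some other $V_{i'}$ picks up the slack, so the covering property is lost. The natural repair --- taking the union of every \emph{pair} of blocks inside $V_i$ --- restores coverage but multiplies the family size by $\Theta(f^2)$, giving $\widetilde{O}(f^5)$ rather than $\widetilde{O}(f^3)$, which is far more than a polylogarithmic overhead. So to obtain the stated bounds you should invoke (or reprove) the density and enumeration properties of Theorem~\ref{lem:num-collide-paths} rather than rely on a partition-based post-processing.
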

The above derandomization matches the randomized construction of \cite{dinitz2011fault} upto a multiplicative factor  of $\log^3 n$ in the size and time bounds. 
In the same manner, we also apply derandomization for the nearly-additive fault-tolerant spanners of Braunschvig et al. \cite{BraunschvigCPS15}. This provides the first deterministic constructions of nearly additive spanners.   

\paragraph{Comparison with a recent independent work of \cite{BodwinDinitz20}.}
Independent to our work, \cite{BodwinDinitz20} presented a new slack version of the greedy algorithm from \cite{bodwin2018optimal,DinitzR20} to obtain a (vertex) fault-tolerant spanners with \emph{optimal} size bounds. Their main algorithm is randomized with and the emphasis there is on optimizing the size of the output spanner. 
To derandomize their construction, \cite{BodwinDinitz20} used the notion of universal hash functions to compute  deterministically an $(L=2,f)$-\RP of covering value $\widetilde{O}(f^6)$ for $f\leq n^{o(1)}$ and a value of
$\widetilde{O}(f^3)$ for $f\geq n^c$ for some constant $c$. Using our $(L=2,f)$-RPC of Theorem \ref{thm:general-covering} yields a covering value of $\widetilde{O}(f^3)$ for \emph{every value} $f$. Up to a logarithmic factor, our bounds match the value of the randomized construction.  
The quality of the spanner construction of \cite{BodwinDinitz20} depends, however, not only on the value of the covering, but rather also on additional useful properties. These properties are also addressed in our paper for the sake of the applications of derandomizing the works of \cite{dinitz2011fault,weimann2013replacement}.
In particular, we show that our $(L=2,f)$-RPC with $\widetilde{O}(f^3)$ subgraphs also satisfies the desired properties in the same manner as provided by the randomized construction. Consequently, by using our $(L=2,f)$-RPCs in the algorithm of \cite{BodwinDinitz20},  we can close the gap of Theorem 1.2 of \cite{BodwinDinitz20} and get a deterministic construction which matches the randomized time bounds (of Theorem 1.1 in \cite{BodwinDinitz20}) for \emph{any} value of $f$. In Appendix~\ref{sec:comparison}, we provide a further detailed comparison to the related constructions of \cite{parter2019small} and \cite{BodwinDinitz20}. In addition, we provide a proof sketch for improving Thm. 1.2 of \cite{BodwinDinitz20} (see Lemma \ref{lem:imp}).  We also point the reader to subsequent work by Parter \cite{P22}.

\subsection{Key Techniques}

In this section, we detail some of the key techniques introduced in this paper.

\subsubsection{Deterministic $(L,f)$-\RPC}

While the introduction of the notion of \RP is our key conceptual contribution, we elaborate in this subsection on our framework to construct deterministic \RP, which we also believe will be of independent interest. 

\paragraph{Hit and Miss Hash Families.} We introduce a new notion of hash families called Hit and Miss ($\mathsf{HM}$) Hash Families. Informally, given integer parameters $N,a,b,$ and $q$, a family $\H$ of hash functions from $[N]$ to $[q]$ is said to be a \HM if for every pair of mutually disjoint subsets of $[N]$, say $(A,B)$, there exists a hash function  $h\in \H$ such that every $(x,y)\in A\times B$ do not collide under $h$ (see Definition~\ref{def:HM} for a formal statement). We show that every error correcting code with relative distance greater than $1-\frac{1}{ab}$ can be seen as a \HM. This insight yields a systematic way to construct \HM.

\paragraph{Connection to \RPC.} We then consider \HM over the Boolean alphabet and associate the domain of the hash family with the edges (or vertices) of the graph for which we would like to design a \RP. We observe that every hash function of the Boolean \HM immediately gives a subgraph in \RP, where we view the function as a Boolean vector of length equal to the number of edges in the graph, and thus the hash function acts as an indicator vector of whether to pick the edge or not in the subgraph. Moreover, the property of a \RP always avoiding faults but containing the replacement path in at least one of the subgraphs (see Definition~\ref{def:RP}) exactly coincides with the definition of a Boolean \HM, and thus a Boolean \HM yields a \RP.

\paragraph{Overview.} We now provide a short summary of our deterministic construction of ($L,f$)-\RP (assuming $L\ge f$) for a graph $G$ with $m$ edges.  We start from an error correcting code $C$ over alphabet of size $q$, block length $\ell$, message length $\log_q m$ and relative distance greater than $1-\frac{1}{Lf}$. Next, we  interpret $C$ as a \HM from $[m]$ to $[q]$ with $\ell$ hash functions. Then we apply the alphabet reduction lemma to obtain a  \HM from $[m]$ to $\{0,1\}$ with $\ell\cdot q^f$ many hash functions. Finally, using the connection between Boolean \HM and \RPC, we construct an $(L,f)$-\RP $\mathcal{G}_{L,f}$ with covering value $2\cdot q^f\cdot \ell$ in time $\mathsf{CV}(\mathcal{G}_{L,f})\cdot \widetilde{O}(m)$. In other words the alphabet size and the block length of the starting code $C$ directly determines the covering number of our \RP. Depending on the relationship between $L$ and $f$ we use either just Reed-Solomon code or a concatenation of Algebraic-Geometric code (as outer code) with   Reed-Solomon code (as inner code) to obtain the parameters given in Theorem~\ref{thm:general-covering}.

\subsubsection{Derandomization of Weimann-Yuster \DSO}
Our key contribution is in utilizing the $(L,f)$-\RP to compute \emph{fault-tolerant trees} with improved time bounds compared to that of \cite{AlonCC19}. Fault tolerant trees were introduced by \cite{ChechikCFK17,AlonCC19} and specifically, in \cite{AlonCC19} they served the basis for implementing the combinatorial \DSO implementation of \cite{weimann2013replacement}. For a given vertex pair $s,t$, and integer parameters $L,f$, the FT-tree $\FT_{L,f}(s,t)$ consists of $O(L^f)$ \emph{nodes}, where each node is labeled by a pair $\langle P,F \rangle$ where $P$ is an $s$-$t$ path in $G \setminus F$ with at most $L$ edges, where $F$ is a sequence of at most $f$ faults which $P$ avoids. 
Let $d^{L}(s,t,G')$ denote the weight of the shortest $s$-$t$ paths in $G'$ among all $s$-$t$ paths with at most $L$ edges. The key application of FT-trees is that given a query $(s,t,F)$ and the FT-tree $\FT_{L,f}(s,t)$, one can compute $d^{L}(s,t,G \setminus F)$ in time $O(f^2\log n)$. 
\cite{AlonCC19} provided an efficient combinatorial construction of all the FT-trees in time $\widetilde{O}(m\cdot n \cdot L^{f+1})$, thus super-cubic time for dense graphs.

By using our $(L,f)$-\RP family $\mathcal{G}_{L,f}$, we provide an improved (algebraic) construction of these trees in sub-cubic time for graphs with small integer weights. The construction of these trees boils down into a simple computational task which we can efficiently solve using the $(L,f)$-\RP. The task is as follows: given a triplet $(s,t,F)$, compute $d^L(s,t,G\setminus F)$. To build the trees, it is required to solve this task for $O(n^2 \cdot L^f)$ triplets. Our algorithm starts by applying a variant of the All-Pair-Shortest-Path (APSP) in each of the subgraph $G' \in \mathcal{G}_{L,f}$. This variant, noted as $APSP^{\leq L}$ \cite{ChechikC20} restricts attention to computing only the shortest paths that contain at most $L$ edges, which can be done in time $\widetilde{O}(M L n^{\omega})$ using matrix multiplications. 
Then to compute $d^L(s,t,G\setminus F)$ for a given triplet $(s,t,F)$, we show that it is sufficient to consider a small collection of subgraphs $\mathcal{G}_F \subseteq \mathcal{G}_{L,f}$ where $|\mathcal{G}_F|=O(f L\log n)$, and to return the minimum $d^L(s,t,G'')$ over every $G'' \in \mathcal{G}_F$. Since the $d^L(s,t,G')$ distances are precomputed by the  $APSP^{\leq L}$ algorithm, each $d^L(s,t,G\setminus F)$ can be computed in $\widetilde{O}(L)$ time.

%\paragraph{Improved Hit and Miss Hash Families.} Iv

\subsection{Gap between Det. and Randomized $(L,f)$-\RPC}\label{sec:random}
For the sake of discussion assume that $f=O(1)$ and $L=n^{\epsilon}$ for some constant $\epsilon$.  
Our current deterministic constructions provide $(L,f)$-\RP with covering value $\widetilde{O}(L^{f+1})$ whereas the randomized constructions obtain value of $\widetilde{O}(L^{f})$. This gap is rooted in the following distinction between the randomized and deterministic constructions. For the purposes of  the randomized construction, the $(L,f)$-\RP should cover $n^{O(f)}$ replacement paths. The reason is that there are $n^{O(f)}$ possible fault events, and for each sequence of $F$ faults, the subgraph $G \setminus F$ contains $n^2$ shortest paths (i.e., replacement paths avoiding $F$). In particular, if there are multiple $s$-$t$ shortest-path in $G \setminus F$, it is sufficient for the \RP to cover one of them. Since a single sampled subgraph $G_i$ covers a given path $P(s,t,F)$ with probability of $c/L$, by taking $r=O(fL^f \log n)$ subgraphs, we get that $P(s,t,F)$ is covered by at least one of the subgraphs with probability of $1-1/n^{c\cdot f}$. Applying the union bound over all $n^{O(f)}$ replacement paths establishes the correctness of the construction. In contrast, our deterministic construction provides a covering for any $P(s,t,F)$ paths, and also for any arbitrary collection of $L$ edges $A$ and $f$ edges $B$ with $A \cap B =\emptyset$. That is, since our construction does not exploit the structure of the paths, it provides a covering for $n^{\Omega(L)}$ paths. Note that if the randomized construction would have required to cover $n^{\Omega(L)}$ paths rather than $n^{O(f)}$, we would have end-up having $O(L^{f+1})$ subgraphs in that covering as well. 
In other words, the current gap in the bounds can be explained by the number of replacement paths that the $(L,f)$-\RP are required to cover. Since in the deterministic constructions, it is a-priori unknown what would be the set of replacement paths that are required to be covered, they cover all $n^{\Omega(L)}$ possible paths.

Importantly, in Appendix~\ref{sec:input-RPC}, we consider a relaxed variant of the $(L,f)$-RPC problem, introduced by \cite{AlonCC19}, for which we are able to provide nearly matching bounds to the randomized construction. Specifically, in that setting, we are given as input a collection of $k$ pairs $\{(P, F )\}$ where $P$ is a path with at most $L$ edges and $F$ is a set of at most $f$ faults which $P$ avoids. We then provide an efficient deterministic construction of a restricted $(L,f)$-\RP family $\mathcal{G}$ of value $\widetilde{O}(\log k \cdot L^f)$, i.e., of the same value as obtained by the randomized construction. The graph collection $\mathcal{G}$ then satisfies that for every pair $( P, F )$ in the input set, there is a subgraph $G' \in \mathcal{G}$ such that $P \subseteq G'$ and $G'\cap F=\emptyset$. This further demonstrates that the only reason for the gap between our deterministic and randomized bounds is rooted in the gap in the number of replacement paths that those constructions are required to cover.

\section{Preliminaries}

\paragraph{Notations.} Throughout this paper, $G$ denotes a (possibly weighted) graph, $V(G)$ denotes the vertex set of a graph $G$, and  $E(G)$ denotes the edge set of a graph $G$.  In case the graph is weighted, the weights are integers in $[-M,M]$. 
For $u,v \in V$ and a subgraph $G'$, let $\dist(u,v,G')$ denote the shortest $u$-$v$ path distance in $G$. For an $x$-$y$ path $P$ and $y$-$w$ path $P'$, let $P\circ P'$ denote the concatenation of the two paths. Also, for any $n\in\mathbb{N}$ and $j\in\mathbb{N}$, we denote by $\binom{[n]}{j}$ the collection of all subsets of size exactly $j$, by $\binom{[n]}{\le j}$ the collection of all subsets of size at most $j$,  and by $\binom{n}{\le j}$ the sum $\sum_{i\in [j]}\binom{n}{i}$.

\subsection{Replacement Paths and Randomized $(L,f)$ Covering}
For a weighted graph $G=(V,E,w)$ and a path $P \subseteq G$, let $|P|$ be the number of edges in $P$ and let $\len(P)=\sum_{e \in P}w(e)$ be the weighted sum of the edges in $P$. 
Let $SP_G(s,t,F)$ be the collection of all $s$-$t$ shortest path in $G \setminus F$. Every path $P_G(s,t,F)\in SP_G(s,t,F)$ is called a \emph{replacement path}. 
For a given integer $L$, let $SP^L_G(s,t,F)$ be the collection of all the shortest $s$-$t$ paths in $G \setminus F$ that contain \emph{at most} $L$ edges. A path in $SP^L_G(s,t,F)$ is referred to as $P^{L}_G(s,t,F)$. Let $d^{L}(s,t,G\setminus F)=\len(P^{L}_G(s,t,F))$.  If $SP^L_G(s,t,F)=\emptyset$, i.e.,  there is no path from $s$ to $t$ in $G \setminus F$ containing at most $L$ edges, then define $P^{L}_G(s,t,F)=\emptyset$ and $d^{L}(s,t,G\setminus F)=\infty$. For $F=\emptyset$, we abbreviate $P^L_G(s,t,\emptyset)=P^L_G(s,t)$ as the shortest $s$-$t$ path with at most $L$ edges, and $d^{L}(s,t,G)=\len(P^L_G(s,t))$ is the length of the path.
When the graph $G$ is clear from the context, we may omit it and write $P(s,t,F)$ and $P^L(s,t,F)$.
%
%\begin{lemma}
%\textbf{MP: add lemma on LCAs.}
%\end{lemma}
%
%\begin{lemma}\label{lem:bounded-shortest-paths}
%Let $G$ be a weighted graph with integer weights in $[-M,M]$. Then, there exists a deterministic algorithm that computes all shortest paths with at most $L$ edges in time 
%$\widetilde{O}(n^2 (M n)^{1/(4-\omega)}+n^{2}L)$. 
%\end{lemma}

%\begin{lemma}\label{lem:shortest-paths-tress}
%Let $G$ be a directed weighted graph with integer weights in $[-M,M]$ then one can compute shortest path trees from $\widetilde{O}(Mn^{\omega-2})$ sources in $\widetilde{O}(Mn^{\omega})$ time. 
%\end{lemma}
%
%\begin{definition}[Def. 16 from \cite{AlonCC19}]
%
%
%\end{definition}
The following lemma is obtained via the doubling method\footnote{The algorithm provided in \cite{yuster2005answering} is randomized and it is described how to derandomize it \emph{with essentially no loss in efficiency} in Sec 8 of  \cite{yuster2005answering}.}  of \cite{yuster2005answering}, recently used in \cite{ChechikC20}.
\begin{lemma}\label{lem:shortAPSP}[Lemma 5 of \cite{ChechikC20}]
For every $n$-vertex subgraph $G' \subseteq G$, there is an algorithm that computes $\{d^{L}(s,t,G'), P^{L}(s,t,G')\}_{s,t \in V}$ in time $\widetilde{O}(L M n^{\omega})$.
\end{lemma}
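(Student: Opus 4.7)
The plan is to implement this by the classical doubling/repeated-squaring scheme of Yuster--Zwick, combined with fast $(\min,+)$-matrix multiplication for small integer entries. Define $A\in(\mathbb{Z}\cup\{\infty\})^{n\times n}$ as the weighted adjacency matrix of $G'$ (with $A[v,v]=0$ and $A[u,v]=\infty$ when $(u,v)\notin E(G')$), and write $\otimes$ for the $(\min,+)$-product, so that $A^{\otimes k}[s,t]=d^{k}(s,t,G')$. I would compute the sequence $A,A^{\otimes 2},A^{\otimes 4},\ldots,A^{\otimes 2^{\lceil \log L\rceil}}$ by $\lceil\log L\rceil$ successive squarings, and then combine the matrices indexed by the $1$-bits of $L$ into $A^{\otimes L}$ with at most $\log L$ additional $(\min,+)$-products.

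For the cost, observe that the finite entries of $A^{\otimes 2^i}$ are integers in $[-2^i M,\,2^i M]$. By the Alon--Galil--Margalit reduction, any $(\min,+)$-product of two $n\times n$ matrices whose finite entries lie in $[-K,K]$ can be realized by a constant number of standard integer matrix products and so runs in time $\widetilde{O}(K n^{\omega})$. Thus the $i$-th squaring step costs $\widetilde{O}(2^i M n^{\omega})$, and the geometric series $\sum_{i=0}^{\lceil \log L\rceil}\widetilde{O}(2^i M n^{\omega})=\widetilde{O}(L M n^{\omega})$ dominates the full procedure, including the at most $\log L$ combination products used to assemble $A^{\otimes L}$.

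To recover the actual paths $P^L(s,t,G')$ and not merely the values $d^L(s,t,G')$, I would carry, alongside each $A^{\otimes 2^i}$, a witness matrix $W_i$ whose entry $W_i[s,t]$ is a midpoint $m$ attaining $A^{\otimes 2^i}[s,t]=A^{\otimes 2^{i-1}}[s,m]+A^{\otimes 2^{i-1}}[m,t]$. For matrices with small integer entries, such witnesses can be extracted within the same $\widetilde{O}(K n^{\omega})$ bound, and, crucially, this step admits a \emph{deterministic} implementation of Alon--Galil--Naor type using perfect hash families / small-bias spaces; this is exactly the derandomization carried out in Sec.~8 of \cite{yuster2005answering}. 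An actual path realizing $d^L(s,t,G')$ is then reconstructed by a straightforward recursion on the logged witness matrices in $O(L)$ time per pair, totalling $O(L n^2)$, which is dominated by $\widetilde{O}(L M n^{\omega})$.

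The main obstacle is not the recurrence but the \emph{deterministic} extraction of witnesses: the textbook witness computation for $(\min,+)$-products is randomized (via random column samples), and replacing it requires the more delicate derandomization in Sec.~8 of \cite{yuster2005answering}. Everything else amounts to summing a geometric series over the squaring steps and bookkeeping the witness matrices.
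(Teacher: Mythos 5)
Your proposal is correct and follows exactly the route the paper itself points to: the statement is imported as Lemma~5 of \cite{ChechikC20}, obtained via the doubling/distance-product method of \cite{yuster2005answering}, with the deterministic witness extraction handled by the derandomization in Sec.~8 of \cite{yuster2005answering} — precisely the ingredients you assemble. The only caveat worth noting is the implicit no-negative-cycle assumption so that $(\min,+)$-powers over walks coincide with $d^{L}$ over paths, which is standard in this setting.
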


The next lemma summarizes the quality of the randomized $(L,f)$-\RP procedures as obtained in \cite{weimann2013replacement} and \cite{dinitz2011fault}.  The proof is deferred to Appendix \ref{sec:miss-proof}.
\begin{lemma}[Randomized $(L,f)$-\RP]\label{lem:rand-RPC}
For every $n$-vertex graph $G=(V,E)$ and integer parameters $L,f \leq n$, one can compute a collection $\mathcal{G}=\{G_1,\ldots, G_r\}$ of $r$ subgraphs such that w.h.p. $\mathcal{G}$ is an $(L,f)$-\RP, where 
$r=O(f\cdot \max\{L,f\}^{\min\{L,f\}} \cdot \log n)$. The computation time is $O(r \cdot |E|)$.
\end{lemma}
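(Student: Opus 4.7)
My plan is to apply the standard FT-sampling scheme: independently construct each $G_i$ by including every edge $e \in E(G)$ with probability $p$, for $r = \Theta(f\cdot \max(L,f)^{\min(L,f)}\cdot \log n)$ subgraphs, where $p$ is chosen as $p=1-1/L$ when $L\geq f$ and $p=1/f$ otherwise. By the footnote of Definition~\ref{def:RP}, it suffices, for every triple $(s,t,F)$ with $|F|\le f$, to cover just one canonical path $P^*(s,t,F)\in SP_G^L(s,t,F)$; my analysis fixes such a canonical choice per triple up front.

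For a fixed triple $(s,t,F)$ and its canonical $P^*$, note that $E(P^*)\cap F=\emptyset$ because $P^*$ is a path in $G\setminus F$. Hence under independent edge sampling the events $E(P^*)\subseteq E(G_i)$ and $F\cap E(G_i)=\emptyset$ depend on disjoint edge sets and are independent. Consequently
\[
\Pr[G_i \text{ covers } P^*] \;=\; p^{|P^*|}(1-p)^{|F|} \;\geq\; p^L(1-p)^f.
\]
For $L\ge f$ with $p=1-1/L$, one has $p^L\ge 1/4$ and $(1-p)^f=L^{-f}$; for $f>L$ with $p=1/f$, one has $p^L=f^{-L}$ and $(1-p)^f\ge 1/4$. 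In both regimes the single-sample covering probability is at least $q:=\tfrac{1}{4}\max(L,f)^{-\min(L,f)}$.

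Amplifying over $r$ independent subgraphs, the probability that no $G_i$ covers $P^*(s,t,F)$ is at most $(1-q)^r \le \exp(-qr)$. Choosing $r$ to be a sufficiently large constant times $f\cdot \max(L,f)^{\min(L,f)}\cdot \log n$ forces this bound below $n^{-(2+2f+1)}$. Since (using $m\le n^2$) there are at most $\binom{n}{2}\cdot \binom{m}{\le f}=O(n^{2+2f})$ triples $(s,t,F)$, a union bound over all canonical paths yields an overall failure probability of $O(1/n)$. The construction time is dominated by tossing one coin per edge in each of the $r$ subgraphs, giving $O(rm)$, which matches the claimed bound.

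The only mildly delicate points are choosing the right $p$ in each regime of $L$ vs.\ $f$ so that $p^L(1-p)^f$ is $\Omega(\max(L,f)^{-\min(L,f)})$, and using the per-triple canonical-path trick so that the union bound is only over $O(n^{2+2f})$ triples (rather than over all paths of length $\le L$ in $G$, which would inflate $r$ by a $\poly(L)$ factor and explain, as the paper later notes in Section~\ref{sec:random}, the gap with the deterministic construction).
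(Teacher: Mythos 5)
Your proposal is correct and follows essentially the same route as the paper's proof: independent edge-sampling with $p=1-1/L$ (resp.\ $p=1/f$), a per-subgraph covering probability of $\Omega(\max\{L,f\}^{-\min\{L,f\}})$, and amplification plus a union bound over the $n^{O(f)}$ triples $(s,t,F)$, each with one fixed canonical replacement path. The unified $\max/\min$ treatment versus the paper's two explicit cases is only a cosmetic difference.
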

\def\APPENDRANDLFRPC{
\begin{proof}[Proof of Lemma \ref{lem:rand-RPC}]
First consider the case where $L\geq f$. Let $\mathcal{G}=\{G_1,\ldots, G_r\}$ be a collection of independently sampled subgraphs for $r=c \cdot f \cdot L^f \log n$ where $c$ is a sufficiently large constant. Each subgraph $G_i$ is obtained by sampling each edge $e \in E(G)$ into $G_i$ independently with probability $p=1-1/L$.   We now show that $\mathcal{G}$ is indeed an $(L,f)$-\RP.
Fix a replacement path $P(s,t,F)$ of length at most $L$ that avoids a set of $F$ edges. The probability that a subgraph $G_i$ covers $P(s,t,F)$ is at least $q=p^L \cdot 1/L^f=1/(e \cdot L^f)$. Thus the probability that none of the $r$ subgraphs covers $P(s,t,F)$ is at most $(1-q)^r\leq (1-1/(e \cdot L^f))^{c \cdot f \cdot L^f \log n}=1/n^{c' f}$ for a sufficiently large constant $1<c'<c$. By taking $c$ to be a sufficiently large constant, and applying the union bound over all $n^{4f+2}$ triplets of $s,t,F$, we get that w.h.p. $\mathcal{G}$ is an $(L,f)$-\RP.

Next, assume that $L\leq f$. The definition of $\mathcal{G}$ is almost the same up to a small modification in the selection of the parameters. Set $r=c \cdot f^{L+1} \log n$ and let $p=1/f$. 
To see the correctness, fix a replacement path $P(s,t,F)$ with at most $L$ edges. The probability that $G_i$ covers $P(s,t,F)$ is at least $q=p^L \cdot (1-p)^f=1/(e \cdot f^L)$. Thus the probability that none of the $r$ subgraphs covers $P(s,t,F)$ is at most $(1-q)^r\leq (1-1/(e \cdot f^L))^{c \cdot f^{L+1} \log n}=1/n^{c' f}$ for a sufficiently large constant $1<c'<c$. By taking $c$ to be a sufficiently large constant, and applying the union bound over all $n^{2f+2}$ triplets of $s,t,F$, we get that w.h.p. $\mathcal{G}$ is an $(L,f)$-\RP.
\end{proof}
}%\APPENDRANDLFRPC

\subsection{Error Correcting Codes}

In this subsection, we recall the definition of error correcting codes and some standard code constructions known in literature. We define below a notion of distance used in coding theory (called \emph{Hamming} distance) and then define error correcting codes with its various parameters. 

\begin{definition}[Distance]
Let $\Sigma$ be a finite set and $\ell\in\mathbb{N}$, then the distance\footnote{We use the normalized notion of distance for the sake of exposition. In coding theory literature, our notion of distance is referred to as \emph{relative distance}. } between $x,y\in \Sigma^\ell$, denoted by $\distc(x,y)$, is defined to be:
\[ \distc(x,y) = \frac{1}{\ell}\cdot \abs{\sett{i\in[\ell]}{x_i\neq y_i}}. \]
\end{definition}

\begin{definition}[Error Correcting Code]
Let $\Sigma$ be a finite set. For every $\ell\in\mathbb{N}$, a subset $C\subseteq \Sigma^\ell$ is said to be an error correcting code with block length $\ell$, message length $k$, and relative distance $\delta$ if $|C|\ge |\Sigma|^k$ and for every $x,y\in C$, $\distc(x,y)\geq \delta$. We denote then $\distc(C)=\delta$. Moreover, we say that $C$ is a \code{k,\ell,\delta}{q} code to mean that $C$ is a code defined over alphabet set of size $q$ and is of message length $k$, block length $\ell$, and relative distance $\delta$. Finally, we refer to the elements of a code $C$ as codewords. 
\end{definition}

For the results in this article, we require codes with certain extremal properties. First, we recall Reed-Solomon codes whose codewords are simply the evaluation of univariate polynomials over a finite field.

\begin{theorem}[Reed-Solomon Codes \cite{RS60}]\label{thm:rs}
For every prime power $q$, and every $k \leq  q$, there exists a \code{k, q, 1-\frac{k-1}{q}}{q} code. 
\end{theorem}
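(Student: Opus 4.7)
The plan is to exhibit the code explicitly via polynomial evaluation and then verify the three parameters (message length, block length, relative distance) in turn. Throughout I would work over the finite field $\mathbb{F}_q$, which exists since $q$ is a prime power, and I would identify the alphabet $\Sigma$ with $\mathbb{F}_q$ so that $|\Sigma|=q$.

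First I would define the encoding map. Let $\alpha_1,\ldots,\alpha_q$ be an arbitrary enumeration of the elements of $\mathbb{F}_q$. For each message $m=(m_0,\ldots,m_{k-1})\in \mathbb{F}_q^k$, associate the univariate polynomial $f_m(X)=\sum_{i=0}^{k-1}m_i X^i \in \mathbb{F}_q[X]$ of degree strictly less than $k$, and set the codeword to be the evaluation vector $\mathrm{Enc}(m)=\bigl(f_m(\alpha_1),f_m(\alpha_2),\ldots,f_m(\alpha_q)\bigr)\in \mathbb{F}_q^q$. The code $C$ is then defined as the image of this map. By construction the block length is $q$, so it remains to check the size and the distance.

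For the size, I would observe that the map $m\mapsto f_m$ is a bijection between $\mathbb{F}_q^k$ and polynomials of degree less than $k$, and the map $f\mapsto (f(\alpha_1),\ldots,f(\alpha_q))$ is injective on this set (this follows from the distance bound below applied to the zero polynomial, or directly from the standard fact that a polynomial of degree $<k\le q$ is determined by $k$ of its values). Hence $|C|=q^k$, matching the required message length.

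For the relative distance, take two distinct codewords $\mathrm{Enc}(m)$ and $\mathrm{Enc}(m')$ with $m\neq m'$. Their difference $f_m - f_{m'}$ is a nonzero polynomial of degree at most $k-1$ over the field $\mathbb{F}_q$, and hence by the classical fact that a nonzero polynomial of degree $d$ over a field has at most $d$ roots, it vanishes on at most $k-1$ of the $\alpha_j$'s. Therefore $\mathrm{Enc}(m)$ and $\mathrm{Enc}(m')$ agree in at most $k-1$ coordinates and disagree in at least $q-(k-1)$ coordinates, which gives $\distc(\mathrm{Enc}(m),\mathrm{Enc}(m'))\ge 1-\tfrac{k-1}{q}$. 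Taking the minimum over all pairs yields $\distc(C)\ge 1-\tfrac{k-1}{q}$, completing the verification that $C$ is a $[k,q,1-\tfrac{k-1}{q}]_q$ code. No step is really an obstacle here; the only mildly subtle point is the appeal to the root bound, which in turn rests on $\mathbb{F}_q$ being a field (not just a ring), so polynomial division without remainder is valid.
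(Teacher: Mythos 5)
Your proof is correct and is exactly the classical polynomial-evaluation construction of Reed--Solomon codes that the paper invokes by citation to \cite{RS60} without reproducing the argument: evaluation of all degree-$<k$ polynomials at the $q$ field elements, with the distance bound $1-\frac{k-1}{q}$ following from the fact that a nonzero polynomial of degree at most $k-1$ over a field has at most $k-1$ roots. All three parameters (message length $k$, block length $q$, relative distance) are verified correctly, so there is nothing to add.
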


These codes achieve the best possible tradeoff between the rate of the code (i.e., the ratio of message length to block length) and the relative distance of the code in the large alphabet regime as they meet the Singleton bound \cite{Singleton}. However,  if we desire codes with alphabet size much smaller than the block length then, Algebraic-Geometric codes \cite{G70,TVZ82} are the best known construction of codes achieving a good tradeoff between rate and relative  distance (but do not meet the Singleton bound). We specify below a specific construction of such codes. 

\begin{theorem}[Algebraic-Geometric Codes \cite{GS96}] \label{thm:ag}
Let $p$ be a prime square greater than or equal to 49, and let $q:=p^c$ for any $c\in\mathbb{N}$. Then for every $k\in\mathbb{N}$, there exists a \code{k, k\cdot \sqrt{q}, 1-\frac{3}{\sqrt{q}}}{q} code. 
\end{theorem}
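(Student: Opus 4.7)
The plan is to construct the required codes as Algebraic-Geometric (Goppa) codes associated to curves with many rational points relative to their genus. The starting point is the Garcia-Stichtenoth tower of function fields over $\mathbb{F}_q$ (valid whenever $q$ is a square, which holds since $p$ is a prime square and $q=p^c$), which produces an explicit infinite sequence of curves $X_1, X_2, \ldots$ of growing genera $g_i$ whose number of $\mathbb{F}_q$-rational points $N_i$ attains the Drinfeld-Vladut bound $\lim_{i\to\infty} N_i/g_i = \sqrt{q} - 1$. Given the target $k$, I will pick the smallest index $i$ so that the curve $X := X_i$ has at least $k\sqrt{q}$ rational places; by the tower's rate its genus satisfies $g \leq k\sqrt{q}/(\sqrt{q}-1) + O(1)$. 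Select any $n := k\sqrt{q}$ such points $P_1, \ldots, P_n$, set $D := P_1 + \cdots + P_n$, and pick a divisor $G$ of degree $\ell := k + g - 1$ whose support is disjoint from $\{P_j\}$ (for instance, a suitable multiple of a place of large degree).

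The code will be the evaluation (Goppa) code $C(D,G) := \{(f(P_1), \ldots, f(P_n)) : f \in \mathcal{L}(G)\}$, where $\mathcal{L}(G)$ is the Riemann-Roch space of $G$. Three facts deliver the parameters. First, the block length is $n = k\sqrt{q}$ by construction. Second, Riemann-Roch gives $\dim \mathcal{L}(G) \geq \deg G - g + 1 = k$, and since $\deg G = \ell < n$ the evaluation map is injective, so the code has $\mathbb{F}_q$-dimension at least $k$ (hence $|C(D,G)|\geq q^k$ as required). Third, a nonzero codeword coming from $f \in \mathcal{L}(G)$ has weight $n - \#\{j : f(P_j) = 0\}$, and $f$ has at most $\deg G = \ell$ zeros counted with multiplicity, so the relative distance is at least $1 - \ell/n = 1 - 1/\sqrt{q} - (g-1)/(k\sqrt{q})$. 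Substituting the genus bound and using $\sqrt{q} \geq 7$ (from $p \geq 49$), this is at least $1 - 1/\sqrt{q} - 1/(\sqrt{q}-1) - o(1) \geq 1 - 3/\sqrt{q}$, because $2/\sqrt{q} - 1/(\sqrt{q}-1) = (\sqrt{q}-2)/(\sqrt{q}(\sqrt{q}-1))$ is already $5/42$ at $\sqrt{q}=7$, leaving ample slack.

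The main technical ingredient, which I would cite rather than reprove, is the Garcia-Stichtenoth construction itself: the tower is defined recursively by $y_{i+1}^{\sqrt{q}} + y_{i+1} = y_i^{\sqrt{q}}/(y_i^{\sqrt{q}-1}+1)$ over $\mathbb{F}_q$, and the delicate part is controlling both $g_i$ (via Hurwitz's ramification formula on the tower) and $N_i$ (by counting totally split places). All my arithmetic uses only the resulting ratio $N_i/g_i \to \sqrt{q}-1$. The other potential obstacle is that the tower produces only a discrete sequence of $(N_i, g_i)$, so for certain $k$ no curve has exactly $k\sqrt{q}$ rational points; I would handle this by overshooting---taking the first $X_i$ with $N_i \geq k\sqrt{q}$ and restricting evaluation to any $n$ of the available rational points---which preserves dimension (the evaluation map remains injective for $\ell < n$) and only increases the relative distance bound $1 - \ell/n$. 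The hypothesis $p \geq 49$, i.e., $\sqrt{q} \geq 7$, is exactly what is needed to absorb the $O(1)$ genus overshoot into the gap between $1/\sqrt{q}+1/(\sqrt{q}-1)$ and $3/\sqrt{q}$; very small values of $k$, for which the $O(1)$ slack would be relatively larger, are handled separately by a trivial construction such as the repetition code over $\mathbb{F}_q$.
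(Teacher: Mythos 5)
The paper itself offers no proof to compare against: Theorem~\ref{thm:ag} is imported as a black-box citation to \cite{GS96}, so your attempt is judged on its own. As a sketch of the Goppa-code construction it has the right skeleton (Riemann--Roch for the dimension, $\deg G$ for the distance), but there is a genuine gap in how you get the statement \emph{for every} $k$. Your claim that the smallest tower level $X_i$ with $N_i\ge k\sqrt q$ has genus $g\le k\sqrt q/(\sqrt q-1)+O(1)$ is false: in the Garcia--Stichtenoth tower each step is an extension of degree $\sqrt q$, so both $N_i$ and $g_i$ jump by a factor of roughly $\sqrt q\ge 7$ between consecutive levels. If $N_{i-1}$ is just below $k\sqrt q$, the first admissible level can have $g_i\approx kq/(\sqrt q-1)$, which already exceeds the block length $n=k\sqrt q$; then $\deg G=k+g_i-1>n$, the evaluation map need not be injective, and the distance bound $1-(k+g_i-1)/n$ is vacuous. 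Quantitatively, your argument needs $g\le 2k+1$ (that is exactly what $1-3/\sqrt q$ buys you), and a fixed tower level only supplies this for $k$ in the window $[(g_i-1)/2,\,N_i/\sqrt q]$, a factor-$2$ range, while consecutive windows are a factor $\sqrt q\ge 7$ apart -- so most values of $k$ are simply not covered. Your proposed remedy (``overshoot and evaluate at only $n$ of the points, which only increases the relative distance bound'') does not address this, because the loss term is governed by the genus of the curve, not by how many of its points you evaluate at.

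Closing the gap requires a further ingredient that the sketch does not supply: either curves with many points for (essentially) \emph{every} genus with the ratio $N/g\gtrsim\sqrt q/2$ (results in the spirit of ``curves of every genus with many points,'' where the achievable constant must be checked against the $3/\sqrt q$ budget), or a tower/interpolation argument with genus gaps bounded by a factor close to $1$, neither of which follows from the $N_i/g_i\to\sqrt q-1$ asymptotics alone. Two smaller issues: the fallback for small $k$ cannot be a repetition code (a $q$-ary repetition code has message length $1$; repeating each symbol $\sqrt q$ times gives relative distance only $1/k$) -- a Reed--Solomon code of length $k\sqrt q\le q$ handles $k\le\sqrt q$ instead; and you should say a word about why a divisor $G$ of the required degree with support disjoint from the evaluation points exists (a place of higher degree works, as you indicate parenthetically). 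The Riemann--Roch and degree computations themselves, and the final arithmetic showing $1/\sqrt q+1/(\sqrt q-1)<3/\sqrt q$ for $\sqrt q\ge 7$, are fine \emph{conditional} on the genus bound you asserted.
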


Finally, we recall here a well-known fact about code concatenation (for example see Chapter 10.1 of \cite{GRS19}).

\begin{fact}\label{fact:code}
Let $k,\ell_1,\ell_2,c,q\in\mathbb N$ and let $\delta_1,\delta_2\in[0,1]$. Suppose we are given a \code{k,\ell_1,\delta_1}{q^c} outer code $C_1$ and a \code{c,\ell_2,\delta_2}{q} inner code $C_2$. Then the concatenation of the two codes $C_1\circ C_2$ is a \code{k,\ell_1\cdot \ell_2,\delta_1\cdot \delta_2}{q} code.
\end{fact}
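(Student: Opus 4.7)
The plan is to verify the three standard parameters of the concatenated code $C_1 \circ C_2$ (block length, message length, and relative distance) by unpacking the definition of concatenation.

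First, I would formally describe the concatenation. A message for $C_1 \circ C_2$ is an element of $(\mathbb{F}_q^c)^k$, which we identify with $\Sigma_1^k$ where $\Sigma_1$ has size $q^c$. We first apply the outer encoder and obtain a codeword $c_1 \in C_1 \subseteq \Sigma_1^{\ell_1}$. Since each symbol of $\Sigma_1$ can be identified with a message of $C_2$ (both live in a set of size $q^c$), we apply the inner encoder $C_2$ to each of the $\ell_1$ coordinates of $c_1$, yielding $\ell_1$ inner codewords each of length $\ell_2$ over $\mathbb{F}_q$. Their concatenation is a string in $\mathbb{F}_q^{\ell_1 \ell_2}$, giving block length $\ell_1 \ell_2$ and alphabet size $q$. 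The message length is $k$ because the composition of the outer and inner encoders is injective (both encoders individually are injective on messages, as follows from their relative distance being positive).

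Next, I would establish the distance bound. Take two distinct messages $m \neq m'$ and let $c_1, c_1' \in C_1$ be their outer encodings. By the relative distance of $C_1$, the set $S := \{i \in [\ell_1] : (c_1)_i \neq (c_1')_i\}$ has size at least $\delta_1 \ell_1$. For each index $i \in S$, the inner encoder applied to $(c_1)_i$ and $(c_1')_i$ produces two distinct codewords of $C_2$ (injectivity), and so by the relative distance of $C_2$ these inner codewords disagree in at least $\delta_2 \ell_2$ coordinates. Summing over the blocks $i \in S$, the total number of disagreeing coordinates of the concatenated codewords is at least $|S| \cdot \delta_2 \ell_2 \geq \delta_1 \delta_2 \cdot \ell_1 \ell_2$, which gives relative distance at least $\delta_1 \delta_2$ as claimed.

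There is no serious obstacle here; the only subtle point is the identification of $\Sigma_1$ with $\mathbb{F}_q^c$, which is valid because $|\Sigma_1| = q^c$ and we are only using set-theoretic (not algebraic) structure of the alphabets in this statement. The remaining arguments are a straightforward accounting of block lengths, message lengths, and a coordinate-by-coordinate lower bound on disagreements obtained by multiplying the two distance parameters.
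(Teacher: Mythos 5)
Your proof is correct: the accounting of block length and message length, and the distance bound obtained by multiplying the at-least-$\delta_1\ell_1$ disagreeing outer blocks by the at-least-$\delta_2\ell_2$ disagreements per inner block, is exactly the standard argument. The paper itself states this fact without proof, citing it as well known (Chapter 10.1 of the referenced coding-theory text), and your write-up is precisely that textbook proof, so there is nothing to reconcile.
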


\section{Hit and Miss Hash Families}

In this section, we show the construction of a certain class of hash families which will subsequently be used to design a deterministic algorithm for computing an $(L,f)$-\RP with a small \CV.
Below we define the notion of Hit and Miss hash families.

\begin{definition}[Hit and Miss Hash Family] \label{def:HM}For every $N,a,b,\ell,q\in \mathbb{N}$ such that $b\le a$, we say that $\H:=\{h_i:[N]\to [q]\mid i\in[\ell]\}$ is a \code{N,a,b,\ell}{q}{}-Hit and Miss ($\mathsf{HM}$) hash family\footnote{The reasoning behind naming them as \emph{Hit and Miss} Hash Family is as follows. Fix $A$ and $B$.  There exists a hash function $h$ in the family and a subset $S$ of $[q]$ of size at most $b$  such that $S$ completely \emph{hits} $h(B)$ and completely \emph{misses} $h(A)$. All other interpretations of the name ``Hit and Miss'' Hash Family are for the entertainment of the reader.} if for every pair of mutually disjoint subsets $A,B$ of $[N]$, where $|A|\le a$ and $|B|\le b$, there exists some $i\in[\ell]$ such that:
\begin{align}
 \forall (x,y)\in A\times B,\ h_i(x)\neq h_i(y).\label{eq:HM}
\end{align}

In the cases when $N,a,b$ is clear from the context, we simply refer to $\H$ as a \code{\ell}{q}{}-\HM{}.  Moreover, the computation time of a \code{\ell}{q}{}-\HM{} is defined to be the time needed to output the $\ell\times N$ matrix with entries in $[q]$ whose $(i,x)^{\text{th}}$ entry is simply $h_{i}(x)$ (for $h_i\in\H$).
\end{definition}

We begin our discussion by noting that there exist  a naive \code{1}{N}{}-\HM{} and a naive \code{\binom{N}{\le b}}{2}{}-\HM{}. Our goal is to construct a \code{\ell}{2}{}-\HM with the smallest possible value for $\ell$, as this is important for the  applications in the future sections. Towards this goal we prove the theorem below.

\begin{theorem}[Small Boolean Hit and Miss Hash Family]\label{thm:HM}
Given  integers $N,a,b$ such that $b\le a$, there is a deterministic algorithm $\mathcal{A}$ for computing an     \code{N,a,b,\ell}{2}{}-$\mathsf{HM}$ hash family where:
  \[
    \ell\le \left\{\begin{array}{lr}
       (\alpha c ab)^{b+1}, & \text{if }a\ge N^{\nicefrac{1}{c}},\ \text{for some constant }c\in\mathbb{N},\\
        (\alpha ab)^{b+2}\cdot \log N, & \text{if } a=N^{o(1)}\text{ and }b=\Omega(\log N),\\
                (\alpha ab)^{b+2}\cdot \log N, & \text{if } a\le \log N,\\
       (\alpha ab\log N)^{b+1}, & \text{otherwise,}
        \end{array}\right.
  \]
for some small universal constant $\alpha\in\mathbb N$.  
Moreover, the running time of $\mathcal{A}$ denoted by $T(\mathcal{A})$ is,
$$
T(\mathcal{A})=
\left\{\begin{array}{lr}
 N^{1+o(1)}\cdot \ell&\text{if } a=N^{o(1)}\text{ and }b=\Omega(\log N),\\
N\cdot (\log N)^{O(1)}\cdot \ell, & \text{otherwise.}
        \end{array}\right.
$$
\end{theorem}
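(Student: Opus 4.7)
The plan is to build each Boolean $\mathsf{HM}$ family from an error correcting code in three conceptual steps: (i) a general ``code-to-$\mathsf{HM}$'' reduction that turns any code with sufficient relative distance into a large-alphabet $\mathsf{HM}$; (ii) an alphabet reduction that turns a large-alphabet $\mathsf{HM}$ into a Boolean one at the cost of a $q^{b}$-factor blowup; and (iii) a case-by-case instantiation using Reed-Solomon codes (Theorem~\ref{thm:rs}) in the large-$a$ and moderate regimes, and Algebraic-Geometric codes concatenated with Reed-Solomon codes (Theorem~\ref{thm:ag} and Fact~\ref{fact:code}) in the small-$a$ regimes.

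For step~(i), given a \code{k,\ell,\delta}{q} code $C$ with $q^{k}\ge N$ and $\delta>1-1/(ab)$, I will fix any injection $[N]\hookrightarrow C$ and set $h_{i}(x):=C(x)_{i}$ for each coordinate $i\in[\ell]$. For disjoint $A,B\subseteq[N]$ with $|A|\le a$ and $|B|\le b$, every pair $(x,y)\in A\times B$ agrees on strictly fewer than $\ell/(ab)$ coordinates; summed over the at most $ab$ pairs this produces strictly fewer than $\ell$ ``bad'' coordinates, so some coordinate $i$ satisfies $h_{i}(x)\ne h_{i}(y)$ for all $(x,y)\in A\times B$, making $\{h_{i}\}_{i\in[\ell]}$ a \code{N,a,b,\ell}{q}{}-$\mathsf{HM}$. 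For step~(ii), starting from any \code{N,a,b,\ell}{q}{}-$\mathsf{HM}$, I will form the Boolean family $\{h'_{i,S}\}$ indexed by $i\in[\ell]$ and $S\subseteq[q]$ with $1\le|S|\le b$, where $h'_{i,S}(x):=\mathbf{1}[h_{i}(x)\in S]$; picking the $i$ separating a given pair $(A,B)$ in the original $\mathsf{HM}$ together with the choice $S:=h_{i}(B)$ yields a separator in the new family, which therefore has size $\ell\cdot\binom{q}{\le b}=O(\ell\cdot q^{b})$ (the bound holds in every regime below since $q=\Omega(b)$ everywhere).

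For case~1 ($a\ge N^{1/c}$), I instantiate step~(i) with a Reed-Solomon code of message length $k=c$ and alphabet size $q$ a prime power of order $\Theta(abc)$; the bounds $q^{k}\ge a^{c}\ge N$ and $(k-1)/q<1/(ab)$ are then immediate, and the Boolean family has size $\ell\cdot q^{b}=q^{b+1}=O((abc)^{b+1})$. For case~4 (the ``otherwise'' regime) I use RS with $q=\Theta(ab\log N)$ and $k=\lceil\log_{q}N\rceil$, giving size $q^{b+1}=O((ab\log N)^{b+1})$. For cases~2 and~3 (small $a$), I concatenate an outer AG code over $Q=P^{C}$ for a fixed prime square $P\ge 49$ and $C=\Theta(\log(ab))$ (chosen so $\sqrt{Q}\ge 6ab$) with an inner RS code over alphabet $q=\sqrt{Q}=\Theta(ab)$ of inner message length $2$. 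Fact~\ref{fact:code} then yields concatenated relative distance $\ge(1-3/\sqrt{Q})(1-1/q)>1-1/(ab)$ and block length $k_{1}\cdot Q=O((ab)^{2}\log N/\log(ab))$ with $k_{1}=\lceil\log_{Q}N\rceil$, so alphabet reduction produces a Boolean family of size $O((ab)^{b+2}\log N)$, matching both case~2 and case~3.

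The main obstacle will be matching the stated running-time bounds. For the RS-only cases each codeword symbol is a single polynomial evaluation, so computing all $N$ codewords and all $\ell\cdot q^{b}$ indicator evaluations costs $N\cdot(\log N)^{O(1)}\cdot\ell$. For the AG-based cases the explicit construction of the Garcia-Stichtenoth tower and the associated Riemann-Roch basis underlying Theorem~\ref{thm:ag} is the dominant cost; standard results give $N^{1+o(1)}$ construction time, matching the bound announced in case~2, while in case~3 the AG alphabet $Q=(\log N)^{O(1)}$ and block length are both poly-logarithmic, so the AG construction runs in $N\cdot(\log N)^{O(1)}$ total time. Additional routine care is needed to handle integrality and prime-power constraints (such as $k\le q$ for RS and $Q=P^{C}$ for AG) by slight rounding, which only affects hidden constants and is absorbed into the universal constant $\alpha$.
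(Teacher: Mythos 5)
Your proposal is correct and follows essentially the same route as the paper: a code-to-$\mathsf{HM}$ reduction (the paper's Proposition~\ref{prop:codehash}, with your counting argument in place of the union bound), the alphabet-reduction step of Lemma~\ref{lem:alphabet}, and instantiation via Reed--Solomon codes in the large-$a$ and ``otherwise'' regimes and an AG-outer/RS-inner concatenation (Lemma~\ref{lem:concat}) in the small-$a$ regimes. Your parameter choices (RS message length $c$ in case~1, fixed base prime square $49$ with exponent $\Theta(\log(ab))$ for the AG alphabet) differ only cosmetically from the paper's and yield the same covering and time bounds up to the universal constant $\alpha$.
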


Note that the above theorem significantly improves on the naive \code{\binom{N}{\le b}}{2}{}-\HM whenever $ab\ll N$.
Before we formally prove the above theorem, let us briefly outline our proof strategy. Our approach is to start from the naive \code{1}{N}{}-\HM{} and first construct a \code{\ell}{q}{}-\HM{} (for some $q,\ell\in\mathbb N$) where we try to minimize the quantity $\binom{q}{\le b}\cdot \ell$ (which is roughly $q^b\cdot \ell$). The reason for minimizing $q^b\cdot \ell$ is because we show below how to start from a
\code{\ell}{q}{}-\HM{} and trade off the size of the range of the hash function for the size of the hash family, in order to obtain an \code{\ell\cdot \binom{q}{\le b}}{2}{}-\HM{}.

\begin{lemma}[Alphabet Reduction]\label{lem:alphabet}
Given integers $N,a,b,q,\ell$ such that $b\le a$, and  a \code{N,a,b,\ell}{q}{}-\HM{} $\H$, there exists a \code{N,a,b,\ell\cdot \binom{q}{\le b}}{2}{}-\HM{} $\H'$ which can be computed in time $O(q^b\cdot T_{\H})$, where $T_{\H}$ is the time needed to compute $\H$. 
\end{lemma}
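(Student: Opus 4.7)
The plan is to take each hash function $h_i : [N] \to [q]$ in the given family $\mathcal{H}$ and ``split'' it into a collection of Boolean hash functions indexed by subsets $S \subseteq [q]$ of size at most $b$. Concretely, for every $h_i \in \mathcal{H}$ and every $S \in \binom{[q]}{\le b}$, define $h_{i,S} : [N] \to \{0,1\}$ by
\[
h_{i,S}(x) \;=\; \mathbf{1}\bigl[h_i(x) \in S\bigr],
\]
and let $\mathcal{H}'$ be the collection of all such $h_{i,S}$. By construction, $|\mathcal{H}'| = \ell \cdot \binom{q}{\le b}$, matching the advertised bound.

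Next I would verify the $\mathsf{HM}$ property of $\mathcal{H}'$. Fix disjoint subsets $A, B \subseteq [N]$ with $|A| \le a$ and $|B| \le b$. By the $\mathsf{HM}$ property of $\mathcal{H}$, there is some index $i \in [\ell]$ for which $h_i(x) \neq h_i(y)$ for all $(x,y) \in A \times B$. The natural choice is to set $S := h_i(B) \subseteq [q]$, which has size at most $|B| \le b$ and is therefore a valid index in $\binom{[q]}{\le b}$. Then $h_{i,S}(y) = 1$ for every $y \in B$ (since $h_i(y) \in S$ by definition), while for every $x \in A$ we have $h_i(x) \notin h_i(B) = S$ (by the no-collision guarantee for $h_i$), so $h_{i,S}(x) = 0$. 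In particular, $h_{i,S}(x) \neq h_{i,S}(y)$ for every $(x,y) \in A \times B$, which is exactly condition~\eqref{eq:HM}.

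Finally, for the running time: computing $\mathcal{H}$ takes time $T_{\mathcal{H}}$ and produces the $\ell \times N$ matrix of values $h_i(x)$. Given this matrix, each $h_{i,S}$ is obtained in $O(N)$ time by a single scan replacing $h_i(x)$ with the indicator of $h_i(x) \in S$; since $T_{\mathcal{H}} \ge \ell \cdot N$ and $\binom{q}{\le b} = O(q^b)$, the total additional work is $O\bigl(q^b \cdot \ell \cdot N\bigr) = O(q^b \cdot T_{\mathcal{H}})$, as claimed.

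The argument is essentially a clean bookkeeping reduction, so no step is truly an obstacle; the one place to be a little careful is the choice of the distinguishing subset $S$. One might be tempted to range $S$ only over singletons (giving a family of size $\ell \cdot q$), but this would fail when $B$ has more than one element mapping to distinct values under $h_i$, since a single coordinate cannot separate all of $A$ from all of $B$ simultaneously. Taking $S = h_i(B)$ is precisely what allows a \emph{single} Boolean function in the enlarged family to witness the separation of $A$ from $B$.
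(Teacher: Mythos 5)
Your proof is correct and follows essentially the same route as the paper's: index the Boolean family by pairs $(i,S)$ with $S\in\binom{[q]}{\le b}$, and for a given disjoint pair $(A,B)$ pick the non-colliding $h_i$ and take $S=h_i(B)$ as the separating witness (your indicator convention just swaps the roles of $0$ and $1$, which is immaterial). The size and running-time accounting also matches the paper's, so there is nothing to add.
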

\begin{proof}
Given $\H:=\{h_i:[N]\to[q]\mid i\in[\ell]\}$, we define $\H':=\{h'_{i,S}:[N]\to\{0,1\}\mid i\in~\ell, S\subseteq~[q], |S|\le b \}$ as follows:
$$
\forall (i,S)\in [\ell]\times \binom{[q]}{\le b}, \forall x\in[N],\ \  h'_{i,S}(x)=\begin{cases}
0\text{ if }h_i(x)\in S,\\
1\text{ otherwise.}
\end{cases}
$$

It is clear that there are $\ell\cdot \binom{q}{\le b}$ many hash functions in $\H'$, and therefore in order to show that $\H'$ is a \code{N,a,b,\ell\cdot \binom{q}{\le b}}{2}{}-\HM{}, it suffices to show that \eqref{eq:HM} holds. To see this fix any disjoint sets $A,B \subseteq [N]$ such that $|A|\leq a$ and $|B|\leq b$. Since $\H$ is a\code{N,a,b,\ell}{q}{}-\HM{}, there exists some $i^*\in[\ell]$ such that
\begin{align}\label{eqbool}
\forall (x,y)\in A\times B,\ \text{we have }h_{i^*}(x)\neq h_{i^*}(y).
\end{align}   
Consider the subset $S^*:=\{h_{i^*}(y)\mid y\in B\}$. Clearly $|S^*|\le |B|\le b$. Therefore we have that for every $y\in B$, $h'_{i^*,S^*}(y)=0$. On the other hand from \eqref{eqbool}, we have that for all $x\in A$, $h_{i^*}(x)\notin S^*$. Therefore, for every $x\in A$, $h'_{i^*,S^*}(x)=1$.   Thus we have established \eqref{eq:HM}.
The computation time of $\H'$ follows from noting that $\binom{q}{\le b}\le (1+q)^b$.
\end{proof}

As a simple demonstration of how we will use the above lemma, notice that if we combine the above lemma with the naive \code{1}{N}{}-\HM, then we obtain the \code{\binom{N}{\le b}}{2}{}-\HM. 

Following the proof strategy we mentioned before the statement of Lemma~\ref{lem:alphabet}, we focus now on constructing non-trivial \code{\ell}{q}{}-\HM, with the goal of minimizing the quantity $\binom{q}{\le b}\cdot \ell$. As a warm up, we show below a simple construction that achieves very good parameters.

\begin{lemma}\label{cl:small-perfect}
Given integers $N,a,b$ such that $b\le a$, there exists a \code{N,a,b,{1+ab\log N}}{O(ab(\log N)^2)}{}-\HM{}. 
\end{lemma}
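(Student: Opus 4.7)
The plan is to prove existence via a direct probabilistic argument. I will set $q := 4ab\lceil\log N\rceil^{2}$, which matches the required bound $q=O(ab(\log N)^{2})$, and $\ell := 1+ab\log N$, and then sample $h_{1},\ldots,h_{\ell}:[N]\to [q]$ independently and uniformly at random. The goal is to show that with positive probability the resulting family satisfies the Hit and Miss condition of Definition~\ref{def:HM}; the claim then follows immediately.

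First, I will fix an arbitrary disjoint pair $A,B\subseteq[N]$ with $|A|\le a$, $|B|\le b$ and analyze a single random $h_{i}$ on this pair. Because $\Pr[h_{i}(x)=h_{i}(y)]=1/q$ for every $(x,y)\in A\times B$, a union bound over the at most $ab$ such pairs shows that $h_{i}$ has a collision on $A\times B$ with probability at most $ab/q = 1/(4\lceil\log N\rceil^{2})$. By independence of the $h_{i}$'s, the probability that \emph{every} $h_{i}$ is bad for this particular $(A,B)$ is at most $(4\lceil\log N\rceil^{2})^{-\ell}$.

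Next, a union bound over all disjoint pairs of the prescribed sizes — crudely estimated by $\binom{N}{\le a}\binom{N}{\le b}\le 4N^{a+b}$ — gives an overall failure probability of at most $4N^{a+b}\cdot(4\lceil\log N\rceil^{2})^{-\ell}$. Taking base-$2$ logarithms, it suffices to check $\ell\cdot\log(4\lceil\log N\rceil^{2})>(a+b)\log N+2$. Combining the trivial bound $\log(4\lceil\log N\rceil^{2})\ge 2$ (valid for $N\ge 2$) with the elementary inequality $2ab\ge a+b$, which holds whenever $a,b\ge 1$ since $(2a-1)(2b-1)\ge 1$, and the choice $\ell\ge 1+ab\log N$, this inequality is immediate; the few remaining very small values of $N$ admit trivial direct constructions. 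Hence a \HM with the claimed parameters exists.

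The main obstacle to watch is the calibration of $q$. The naive choice $q=\Theta(ab)$ already forces the single-function collision probability below $1/2$, which by the same template only guarantees a family of size $\Theta((a+b)\log N)$; this is slightly \emph{larger} than $1+ab\log N$ in the corner case $b=1$, where $a+b\approx ab$. The remedy is to inflate $q$ by a $(\log N)^{2}$ factor, which replaces the $\log 2$ in the denominator of the required $\ell$ by roughly $2\log\log N$, buying exactly the slack needed to match $\ell=1+ab\log N$ uniformly in all admissible $(a,b)$. No deeper structural ideas are required at this stage — the stronger, algebraic constructions based on Reed--Solomon and Algebraic-Geometric codes come into play only for the proof of Theorem~\ref{thm:HM}, where the dependence on $b$ must be moved out of the exponent.
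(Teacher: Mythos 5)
Your argument is correct as far as the stated existence claim goes: with $q=4ab\lceil\log N\rceil^{2}$ the per-function collision probability on a fixed disjoint pair $(A,B)$ is at most $ab/q=1/(4\lceil\log N\rceil^{2})$, independence across the $\ell=1+ab\log N$ samples plus a union bound over at most $4N^{a+b}$ pairs gives failure probability below $1$ for every $N\ge 3$ (the combination of your three stated bounds literally yields only ``$\ge$'', but strictness is restored for $N\ge3$ since then $\log(4\lceil\log N\rceil^{2})\ge 4$, and you rightly dismiss the remaining tiny-$N$ cases by a trivial construction). However, this is a genuinely different route from the paper's. The paper builds the family \emph{explicitly}: $h_p(x)=x\bmod p$ over the first $1+ab\log N$ primes, and the hit-and-miss property follows because $\prod_{x\in A,y\in B}|y-x|\le N^{ab}$ has at most $ab\log N$ distinct prime divisors, so some listed prime divides none of the differences; the alphabet bound $O(ab(\log N)^2)$ is just the size of the $(1+ab\log N)$-th prime. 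What each approach buys: the paper's construction is deterministic and computable in essentially $O(N\ell)$ time, which is the currency that matters in this derandomization context (Definition~\ref{def:HM} attaches a computation time to an \HM, and the neighboring Corollaries~\ref{cor:RS} and~\ref{cor:hash} all state theirs), whereas your probabilistic argument proves the literal existential statement but yields no efficient deterministic algorithm — brute-force derandomization over all pairs $(A,B)$ would cost $N^{\Theta(a+b)}$ time. On the other hand, your calculation is more elementary and makes the parameter calibration transparent, and your closing observation about why $q$ must carry the extra $(\log N)^{2}$ factor mirrors exactly the loss the paper incurs through the size of the $\ell$-th prime. Since this lemma is only a warm-up and Theorem~\ref{thm:HM} is ultimately derived from the Reed--Solomon and concatenated code families, the non-constructivity does not damage anything downstream, but you should be aware that the paper's version of the lemma is stronger in the sense that is actually used: it is an explicit construction, not just an existence proof.
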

\begin{proof}
The family $\mathcal{H}$ we consider consists of all functions $h_p(x)=x (~\mod ~p)$ for the first $1+ab\log N$ \emph{prime numbers} $p$. Note that the $(1+ab\log N)^{\text{th}}$ prime number is at most $1+2ab\log N(1+\log a+\log b + \log\log N)= O(ab(\log N)^2)$.   Thus, in order to show that $\H$ is a \code{N,a,b,1+ab\log N}{O(ab(\log N)^2)}{}-\HM{}, we just need to show \eqref{eq:HM}.
Fix two disjoint sets $A,B \subseteq [N]$ such that $|A|\leq a$ and $|B|\leq b$.  Consider the following quantity.

$$\alpha_{A,B}:=\prod_{x \in A, y \in B}|y-x|.$$

Note that since $|y-x|\in[0,N]$ for every $(x,y)\in A\times B$,  we have that $\alpha_{A,B}\le N^{ab}$. It is known that the product of the first $m$ primes (called primorial function) is upper bounded $e^{m(1+o(1))}$. Let $\alpha'\in[1, \alpha_{A,B}]$ be the number with the most number of prime factors. It is clear then that the number of prime factors of $\alpha'$ is the largest $m$, for which we have $e^{m(1+o(1))}\le \alpha'\le N^{ab}$. This implies $m\le ab\log N$. 
 Thus, $\alpha_{A,B}$ has at most $ab\log N$ distinct prime factors. Therefore, given any set of $1+ab\log N$ prime numbers there must exist a prime that does not divide $\alpha_{A,B}$. On the other hand note that
 for  $(x,y)\in A\times B$ and a prime $p$, we have that $x (~\mod~ p) = y (~\mod~ p)$ implies that $p$ divides $\alpha_{A,B}$. Thus, there must exist  a prime in the first $1+ab\log N$ prime numbers for which we have $x (~\mod~ p) \neq  y (~\mod~ p)$ for all  $(x,y)\in A\times B$.
\end{proof}

We remark the above proof strategy of using (modulo) prime numbers has been used many times in literature, for example \cite{alon1996derandomization}. Next, we show a systematic way to construct a \HM\ from error correcting codes and then use specific codes to improve on the parameters of the above lemma.

\begin{proposition}\label{prop:codehash}
Let $N,a,b,\ell\in\mathbb N$ and $\delta\in[0,1]$ such that $\delta> 1-\frac{1}{ab}$. 
Then, every \code{\log_q N,\ell,\delta}{q} code can be seen as a \code{N,a,b,\ell}{q}{}-\HM. 
\end{proposition}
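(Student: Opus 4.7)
The plan is to view the code $C$ itself as the hash family in the most natural way and then use a simple union-bound counting argument based on the distance property. Since the message length is $\log_q N$, the code $C$ has at least $q^{\log_q N}=N$ codewords, so we can fix an injection that assigns to each $x\in[N]$ a distinct codeword $C(x)\in[q]^{\ell}$. For every $i\in[\ell]$, I define the hash function $h_i:[N]\to[q]$ by $h_i(x):=C(x)_i$, the $i$-th coordinate of the codeword corresponding to $x$. This yields a family $\H=\{h_i\mid i\in[\ell]\}$ of exactly $\ell$ hash functions from $[N]$ to $[q]$, so the only remaining task is to verify the Hit and Miss property \eqref{eq:HM}.

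To verify it, I would fix an arbitrary pair of disjoint sets $A,B\subseteq[N]$ with $|A|\le a$ and $|B|\le b$, and for each pair $(x,y)\in A\times B$ define the ``bad'' set
\[
\mathrm{Coll}(x,y):=\{\,i\in[\ell]\mid h_i(x)=h_i(y)\,\}.
\]
Because $A$ and $B$ are disjoint, we have $x\neq y$, so $C(x)$ and $C(y)$ are distinct codewords and the distance property gives $\distc(C(x),C(y))\ge\delta$. Translating this into coordinate-agreement: the two codewords agree in at most $(1-\delta)\ell$ coordinates, i.e., $|\mathrm{Coll}(x,y)|\le (1-\delta)\ell$.

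Next, I would take the union of the bad sets over all pairs:
\[
\Bigl|\bigcup_{(x,y)\in A\times B}\mathrm{Coll}(x,y)\Bigr|\le |A|\cdot|B|\cdot(1-\delta)\ell\le ab\,(1-\delta)\,\ell<\ell,
\]
where the last strict inequality uses the hypothesis $\delta>1-\tfrac{1}{ab}$. Hence there exists some index $i^{*}\in[\ell]$ lying outside every $\mathrm{Coll}(x,y)$, which by definition means $h_{i^{*}}(x)\neq h_{i^{*}}(y)$ for every $(x,y)\in A\times B$. That is exactly condition \eqref{eq:HM}, so $\H$ is a \code{N,a,b,\ell}{q}{}-$\mathsf{HM}$.

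Honestly, the statement is mainly a conceptual bridge rather than a technically deep result, so I do not anticipate a real obstacle: the only thing to be careful about is that we need \emph{strict} inequality $\delta>1-1/(ab)$ (so the union is strictly smaller than $\ell$), which is precisely the hypothesis given. The actual force of the proposition lies in what it enables downstream: plugging in Reed--Solomon codes (Theorem~\ref{thm:rs}) or Algebraic--Geometric codes (Theorem~\ref{thm:ag}) to instantiate $q$ and $\ell$, and then applying the alphabet-reduction Lemma~\ref{lem:alphabet} to bring the range down to $\{0,1\}$, which will then drive the proof of Theorem~\ref{thm:HM} (and in turn Theorem~\ref{thm:general-covering}).
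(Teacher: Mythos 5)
Your proposal is correct and follows essentially the same route as the paper: identify $h_i(x)=C(x)_i$ for a canonical labeling of codewords, bound the collisions per pair by $(1-\delta)\ell$ via the distance property, and union-bound over the at most $ab$ pairs, using $\delta>1-\tfrac{1}{ab}$ to leave a surviving index $i^*$. The paper phrases the union bound probabilistically over a uniformly random $i\sim[\ell]$, but that is merely a cosmetic difference from your counting formulation.
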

\begin{proof}
Given a \code{\log_q N,\ell,\delta}{q} code $C$, where for every $i\in[N]$, $C(i)$ denotes the $i^{\text{th}}$ codeword (under some canonical labeling of the codewords of $C$), we define the hash family $\H:=\{h_i:[N]\to q\mid i\in[\ell]\}$ as follows:
$$
\forall i\in[\ell],\ \forall x\in[N],\ h_i(x)=C(x)_i,
$$
where $C(x)_i$ denotes the $i^{\text{th}}$ coordinate of $C(x)$ (i.e., the $i^{\text{th}}$ coordinate of the $x^{\text{th}}$ codeword). To see that $\H$ is a \code{N,a,b,\ell}{q}{}-\HM, we need to show \eqref{eq:HM}. Fix disjoint $A,B\subseteq [N]$ where $|A|\le a$ and $|B|\le b$.  For every $(x,y)\in A\times B$ we have:
\begin{align}
\Pr_{i\sim[\ell]}\left[h_i(x)\neq h_i(y)\right]=\Delta(x,y)\ge \delta.\label{eqmin}
\end{align}

By a simple union bound we have that, 
\begin{align}
\Pr_{i\sim[\ell]}\left[\forall (x,y)\in A\times B,\ h_i(x)\neq h_i(y)\right]\ge 1-ab\cdot (1-\delta).\label{equnion}
\end{align}
Finally, \eqref{eq:HM} follows by noting that $\delta> 1-\frac{1}{ab}$.
\end{proof}

By a direct application of  the parameters of Reed-Solomon codes (Theorem~\ref{thm:rs})  to the above proposition we obtain the following.

\begin{corollary}[Reed-Solomon Hash Family]\label{cor:RS}
Given integers $N,a,b$ such that $b\le a$, there exists a \code{N,a,b,O\left(\frac{ab\log N}{\log a}\right)}{O\left(\frac{ab\log N}{\log a}\right)}{}-\HM{}.  Moreover, the computation time of the \HM is $O\left(abN(\log N)^2\right)$.
\end{corollary}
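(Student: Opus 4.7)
The plan is to instantiate Proposition~\ref{prop:codehash} with a Reed-Solomon code whose parameters are tuned so that both the alphabet size $q$ and the block length $\ell$ come out to $O(ab\log N/\log a)$. First, by Bertrand's postulate I would pick a prime power $q$ lying in $[C\cdot ab\log N/\log a,\ 2C\cdot ab\log N/\log a]$ for a sufficiently large universal constant $C$, and set the message length to $k:=\lceil\log_q N\rceil$. Theorem~\ref{thm:rs} then supplies a \code{k,q,1-(k-1)/q}{q} Reed-Solomon code; the precondition $k\le q$ holds trivially since $k=O(\log N/\log a)$ is dwarfed by $q$ in our range.

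The key verification is the relative-distance hypothesis required by Proposition~\ref{prop:codehash}, namely $1-(k-1)/q>1-1/(ab)$, equivalently $q>ab(k-1)$. Plugging in the upper bound $k-1\le\log N/\log q$ and using that in the relevant regime $\log q=\Theta(\log a)$ (absorbing the degenerate case of tiny $a$ into constants), this reduces to $C\cdot ab\log N/\log a>ab\log N/\log q$, which is immediate once $C$ is chosen sufficiently large. Proposition~\ref{prop:codehash} then directly yields a \code{N,a,b,q}{q}-\HM\ with $\ell=q=O(ab\log N/\log a)$, matching the claim.

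For the running time, I would represent the family as the $\ell\times N$ evaluation matrix of the Reed-Solomon code. To produce the $x$-th column, I interpret $x\in[N]$ via base-$q$ expansion as the coefficient vector of a polynomial of degree $<k$ over $\mathbb{F}_q$ and evaluate at the $q$ Reed-Solomon points using Horner's rule; this takes $O(qk)$ field operations per column, each operation of bit-cost $\polylog(q)$. Totalling over the $N$ columns yields $O(N\cdot qk\cdot \polylog q)=O\bigl(abN(\log N)^2/(\log a)^2\bigr)\cdot\polylog(\log N)$, which sits within the stated $O(abN(\log N)^2)$ bound for all $a\ge 2$.

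The main obstacle I anticipate is simply edge-case bookkeeping: one must simultaneously enforce $k\le q$, $q>ab(k-1)$, and $q=O(ab\log N/\log a)$ across the full regime $1\le b\le a$, and handle the degenerate case $a=1$ (where the problem is essentially trivial) separately so that $\log a$ in the denominator never causes trouble. A clean treatment requires fixing one explicit constant $C$ and verifying the chain of inequalities once and for all; the coding-theoretic content itself is entirely packaged inside Proposition~\ref{prop:codehash} and Theorem~\ref{thm:rs}.
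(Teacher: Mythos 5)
Your proposal is correct and follows essentially the same route as the paper: choose a prime (power) $q=\Theta\!\left(\frac{ab\log N}{\log a}\right)$, instantiate Theorem~\ref{thm:rs}, check the relative distance exceeds $1-\frac{1}{ab}$ via $q\ge a$ (so $\log q\ge\log a$), and apply Proposition~\ref{prop:codehash}, with the evaluation-matrix computation giving the stated time bound. The only blemishes are cosmetic: the claim $\log q=\Theta(\log a)$ is not true in general (only $\log q\ge\log a$ is, which is the direction actually used), and the simplification of $\polylog q$ to $\polylog(\log N)$ is off when $a,b$ are polynomial in $N$, but in that regime the $(\log a)^2$ factor absorbs it, so the $O(abN(\log N)^2)$ bound still holds.
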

\begin{proof}
Let $q$ be the smallest prime greater than $\frac{ab\log N}{\log a}$  (note that $q\in\left(\frac{ab\log N}{\log a},\frac{2ab\log N}{\log a}\right)$). Let $C$ be the  \code{\log_q N, q, 1-\frac{\log_q N}{q}}{q} code guaranteed from Theorem~\ref{thm:rs}. From Proposition~\ref{prop:codehash} we can think of $C$ as a \code{N,a,b,q}{q}{}-\HM since 
$$\Delta(C)=1-\frac{\log N}{q\log q}> 1-\frac{\log N\log a}{ab\log N\log a}=1-\frac{1}{ab}.$$ 
By noting that $q< \frac{2ab\log N}{\log a}$, we may say that $C$ is a \code{N,a,b,O\left(\frac{ab\log N}{\log a}\right)}{O\left(\frac{ab\log N}{\log a}\right)}{}-\HM{}.

It is known that the generator matrix of Reed Solomon codes mentioned in Theorem~\ref{thm:rs} can be constructed in near linear time of the size of the generator matrix   \cite{RS60}. Once we are given the generator matrix of $C$, outputting any codeword  can be done in  $O(q\log\log N)$ time using Fast Fourier Transform. Therefore the computation of the corresponding \HM can be done in time $O(qN\log\log N)=O(abN\log N\log\log N)$.
\end{proof}

In fact, we obtain  a \code{\frac{1+ab\log N}{\log a+ \log b+\log\log N}}{\frac{1+ab\log N}{\log a+ \log b+\log\log N}}{}-\HM from Reed-Solomon codes but chose to write a less cumbersome version in the corollary statement. Note that while the size of the Hash families of Lemma~\ref{cl:small-perfect} and the above corollary are the same when $a\ll N^{o(1)}$, but even in that case we save a $\log N$ factor in the alphabet size of the hash function.

In order to explore further savings in the alphabet size of the hash function, we apply  the parameters of Algebraic-Geometric codes (Theorem~\ref{thm:ag}) to Proposition~\ref{prop:codehash}  and  obtain the following.

\begin{corollary}[Algebraic-Geometric Hash Family]\label{cor:AG}
Given integers $N,a,b$ such that $b\le a$, there exists a \code{O(ab\log N)}{O(a^2b^2)}{}-\HM.  Moreover, the computation time of the \HM is $O\left((ab\log N)^3 + Nab\log^3 N\right)$.
\end{corollary}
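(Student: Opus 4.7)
The plan is to mirror Corollary~\ref{cor:RS} but replace the Reed--Solomon ingredient by the Algebraic-Geometric codes of Theorem~\ref{thm:ag}. Proposition~\ref{prop:codehash} requires an underlying code whose relative distance strictly exceeds $1-\tfrac{1}{ab}$, so the first step is to choose the alphabet size $q$ accordingly. Since Theorem~\ref{thm:ag} produces AG codes of relative distance $1-\tfrac{3}{\sqrt{q}}$, it suffices to have $\sqrt{q}>3ab$. Setting $c=1$ in Theorem~\ref{thm:ag} and applying Bertrand's postulate, I pick a prime $r\in(\max\{7,3ab\},\,6ab]$; then $p:=r^2$ is a prime square with $p\ge 49$ and $q:=p=\Theta(a^2b^2)$, so $\sqrt{q}=r>3ab$ as required.

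Next, I instantiate the code with message length $k=\lceil \log_q N\rceil$. Theorem~\ref{thm:ag} then supplies a \code{k,\,k\sqrt{q},\,1-3/\sqrt{q}}{q} code $C$ of relative distance strictly larger than $1-\tfrac{1}{ab}$, with block length
$$
\ell \;=\; k\sqrt{q}\;=\;O\!\left(\frac{\log N}{\log q}\cdot\sqrt{q}\right)\;=\;O\!\left(\frac{ab\log N}{\log(ab)}\right)\;=\;O(ab\log N).
$$
Invoking Proposition~\ref{prop:codehash} on $C$ directly reinterprets it as a \code{N,a,b,O(ab\log N)}{O(a^2b^2)}{}-\HM, which is precisely the family claimed by the corollary.

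For the running time I separate preprocessing (building the generator matrix of $C$) from the subsequent evaluation of the hash family. Explicit algorithmic realizations of the AG codes of Theorem~\ref{thm:ag} from Garcia--Stichtenoth towers construct the $k\times\ell$ generator matrix over $\mathbb{F}_q$ in time polynomial in the block length; the best available cubic-type bound, after absorbing the polylogarithmic cost of arithmetic in $\mathbb{F}_q$ (where $\log q=O(\log(ab))$), gives $O(\ell^{3})=O((ab\log N)^{3})$. Once the generator matrix is available, each of the $N$ codewords follows from a single matrix-vector product over $\mathbb{F}_q$, costing $O(k\ell\cdot\mathrm{polylog}(q))=O(ab\log^{2}N)$ bit operations; summing over all $N$ inputs produces the second term $O(Nab\log^{3}N)$. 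Adding the two costs yields the bound in the corollary statement.

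The main obstacle I anticipate is precisely this preprocessing term: I need to cite (or verify) an explicit algorithmic form of Theorem~\ref{thm:ag} whose generator-matrix construction actually runs in the claimed cubic time. In contrast to the Reed--Solomon case of Corollary~\ref{cor:RS}, where the generator matrix is a Vandermonde matrix written down in near-linear time, AG codes require a nontrivial computation in the underlying function field tower, so one must invoke a concrete algorithmic result from the coding-theory literature for this step. Everything else in the argument is routine bookkeeping on top of Proposition~\ref{prop:codehash} and follows the exact same pattern as Corollary~\ref{cor:RS}.
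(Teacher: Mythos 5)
Your proposal is correct and follows essentially the same route as the paper: choose a prime of size $\Theta(ab)$ so that the AG code of Theorem~\ref{thm:ag} over alphabet $q=\Theta(a^2b^2)$ has relative distance exceeding $1-\frac{1}{ab}$, apply Proposition~\ref{prop:codehash}, and charge $O((ab\log N)^3)$ for constructing the generator matrix plus $O(Nab\log^3 N)$ for evaluating the $N$ codewords. The citation you flag as needed for the cubic-time generator-matrix construction is exactly what the paper supplies (\cite{SAKSD01}), so there is no gap.
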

\begin{proof}
Let $p$ be the smallest prime greater than $3ab$  (note that $p\in(3ab,6ab)$) and let $q=p^2$. Let $C$ be the  \code{\log_q N, \sqrt{q}\cdot \log_q N, 1-\frac{3}{\sqrt{q}}}{q} code guaranteed from Theorem~\ref{thm:ag}. From Proposition~\ref{prop:codehash} we can think of $C$ as a \code{N,a,b,\sqrt{q}\cdot \log_q N}{q}{}-\HM since 
$$\Delta(C)=1-\frac{3}{p}> 1-\frac{1}{ab}.$$ 
By noting that $q\le 36a^2b^2$, we may say that $C$ is a \code{N,a,b,O\left(\frac{ab\log N}{\log a}\right)}{O(a^2b^2)}{}-\HM{}.

It is known that the generator matrix of Algebraic-Geometric codes mentioned in Theorem~\ref{thm:ag} can be constructed in near cubic time of the block length of the code  \cite{SAKSD01}. Therefore the computation of the corresponding \HM can be done in time $O((ab\log N)^3 + Nab\log^3 N)$.
\end{proof}

\begin{sloppypar} However these parameters are worse than the  parameters of Corollary~\ref{cor:RS} whenever $ab~\gg~\log N$. 
We construct below a specific code concatenation of Reed-Solomon codes and Algebraic-Geometric codes that does indeed improve on the parameters of Corollary~\ref{cor:RS} for the setting when $a,b$ are not too small. \end{sloppypar}

\begin{lemma}\label{lem:concat}
Let $p$ be a prime square greater than or equal to 49, and let $q:=p^c$ for any  $c\in\mathbb{N}$. Then for every $k\in\mathbb{N}$, there exists a \code{k, k\cdot q, 1-\frac{4}{\sqrt{q}}}{\sqrt{q}} code. 
\end{lemma}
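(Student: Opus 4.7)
The plan is to apply the code concatenation machinery (Fact~\ref{fact:code}) with an Algebraic--Geometric outer code and a short Reed--Solomon inner code. The desired alphabet size is $\sqrt{q}$, so I want the inner code to be defined over $[\sqrt{q}]$ and to encode symbols of the outer code, which live in $[q] = [(\sqrt{q})^2]$. This fixes the inner message length to be $2$.

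First I would invoke Theorem~\ref{thm:ag} with the given prime-square $p \ge 49$ and $q = p^c$ to obtain a \code{k,\,k\sqrt{q},\,1-3/\sqrt{q}}{q} outer code $C_1$. Next, since $p = r^2$ for some prime $r$, we have $\sqrt{q} = r^c$, which is a prime power, so Theorem~\ref{thm:rs} applied with message length $2 \le \sqrt{q}$ yields a Reed--Solomon inner code $C_2$ with parameters \code{2,\,\sqrt{q},\,1-1/\sqrt{q}}{\sqrt{q}}. Because the inner alphabet $\sqrt{q}$ raised to the inner message length $2$ equals the outer alphabet $q$, the hypotheses of Fact~\ref{fact:code} are satisfied.

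Applying Fact~\ref{fact:code} to $C_1 \circ C_2$ gives a code over alphabet $\sqrt{q}$ with message length $k$, block length $k\sqrt{q} \cdot \sqrt{q} = kq$, and relative distance
\[
\left(1 - \tfrac{3}{\sqrt{q}}\right)\left(1 - \tfrac{1}{\sqrt{q}}\right) \;=\; 1 - \tfrac{4}{\sqrt{q}} + \tfrac{3}{q} \;\ge\; 1 - \tfrac{4}{\sqrt{q}},
\]
which is exactly the claimed \code{k,\,k\cdot q,\,1-4/\sqrt{q}}{\sqrt{q}} code.

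There is no real obstacle here; the only things to check are arithmetic compatibilities: that $\sqrt{q}$ is a prime power (so that Reed--Solomon applies), that $(\sqrt{q})^2 = q$ matches the outer alphabet (so that concatenation is valid), and that the distance product indeed dominates $1 - 4/\sqrt{q}$. All three hold for any $c \in \mathbb{N}$ and any prime-square $p \ge 49$, so the lemma follows directly from the two code constructions and the concatenation fact.
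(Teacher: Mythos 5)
Your proposal is correct and matches the paper's proof exactly: the paper also concatenates the \code{k,k\sqrt{q},1-3/\sqrt{q}}{q} Algebraic--Geometric code (outer) with the \code{2,\sqrt{q},1-1/\sqrt{q}}{\sqrt{q}} Reed--Solomon code (inner) via Fact~\ref{fact:code}. Your additional checks (that $\sqrt{q}$ is a prime power, that the alphabets are compatible, and that the distance product dominates $1-4/\sqrt{q}$) are exactly the arithmetic the paper leaves implicit.
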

\begin{proof}
We concatenate the \code{k, k\cdot \sqrt{q}, 1-\frac{3}{\sqrt{q}}}{q} code from Theorem~\ref{thm:ag} (treated as the outer code) with the \code{2,\sqrt{q},1-\frac{1}{\sqrt{q}}}{\sqrt{q}} code from Theorem~\ref{thm:rs} (treated as the inner code). From Fact~\ref{fact:code}, this gives us the desired code.
\end{proof}

It is worth noting that while concatenation codes obtained by combining Reed-Solomon codes and Algebraic-Geometric codes have appeared many times in literature, to the best of our knowledge, this is the first time that Algebraic-Geometric codes are the outer code and Reed-Solomon codes are the inner code (as Algebraic-Geometric codes are typically used for their small alphabet size).

An immediate corollary of Proposition~\ref{prop:codehash} and Lemma~\ref{lem:concat} is the following.

\begin{corollary}[Concatenated Hash Family]\label{cor:hash}
Given integers $N,a,b$ such that $b\le a$, there exists a \code{N,a,b,O\left(\frac{a^2b^2\log N}{\log a}\right)}{O(ab)}{}-\HM{}. 
Moreover, the computation time of the \HM is  $O\left(N\cdot (ab\log N)^3 \right)$.
\end{corollary}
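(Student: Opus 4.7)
The proof is essentially a direct instantiation of Proposition~\ref{prop:codehash} with the concatenated code provided by Lemma~\ref{lem:concat}, so the plan is really about choosing the parameters so that the distance condition of Proposition~\ref{prop:codehash} is met while the alphabet and block length match the claimed bounds.

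First, I would choose the prime square. Apply Lemma~\ref{lem:concat} with $c=1$, so $q=p$ and the resulting code has alphabet size $\sqrt{p}$ and relative distance $1-4/\sqrt{p}$. I want $1-4/\sqrt{p} > 1-\frac{1}{ab}$ so that Proposition~\ref{prop:codehash} kicks in; this amounts to $\sqrt{p} > 4ab$. By Bertrand's postulate, pick a prime $r$ with $4ab < r \leq 8ab$ (and $r\geq 7$, which is automatic once $ab\geq 2$; the degenerate case $ab=1$ is trivial). Set $p := r^{2}$, so $p$ is a prime square with $p\geq 49$, $\sqrt{p}=r$, and $p=\Theta(a^{2}b^{2})$.

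Next, pick the message length. Set $k := \lceil \log_{\sqrt{p}} N \rceil = \Theta\!\left(\frac{\log N}{\log r}\right) = \Theta\!\left(\frac{\log N}{\log(ab)}\right) = \Theta\!\left(\frac{\log N}{\log a}\right)$, using $b\le a$. Lemma~\ref{lem:concat} then delivers a \code{k, k\cdot p, 1-\frac{4}{\sqrt{p}}}{\sqrt{p}} code $C$. Because $(\sqrt{p})^{k}\geq N$, we may (after an arbitrary injection of $[N]$ into the message space) apply Proposition~\ref{prop:codehash} with this $C$. The distance condition $\Delta(C)=1-4/r > 1-1/(ab)$ is exactly what we arranged, so $C$ yields a \code{N,a,b,k\cdot p}{\sqrt{p}}-\HM. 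Substituting the parameters: the alphabet has size $\sqrt{p}=r = O(ab)$, and the number of hash functions is $k \cdot p = O\!\left(\frac{p \log N}{\log \sqrt{p}}\right)=O\!\left(\frac{a^{2}b^{2}\log N}{\log a}\right)$, matching the claim.

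Finally, the running time. The outer Algebraic-Geometric code has block length $k\sqrt{p}=O(ab\log N)$, whose generator matrix is built in time cubic in the block length \cite{SAKSD01}, giving $O((ab\log N)^{3})$; this dominates the analogous (much smaller) Reed–Solomon inner generator-matrix cost. After that, encoding each of the $N$ messages through the concatenation takes time polynomial in $k$, $\sqrt{p}$, and $p$, which is absorbed into an $O(N\cdot(ab\log N)^{3})$ upper bound. Putting these together matches the stated bound. There is no real obstacle here: the only point that requires a bit of care is making sure Bertrand's postulate yields a prime $r$ in the window $(4ab, 8ab]$ that is large enough to satisfy the $p\geq 49$ precondition of Lemma~\ref{lem:concat}, which is immediate whenever $ab\geq 2$, and the corner case $ab=1$ can be handled by the trivial \code{1}{2}-\HM{} noted at the start of the section.
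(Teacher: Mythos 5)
Your proof is correct and is essentially identical to the paper's: the same choice of a prime in $(4ab,8ab]$, squared so as to meet the prime-square precondition of Lemma~\ref{lem:concat} with $c=1$, the same verification that the relative distance $1-4/r>1-\frac{1}{ab}$ lets Proposition~\ref{prop:codehash} apply with alphabet $r=O(ab)$ and block length $O\!\left(\frac{a^2b^2\log N}{\log a}\right)$, and the same cubic-in-block-length accounting for the AG generator matrix plus encoding of the $N$ messages. The only slip is the $ab=1$ remark: a \code{N,1,1,1}{2}{}-\HM{} does not exist for $N>2$ (a single binary function cannot separate every pair), so that degenerate corner case --- which the paper also leaves implicit, and where the stated bound's $\log a$ denominator vanishes anyway --- should instead be dispatched by, e.g., the $O(\log N)$ binary functions given by the bit representation of $[N]$.
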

\begin{proof}
Let $p$ be the smallest prime greater than $4ab$  (note that $p\in(4ab,8ab)$). Let $q:=p^2$ and $C$ be the  \code{\log_q N, q\cdot \log_q N, 1-\frac{4}{p}}{p} code guaranteed from Lemma~\ref{lem:concat}. From Proposition~\ref{prop:codehash} we can think of $C$ as a \code{N,a,b,q\cdot \log_q N}{p}{}-\HM since 
$$\Delta(C)=1-\frac{4}{p}> 1-\frac{1}{ab}.$$ By noting that $p\le 8ab$, we may say that $C$ is a \code{N,a,b,O\left(\frac{a^2b^2 \log N}{\log a}\right)}{O(ab)}{}-\HM.

It is known that the generator matrix of the codes mentioned in Theorem~\ref{thm:ag} (resp.\ Theorem~\ref{thm:rs}) can be constructed in cubic time in the block length of the code \cite{SAKSD01} (resp.\ linear time in the block length of the code \cite{RS60} as the message length is 2). Therefore the computation of the corresponding \HM can be done in time $O(N\cdot (ab\log N)^3 )$.
 \end{proof}

We finally wrap up by noting below  that the proof of Theorem~\ref{thm:HM} follows from combining Lemma~\ref{lem:alphabet} with  Corollaries~\ref{cor:RS}~and~\ref{cor:hash}. 

\begin{proof}[Proof of Theorem~\ref{thm:HM}]

\begin{sloppypar}Suppose $a\ge N^{\nicefrac{1}{c}}$, for some constant $c\in\mathbb{N}$ then consider the \code{O\left(\frac{ab\log N}{\log a}\right)}{O\left(\frac{ab\log N}{\log a}\right)}{}-\HM from Corollary~\ref{cor:RS} and note that $\frac{\log N}{\log a}\le c$.  Let the alphabet of this \HM be $\beta cab$, for some universal constant $\beta$. Then, we invoke Lemma~\ref{lem:alphabet} on this \code{O\left(cab\right)}{\beta cab}{}-\HM to obtain the desired Boolean \HM.  The computation time of the final \HM is $O\left((\beta cab)^b\cdot abN(\log N)^2\right)=O\left(N\cdot (\log N)^2\cdot (\beta cab)^{b+1}\right)$.\end{sloppypar}

\begin{sloppypar}Suppose $a=N^{o(1)}$ and $b=\Omega(\log N)$ (or suppose $a\le \log N$) then consider the \code{O\left(\frac{a^2b^2\log N}{\log a}\right)}{O\left(ab\right)}{}-\HM from Corollary~\ref{cor:hash} and ignore the $\log a$ term in the denominator in the expression for the size of the hash family. Let the alphabet of this \HM be $\beta' ab$, for some universal constant $\beta'$. Then, we invoke Lemma~\ref{lem:alphabet} on this \code{O\left(a^2b^2\log N\right)}{\beta' ab}{}-\HM to obtain the desired Boolean \HM.  The computation time of the final \HM is $O\left((\beta' ab)^b\cdot  N(ab\log N)^3\right)=O\left(N\cdot \log N\cdot (\beta' ab)^{b+2}\cdot (ab\cdot (\log N)^2)\right)$. Notice that if $a\le \log N$ then the expression $(ab\cdot (\log N)^2)$ is $O(\log ^4 N)$. Otherwise if $a=N^{o(1)}$ then the expression $(ab\cdot (\log N)^2)$ is still $N^{o(1)}$. \end{sloppypar}

In every other case,  consider the \code{O\left(\frac{ab\log N}{\log a}\right)}{O\left(\frac{ab\log N}{\log a}\right)}{}-\HM from Corollary~\ref{cor:RS} and ignore the $\log a$ term in the denominator in the expressions for both the size of the hash family and the alphabet size.  Let the alphabet of this \HM be $\beta'' ab\log N$, for some universal constant $\beta''$. Then, we invoke Lemma~\ref{lem:alphabet} on this \code{O\left(ab\log N\right)}{\beta'' ab\log N}{}-\HM to obtain the desired Boolean \HM.  The computation time of the final \HM is $O\left((\beta'' ab\log N)^b\cdot  Nab\cdot (\log N)^2\right)=O\left(N\cdot \log N\cdot (\beta'' ab\log N)^{b+1}\right)$.
\end{proof}

In order to facilitate the applications in the next section we  introduce the notation \HMC{C} to denote the following: given a code $C$, we first interpret it as a \HM in accordance with Proposition~\ref{prop:codehash} and then apply Lemma~\ref{lem:alphabet} to this hash family to obtain a Boolean \HM, denoted by \HMC{C}{}.

\paragraph{Optimaility of Reed-Solomon based \HM.}
We digress for a short discussion on the optimality of the parameters of \HM constructed from Reed-Solomon codes. There are two reasons why one might suspect that the parameters of Corollary~\ref{cor:RS} can be improved. First is the union bound applied in \eqref{equnion}. Second is the bounding of the number of disagreements between two codewords by the relative distance  in \eqref{eqmin}. It seems intuitively not reasonable that there exists two subsets of codewords say $A$ and $B$ such that for every pair of codewords in $A\times B$ there is a \emph{unique} set of coordinates on which they agree. Additionally, the expected fraction of disagreements between any two Reed-Solomon codewords is $1-1/q$ and instead bounding it by the relative distance, particularly when we are taking an union bound later in \eqref{equnion}, seems to raise concerns if the analysis has slacks. Therefore we ask:

\begin{open}
Let $a,b,d\in \mathbb{N}$. What is the smallest prime $q$ such that the following holds? For every two disjoint subsets of degree $d$ polynomials over $\mathbb{F}_q$, denoted by $A$ and $B$, where $|A|=a$ and $|B|=b$, we have that there exists some $\alpha\in\mathbb{F}_q$ such that no pair of polynomials in $A\times B$ evaluate to the same value at $\alpha$. 
\end{open}

Clearly, from Proposition~\ref{prop:codehash}, we have that if $q$ is at least $dab+1$ then it suffices. But can we get away with a smaller value of $q$?

\paragraph{Perfect Hash Families.} We conclude the discussion on \HM by noting the connection between \HM and the notion of  Perfect hash families that has received considerable attention in literature (for example see \cite{FK84,FKS84,SS90, AABCHNS92,N94,alon1995color,NSS95,alon1996derandomization,FN01,AG10}). If we replace \eqref{eq:HM} in Definition~\ref{def:HM} with 
\begin{align}
 \forall (x,y)\in S, x\neq y,\ \text{we have }h_i(x)\neq h_i(y),
\end{align}
where $S\subseteq [N]$ then it conincides with the notion of perfect hash families.  In other words, \HM can be as a \emph{bichromatic} variant of perfect hash families. Indeed a connection between error correcting codes and perfect hash families (much like Proposition~\ref{prop:codehash}) was already known in literature \cite{A86}. We also remark that construction of perfect hash families based on AG codes was also known in literature \cite{WX01}, but to the best of our knowledge, construction of hash families based on the concatenated AG codes (with the specific parameters of   Lemma~\ref{lem:concat}) is a novel contribution of this paper.

Additionally, one may see the randomized construction of \RP in Lemma~\ref{lem:rand-RPC} as coloring each edge with a random color in $[L]$ if $L\ge f$ (resp.\ in $[f]$ if $f\ge L$) and then randomly choosing one of the colors in $[L]$ (resp.\ $[f]$) and deleting (resp.\ retaining) all the edges corresponding to that color. The randomized procedure stated in the above way is very closely related to the celebrated color coding technique \cite{alon1995color} and a well-known way to derandomize the color coding technique is via perfect hash functions. However, using the derandomization objects developed for color coding yields \HM  with suboptimal parameters as they do not use the product structure of the constraints given in the definition of \HM. Consequently, they lead to worse constructions than the ones we give in this paper (to see this set $a\gg b$ and note that $ab \ll (a+b)^2$). The use of $k$-restriction sets 
\cite{AlonMS06} also yields \HM with suboptimal parameters for the same reason. 

\subsection{Strong Hit and Miss Hash Families}

In order to have certain applications, we  introduce the following strengthening of Definition~\ref{def:HM}.

\begin{definition}[Strong Hit and Miss Hash Family] For every $N,a,b,\ell,q\in \mathbb{N}$ such that $b\le a$, we say that $\H:=\{h_i:[N]\to [q]\mid i\in[\ell]\}$ is a \code{N,a,b,\ell}{q}{}-Strong Hit and Miss ($\mathsf{SHM}$) hash family if for every pair of mutually disjoint subsets $A,B$ of $[N]$, where $|A|\le a$ and $|B|\le b$, we have:
\begin{align}
\Pr_{i\sim[\ell]}\left[\forall (x,y)\in A\times B,\ h_i(x)\neq h_i(y)\right]\ge \frac{1}{2}.\label{eq:SHM}
\end{align}

In the cases when $N,a,b$ is clear from the context, we simply refer to $\H$ as a \code{\ell}{q}{}-Strong \HM{}.  
\end{definition}

Similar to Corollaries~\ref{cor:RS}~and~\ref{cor:hash}, we can prove the following bounds for Strong \HM.

\begin{lemma}\label{lem:strongcodes}
Given integers $N,a,b$ such that $b\le a$, there exists:
\begin{description}
\item[Reed-Solomon Strong \HM]  a \code{N,a,b,O\left(\frac{ab\log N}{\log a}\right)}{O\left(\frac{ab\log N}{\log a}\right)}{}-Strong \HM{} whose computation time is $O\left(abN(\log N)^2\right)$.
\item[Algebraic-Geometric Strong \HM] a \code{N,a,b,O\left(\frac{a^2b^2\log N}{\log a}\right)}{O(ab)}{}-Strong \HM{} whose computation time   is  $O\left(N\cdot (ab\log N)^3 \right)$.
\end{description}
\end{lemma}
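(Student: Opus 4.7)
The plan is to observe that the proof of Proposition~\ref{prop:codehash} already delivers the probabilistic bound underlying a Strong \HM{}, so only a mild strengthening of the distance requirement is needed. Specifically, the union-bound step (equation~(\ref{equnion})) shows that if $C$ is a \code{\log_q N,\ell,\delta}{q} code interpreted as a hash family via $h_i(x)=C(x)_i$, then for every pair of disjoint sets $A,B\subseteq[N]$ with $|A|\le a$, $|B|\le b$,
$$\Pr_{i\sim[\ell]}\!\left[\forall (x,y)\in A\times B,\ h_i(x)\neq h_i(y)\right]\ge 1-ab(1-\delta).$$
To satisfy~(\ref{eq:SHM}) it therefore suffices to choose $C$ with relative distance $\delta\ge 1-\tfrac{1}{2ab}$, rather than $\delta>1-\tfrac{1}{ab}$. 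I would package this as a ``strong'' variant of Proposition~\ref{prop:codehash}, after which the two parts of the lemma become immediate instantiations with specific codes.

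For the Reed-Solomon Strong \HM{}, I would mimic the proof of Corollary~\ref{cor:RS}, but take $q$ to be the smallest prime exceeding $2ab\log N/\log a$. Then Theorem~\ref{thm:rs} yields a Reed-Solomon code of block length $q=O(ab\log N/\log a)$ and relative distance $1-(\log_q N-1)/q\ge 1-\tfrac{1}{2ab}$. Interpreting this code as a hash family produces the desired \code{N,a,b,O(ab\log N/\log a)}{O(ab\log N/\log a)}{}-Strong \HM{}. The running-time analysis is identical to that of Corollary~\ref{cor:RS}, since the generator matrix is still constructed in near-linear time and each codeword is evaluated in $O(q\log\log N)$ time via FFT, giving $O(abN(\log N)^2)$ overall.

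For the Algebraic-Geometric Strong \HM{}, I would analogously follow the proof of Corollary~\ref{cor:hash} with $p$ the smallest prime exceeding $8ab$ and $q=p^2$. Lemma~\ref{lem:concat} then provides a concatenated code of alphabet $\sqrt{q}=p=O(ab)$, block length $q\cdot \log_p N=O(a^2b^2\log N/\log a)$, and relative distance $1-4/p\ge 1-\tfrac{1}{2ab}$. This yields the claimed \code{N,a,b,O(a^2b^2\log N/\log a)}{O(ab)}{}-Strong \HM{} with computation time $O(N\cdot(ab\log N)^3)$, just as before. There is no real obstacle here: the only adjustment is the constant factor in the choice of the underlying prime, which does not change any of the asymptotic parameters reported in Corollaries~\ref{cor:RS} and~\ref{cor:hash}; no alphabet-reduction step (Lemma~\ref{lem:alphabet}) is needed, since the Strong \HM{} definition does not demand a Boolean range.
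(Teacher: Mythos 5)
Your proposal is correct and follows essentially the same route as the paper: strengthen Proposition~\ref{prop:codehash} so that relative distance $\delta\ge 1-\frac{1}{2ab}$ yields a Strong \HM{} (via the same union bound), and then rerun Corollaries~\ref{cor:RS} and~\ref{cor:hash} with the underlying prime enlarged by a constant factor, which leaves all asymptotic parameters and running times unchanged. Your choices of $q>2ab\log N/\log a$ for Reed-Solomon and $p>8ab$ for the concatenated code indeed give $\delta\ge 1-\frac{1}{2ab}$, and you are right that no alphabet reduction is needed at this stage.
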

\begin{proof}[Proof Sketch]
The proof follows by noting the following. First, Proposition~\ref{prop:codehash} can be strengthened to say that if $\delta\ge 1-\frac{1}{2ab}$ then every \code{\log_q N,\ell,\delta}{q} code can be seen as a \code{N,a,b,\ell}{q}{}-Strong \HM. Second, Corollaries~\ref{cor:RS},~\ref{cor:AG},~and~\ref{cor:hash} can be modified to yield Strong \HM (instead of just \HM), by simply choosing the alphabet value of the underlying code currently in the proofs to be at least twice (for Reed Solomon codes) or four times (for AG codes concatenated with Reed Solomon codes) as large as what is currently written.  
\end{proof}

%
%
%\begin{theorem}[Small Strong Hit and Miss Hash Family]\label{thm:StrongHM}
%Given  integers $N,a,b$ such that $b\le a$, there is a deterministic algorithm $\mathcal{A}$ for computing an     \code{N,a,b,\ell}{2}{}-Strong $\mathsf{HM}$ hash family where:
%  \[
%    q^b\cdot \ell\le \left\{\begin{array}{lr}
%       (\alpha c ab)^{b+1}, & \text{if }a\ge N^{\nicefrac{1}{c}},\ \text{for some constant }c\in\mathbb{N},\\
%        (\alpha ab)^{b+2}\cdot \log N, & \text{if } a=N^{o(1)}\text{ and }b=\Omega(\log N),\\
%                (\alpha ab)^{b+2}\cdot \log N, & \text{if } a\le \log N,\\
%       (\alpha ab\log N)^{b+1}, & \text{otherwise,}
%        \end{array}\right.
%  \]
%for some small universal constant $\alpha\in\mathbb N$.  
%Moreover, the running time of $\mathcal{A}$ denoted by $T(\mathcal{A})$ is,
%$$
%T(\mathcal{A})=
%\left\{\begin{array}{lr}
% N^{1+o(1)}\cdot  \ell&\text{if } a=N^{o(1)}\text{ and }b=\Omega(\log N),\\
%N\cdot (\log N)^{O(1)}\cdot \ell, & \text{otherwise.}
%        \end{array}\right.
%$$
%\end{theorem}
%

In order to facilitate the applications in the next section we  introduce the notation \SHMC{C} to denote the following: given a code $C$, we first interpret it as a Strong \HM and then apply Lemma~\ref{lem:alphabet} to this hash family to obtain a Boolean Strong \HM, denoted by \SHMC{C}{}.

\section{$(L,f)$-\RPC}

Equipped with the construction of Boolean Hit and Miss hash families from the previous section, we  show in this section  how to use them in order to efficiently construct \RP.

\begin{proposition}\label{prop:RPC}\begin{sloppypar}
Given a graph $G$ on $m$ edges and integer parameters $L,f$, and a \code{m,\max\{L,f\},\min\{L,f\},\ell}{2}{}-\HM $\H$, we can construct an $(L,f)$-\RP of $G$ denoted by $\mathcal{G}_{L,f}^{\H}$ such that \CV$(\mathcal{G}_{L,f}^{\H})=2\cdot \ell$. Moreover, the construction of $\mathcal{G}_{L,f}^{\H}$ can be done in time $O(m\ell+T_\H)$, where $T_\H$ is the computation time of $\H$. \end{sloppypar}
\end{proposition}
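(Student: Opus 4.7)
\textbf{Plan for the proof of Proposition~\ref{prop:RPC}.}
The plan is to define, for each hash function $h_i \in \H$, two subgraphs of $G$ indexed by the two possible values in $\{0,1\}$, and then show that the $\mathsf{HM}$ property of $\H$ guarantees that at least one of these $2\ell$ subgraphs covers any given replacement path. First, I would identify the edge set $E(G)$ with $[m]$ via some fixed labeling, so that each $h_i : [m] \to \{0,1\}$ naturally partitions $E(G)$ into two classes. For every $i \in [\ell]$ and every $c \in \{0,1\}$, I define the subgraph $G_i^c := (V(G), \{e \in E(G) : h_i(e) = c\})$, and take $\mathcal{G}_{L,f}^{\H}$ to be the collection of all $2\ell$ such subgraphs.

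The main step is verifying the covering property. Fix a replacement path $P = P(s,t,F)$ with $|P| \le L$ and $|F| \le f$, and set $a = \max\{L,f\}$, $b = \min\{L,f\}$. Since $P$ is a path in $G \setminus F$, the edge sets $E(P)$ and $F$ are disjoint. Depending on whether $L \ge f$ or $L < f$, I take $(A,B) = (E(P), F)$ or $(A,B) = (F, E(P))$, so that in either case $A, B$ are disjoint subsets of $[m]$ with $|A| \le a$ and $|B| \le b$. By the defining property of a \code{m,a,b,\ell}{2}{}-\HM, there exists some $i \in [\ell]$ such that $h_i(x) \neq h_i(y)$ for every $(x,y) \in A \times B$. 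Since the range is Boolean, this forces $h_i(A)$ and $h_i(B)$ to each be confined to a single, distinct element of $\{0,1\}$; in particular there is a value $c_P \in \{0,1\}$ with $h_i(e) = c_P$ for all $e \in E(P)$ and $h_i(e) \neq c_P$ for all $e \in F$. By construction, $G_i^{c_P}$ contains every edge of $P$ and no edge of $F$, i.e.\ it covers $P(s,t,F)$ in the sense of Definition~\ref{def:RP}.

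For the running time, I would first invoke the algorithm producing $\H$ in time $T_\H$, obtaining the $\ell \times m$ matrix of values $h_i(e)$. Then, scanning this matrix row by row, each row $i$ yields the two subgraphs $G_i^0, G_i^1$ in $O(m)$ time (for instance by writing each edge into the appropriate adjacency list). Over all $\ell$ rows this contributes $O(m\ell)$ work, giving the total $O(m\ell + T_\H)$ as claimed, while the covering value is exactly $2\ell$ by construction.

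I do not anticipate a genuine obstacle here; the only subtle point is making sure the disjointness of $h_i(A)$ and $h_i(B)$ is used correctly in the Boolean setting (so that an entire class of $E(G)$ under $h_i$ simultaneously contains all of $E(P)$ and avoids all of $F$), and handling the two regimes $L \ge f$ and $L < f$ symmetrically by swapping the roles of $A$ and $B$.
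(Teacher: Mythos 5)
Your proposal is correct and follows essentially the same route as the paper's proof: build the two subgraphs $G_{i,\rho}$ per hash function, apply the \HM{} property to the disjoint pair (path edges, fault edges), and use the Boolean range to conclude one of the two subgraphs covers the path, with the same $O(m\ell+T_\H)$ time analysis. Your explicit swapping of the roles of $E(P)$ and $F$ when $L<f$ is in fact slightly more careful than the paper's write-up, which applies the definition with $A=E(P)$, $B=F$ without remarking that the condition is symmetric.
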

\begin{proof}
 Label the edges of $G$ using $[m]$. For every $(i,\rho)\in [\ell]\times\{0,1\}$, we construct a subgraph $G_{i,\rho}$ of $G$ as follows: for every $x\in[m]$, the edge with label $x$ in $G$ is retained in $G_i$ if and only if $h_i(x)=\rho$. Then $\mathcal{G}_{L,f}^{\H}$ is simply $\{G_{i,\rho}\mid i\in[\ell],\rho\in\{0,1\}\}$. 

To see that $\mathcal{G}_{L	,f}^{\H}$ is an $(L,f)$-\RP, fix any vertex pair $(s,t)\in V(G)\times V(G)$ and fix any fault set $F:=\{e_{r_1},\ldots ,e_{r_d}\}\subseteq E(G)$ where $|F|=d\le f$. Let $P(s,t,F)$ be a replacement path with at most $L$ edges, i.e., $P(s,t,F)=\{e_{j_1},\ldots ,e_{j_t}\}\subseteq E(G)$, where $t\le L$. Consider the following two subsets of $[m]$: $A=\{j_1,\ldots ,j_t\}$ and $B=\{r_1,\ldots ,r_d\}$. Note that since  $P(s,t,F)$ is a replacement path we have $A$ and $B$ are disjoint subsets of $[m]$. From \eqref{eq:HM} we have that there exists some $i^*\in[\ell]$ such that for all $(x,y)\in A\times B$ we have $h_{i^*}(x)\neq h_{i^*}(y)$. Therefore we have that if $h_{i^*}(j_1)=0$ (resp.\ if $h_{i^*}(j_1)=1$) then in the graph $G_{i^*,0}$ (resp.\ $G_{i^*,1}$), we have that all edges of $P(s,t,F)$ are present and all edges of $F$ are absent.

In order to justify the computation time of $\mathcal{G}_{L,f}^{\H}$, we first compute the $\ell\times m$ Boolean matrix  $M_{\H}$ corresponding to $\H$ where the $(i,x)^{\text{th}}$ entry of $M_{\H}$ is simply $h_{i}(x)$.  After the computation of $M_{\H}$ we simply go over each row of the matrix to build the subgraphs. 
\end{proof}

\begin{proof}[Proof of Theorem~\ref{thm:general-covering}]
The proof follows immediately by putting together Theorem~\ref{thm:HM} with Proposition~\ref{prop:RPC} and noting that for every \code{N,a,b,\ell}{2}{}-\HM $\H$ used in  Theorem~\ref{thm:HM}, we have $T_{\H}>N\cdot \ell$.
\end{proof}

\begin{remark}
\label{rem:app}
For all the applications in this paper, we never use the construction of $(L,f)$-\RP given in Theorem~\ref{thm:general-covering} when $a=m^{o(1)}$ and $b=\Omega(\log m)$ (mainly because it has a prohibitive run time), and the result for that regime is merely of interest for bounding the covering number. \end{remark}

\paragraph{Useful properties of  $(L,f)$-\RP when $L \geq f$.}
A crucial property of the $(L,f)$-\RP that is needed for applications in the future section is that for every fixed set of faults $F$ there will be only a very small set of subgraphs in the covering set $\mathcal{G}_{L,f}$ that avoid $F$. 
As we see below, we have that the construction of $(L,f)$-\RP of Theorem \ref{thm:general-covering}  gives this additional property for free.

%\begin{theorem}
%\label{lem:num-collide-faults}
%Given a graph $G$ on $m$ edges, length parameter $L$, and fault parameter $f$ (such that $L\ge f$), there is a deterministic algorithm $\mathcal{A}$ for computing an     $(L,f)$-\RP of $G$ denoted by $\mathcal{G}_{L,f}$ such that,
%  \[
%    \text{\CV}(\mathcal{G}_{L,f})\le \left\{\begin{array}{lr}
%       (\alpha c Lf)^{f+1}, & \text{if }L\ge m^{\nicefrac{1}{c}},\ \text{for some constant }c\in\mathbb{N},\\
%                (\alpha Lf)^{f+2}\cdot \log m, & \text{if } L\le \log m,\\
%       (\alpha Lf\log m)^{f+1}, & \text{otherwise,}
%        \end{array}\right.
%  \]
%where  $\alpha\in\mathbb N$ is some small universal constant.  
%Moreover, $\mathcal{A}$ runs in time $m\cdot (\log m)^{O(1)}\cdot \text{\CV}(\mathcal{G}_{L,f})$ and satisfies the following property. Let $F$ be a set of $d\leq f$ edge failures. Then, there exist a collection $\mathcal{G}_{F}$ of at most $fL\cdot\polylog(m)$ subgraphs in $\mathcal{G}_{L,f}$ that satisfy the following: 
%\begin{itemize}
%\item Every subgraph in $\mathcal{G}_{F}$ does not contain any of the edges in $F$.
%\item For every vertex pair $(s,t)$ and every $P(s,t,F)$ path of length at most $L$, there exists a subgraph $G' \in \mathcal{G}_{F}$ that contains $P(s,t,F)$. 
%\end{itemize}
%Finally, given $F$ and $\mathcal{G}_{L,f}$, one can detect these $fL\cdot \polylog(m)$ subgraphs in time $fdL\cdot \polylog(m)$. 
%\end{theorem}

\begin{theorem}
\label{lem:num-collide-faults}
Let $L\ge f$ and $f=o(\log m)$, then one can compute an $(L,f)$-\RP $\mathcal{G}_{L,f}$  with the same \CV\ and time bounds
as in Theorem~\ref{thm:general-covering} that in addition satisfies the following property.  Let $F$ be a set of $d\leq f$ edge failures. Then, there exist a collection $\mathcal{G}_{F}$ of at most $fL\cdot\polylog(m)$ subgraphs in $\mathcal{G}_{L,f}$ that satisfy the following: 
\begin{itemize}
\item Every subgraph in $\mathcal{G}_{F}$ does not contain any of the edges in $F$.
\item For every vertex pair $(s,t)$ and every $P(s,t,F)$ path of length at most $L$, there exists a subgraph $G' \in \mathcal{G}_{F}$ that contains $P(s,t,F)$. 
\end{itemize}
Finally, given $F$ and $\mathcal{G}_{L,f}$, one can detect the subgraphs in $\mathcal{G}_{F}$ in time $fdL\cdot \polylog(m)$. 
\end{theorem}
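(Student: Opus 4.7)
The plan is to open the black-box $(L,f)$-\RP of Theorem~\ref{thm:general-covering} and identify, for each fault set $F$, a canonical subfamily of subgraphs read off directly from the underlying hash construction. Recall the pipeline behind Theorem~\ref{thm:general-covering}: a $q$-ary \HM $\H_q=\{h_i:[m]\to[q]\}_{i\in[\ell]}$ is built (via Corollary~\ref{cor:RS} or Corollary~\ref{cor:hash}); alphabet reduction (Lemma~\ref{lem:alphabet}) turns $\H_q$ into a Boolean \HM whose functions $h'_{i,S}$ are indexed by pairs $(i,S)$ with $|S|\le b=f$; then Proposition~\ref{prop:RPC} outputs, for each such $(i,S)$ and each $\rho\in\{0,1\}$, a subgraph $G_{i,S,\rho}$, where $G_{i,S,1}$ retains precisely the edges $e$ with $h_i(e)\notin S$. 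My key observation is that for any $F$ with $|F|=d\le f$, the set $S:=h_i(F)\subseteq[q]$ has $|S|\le d\le f=b$ and is therefore a legal choice in the alphabet reduction; moreover $G_{i,h_i(F),1}$ automatically avoids $F$ by construction. This motivates defining
\[
\mathcal{G}_F\ :=\ \{\,G_{i,\,h_i(F),\,1}\ :\ i\in[\ell]\,\}.
\]

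First I would verify the two bullets. Avoidance is immediate: $G_{i,h_i(F),1}$ contains no edge $e$ with $h_i(e)\in h_i(F)$, so it contains no edge of $F$. For the covering property, I fix any replacement path $P(s,t,F)$ with $|P|\le L$ and apply the \HM guarantee \eqref{eq:HM} to the disjoint sets $A=E(P)$ (of size at most $L=a$) and $B=F$ (of size at most $f=b$), obtaining some $i^*\in[\ell]$ with $h_{i^*}(A)\cap h_{i^*}(B)=\emptyset$. Every edge of $P$ is then mapped by $h_{i^*}$ outside $h_{i^*}(F)$, hence $P\subseteq G_{i^*,h_{i^*}(F),1}\in\mathcal{G}_F$.

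Next I would bound $|\mathcal{G}_F|=\ell$ regime by regime using Theorem~\ref{thm:HM}. The assumption $f=o(\log m)$ (i.e., $b=o(\log m)$) rules out the second regime, which requires $b=\Omega(\log m)$. In the Reed-Solomon regime $L\ge m^{1/c}$, Corollary~\ref{cor:RS} yields $\ell=O(Lf)$; in the AG+RS regime $L\le\log m$, Corollary~\ref{cor:hash} yields $\ell=O(L^2f^2\log m/\log L)$, which is at most $Lf\cdot\polylog(m)$ because $L,f\le\log m$; in the remaining Reed-Solomon regime, $\ell=O(Lf\log m/\log L)\le Lf\cdot\log m$. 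In every surviving case $|\mathcal{G}_F|\le fL\cdot\polylog(m)$. For the runtime, the $\ell\times m$ hash-value matrix of $\H_q$ is already materialized as a byproduct of building $\mathcal{G}_{L,f}$ (cf.\ the proof of Proposition~\ref{prop:RPC}), so extracting $\mathcal{G}_F$ reduces to reading off $h_i(e)$ for each $(i,e)\in[\ell]\times F$ and forming the sets $h_i(F)$, costing $O(d\ell\cdot\polylog(m))=O(fdL\cdot\polylog(m))$.

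I do not foresee a real obstacle beyond bookkeeping: once the $(i,S)$-indexed structure that alphabet reduction naturally produces is exposed, the subgraph $G_{i,h_i(F),1}$ is the ``intended'' cover for any fault set $F$, and the argument of Proposition~\ref{prop:RPC} goes through verbatim with this specific choice of $S$. The one mildly delicate point is precisely why the hypothesis $f=o(\log m)$ appears in the statement: it is exactly what excludes the one regime of Theorem~\ref{thm:HM} in which $\ell$ could exceed $Lf\cdot\polylog(m)$, and which is in any case not needed for the applications (cf.\ Remark~\ref{rem:app}).
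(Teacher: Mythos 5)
Your proposal is correct and follows essentially the same route as the paper: the paper's Lemma~\ref{lem:code-to-subgraphs} defines $\mathcal{G}_F$ exactly as your family $\{G_{i,h_i(F),1}\}_{i\in[\ell]}$ (one subgraph per code coordinate, with $S_i$ the images of the faulted edges), proves avoidance and coverage via \eqref{eq:HM} with $A=E(P)$, $B=F$, and then the proof of Theorem~\ref{lem:num-collide-faults} bounds $\ell$ regime by regime using the Reed--Solomon and concatenated-code block lengths, with $f=o(\log m)$ excluding the one bad regime, just as you argue. The only cosmetic difference is bookkeeping for the detection time (you read off the stored $\ell\times m$ hash matrix, the paper re-encodes the $d$ faulted edges in $O(d\cdot(\ell+\mathsf{en}(C)))$ time); both give the claimed $fdL\cdot\polylog(m)$ bound.
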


The proof of the above theorem follows by the more general statement below about code based constructions of  \HM, and applying to it the parameters of specific codes. 

\begin{lemma}\label{lem:code-to-subgraphs}
Given a graph $G$ on $m$ edges and integer parameters $L,f,q,\ell$, and a  \code{\log_q m,\ell,\delta}{q} code $C$ with relative distance $\delta> 1-\frac{1}{Lf}$, then,
the $(L,f)$-\RP $\mathcal{G}_{L,f}$ given by Proposition~\ref{prop:RPC} on providing \HMC{C} has the following property. Let $F$ be a set of $d\leq f$ edge failures. Then, there exist a collection $\mathcal{G}_{F}$ of at most $\ell$ subgraphs in $\mathcal{G}_{L,f}$ that satisfy the following: 
\begin{itemize}
\item Every subgraph in $\mathcal{G}_{F}$ does not contain any of the edges in $F$.
\item For every vertex pair $(s,t)$ and every $P(s,t,F)$ path of length at most $L$, there exists a subgraph $G' \in \mathcal{G}_{F}$ that contains $P(s,t,F)$. 
\end{itemize}
Moreover, given $F$ and $\mathcal{G}_{L,f}$, one can detect the subgraphs in $\mathcal{G}_{F}$ in time $O(d\cdot (\ell+\mathsf{ev}(C)))$, where $\mathsf{en}(C)$ is the time needed to encode a message using $C$. 
\end{lemma}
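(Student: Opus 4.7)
The plan is to unwind the three-stage construction pipeline (code $\Rightarrow$ HM $\Rightarrow$ Boolean HM $\Rightarrow$ RPC) and observe that for a fixed fault set $F$, one natural sub-collection of the final subgraphs both avoids $F$ and still inherits the covering guarantee for paths that themselves avoid $F$.

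First I would recall exactly how the subgraphs in $\mathcal{G}_{L,f}$ are indexed. Starting from the \code{\log_q m,\ell,\delta}{q} code $C$, Proposition~\ref{prop:codehash} turns $C$ into a $(m, L, f, \ell)_q$-\HM\ $\H = \{h_i\}_{i\in[\ell]}$ with $h_i(x) = C(x)_i$. The alphabet-reduction Lemma~\ref{lem:alphabet} then produces \HMC{C}\ whose hash functions are indexed by pairs $(i, S) \in [\ell] \times \binom{[q]}{\le f}$, with $h'_{i,S}(x) = 0$ iff $h_i(x) \in S$. Finally, Proposition~\ref{prop:RPC} turns each $(i, S)$ together with a bit $\rho \in \{0,1\}$ into a subgraph $G_{i,S,\rho}$ of $G$, so $\mathcal{G}_{L,f} = \{G_{i,S,\rho}\}$.

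Next, given a fault set $F$ with $|F| = d \le f$, I would define, for each $i \in [\ell]$,
\[ S_i^F := \{C(e)_i : e \in F\} \subseteq [q], \qquad |S_i^F| \le d, \]
and set $\mathcal{G}_F := \{G_{i, S_i^F, 1} : i \in [\ell]\}$. By the very definition of $G_{i,S_i^F,1}$, an edge $x$ lies in this subgraph iff $h_i(x) \notin S_i^F$, so every $e \in F$ is excluded; hence each of the $\ell$ subgraphs of $\mathcal{G}_F$ avoids $F$. For the covering property, fix any replacement path $P = P(s,t,F)$ with $|E(P)| \le L$, and set $A = E(P)$, $B = F$, which are disjoint. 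Because $C$ has distance $> 1 - 1/(Lf)$, Proposition~\ref{prop:codehash} guarantees some $i^* \in [\ell]$ with $h_{i^*}(x) \ne h_{i^*}(y)$ for every $(x,y) \in A \times B$. This means $h_{i^*}(x) \notin S_{i^*}^F$ for all $x \in E(P)$, so $h'_{i^*, S_{i^*}^F}(x) = 1$ for every edge of $P$, i.e.\ $P \subseteq G_{i^*, S_{i^*}^F, 1} \in \mathcal{G}_F$, as required.

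For the running time, given $F$ I would encode each of the $d$ edges of $F$ to obtain the codewords $\{C(e)\}_{e \in F}$ in time $O(d \cdot \mathsf{ev}(C))$; then, sweeping across the $\ell$ coordinates, I read off $S_i^F = \{C(e)_i : e \in F\}$ in $O(d)$ work per coordinate, totalling $O(d \ell)$. Looking up the corresponding $G_{i, S_i^F, 1}$ in $\mathcal{G}_{L,f}$ takes constant time per coordinate once the index $(i, S_i^F, 1)$ is computed, yielding $O(d(\ell + \mathsf{ev}(C)))$ overall. The only mildly non-routine step is checking that the indexing of \HMC{C}\ really lets us point to a single subgraph $G_{i, S_i^F, 1}$ for each $i$, rather than needing to search among the $\binom{q}{\le f}$ choices of $S$; but this is exactly the structure produced by the alphabet-reduction lemma, so no additional argument is needed.
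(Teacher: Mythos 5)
Your proposal is correct and follows essentially the same route as the paper's proof: for each coordinate $i\in[\ell]$ you take $S_i=\{C(e)_i : e\in F\}$ and select the single subgraph of $\mathcal{G}_{L,f}$ corresponding to the hash function $h'_{i,S_i}$ with $\rho=1$, prove avoidance of $F$ directly from the definition and the covering property via Proposition~\ref{prop:codehash} applied to $A=E(P)$, $B=F$, and obtain the $O(d(\ell+\mathsf{en}(C)))$ detection time by encoding the $d$ faulty edges and sweeping the $\ell$ coordinates. This matches the paper's argument in both structure and detail (including the implicit use of $f\le\min\{L,f\}$-sized sets $S_i$, valid in the $L\ge f$ regime where the lemma is applied), so no gaps to report.
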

\begin{proof}
 For every $i\in [\ell]$ let $S_i\subseteq [q]$ be defined as:
$$
S_i:=\{C(r_j)_i\mid j\in[d]\},
$$ 
where $F=\{e_{r_1},\ldots ,e_{r_d}\}$, and $C(r_j)_i$ is the $i^{\text{th}}$ coordinate of the $r_j^{\text{th}}$ codeword of $C$. For every $i\in[\ell]$ we include the subgraph $G_i$ in $\mathcal{G}_F$ if and only if the only edges in $G$ removed in $G_i$ are the ones mapped to an element of $S_i$ under $C_i$. It is clear that $|\mathcal{G}_F|$ by definition is at most $\ell$. Moreover, 
the computation time of the indices of the graphs in $\mathcal{G}_F$ is  $O(d\cdot (\ell+\mathsf{en}(C)))$ as once we encode the $d$ edges of $F$ using $C$, we can specify the indices of the subgraphs in $\mathcal{G}_F$ explicitly as defined above.

To note that $\mathcal{G}_F$ is a subset of $\mathcal{G}_{L,f}$, notice that for every $i\in [\ell]$ and every $S_i$ as defined above, we have in \HMC{C} a hash function $h:[m]\to\{0,1\}$ which maps to 0 exactly those edges (labels of edges) whose corresponding codeword on the $i^{\text{th}}$ coordinate is contained in $S_i$ (see the proof of Lemma~\ref{lem:alphabet} to verify this). Then, whence   \HMC{C} is provided to Proposition~\ref{prop:RPC}, the graph $G_{i,1}$ in $\mathcal{G}_{L,f}^{\text{\HMC{C}}}$ in the proof of Proposition~\ref{prop:RPC} is precisely the graph $G_i$ in $\mathcal{G}_F$.

All that is left to show are the structural properties of $\mathcal{G}_F$. By definition of $G_i$, it is clear that all the edges in $F$ are removed in each $G_i$. Furthermore, for every vertex pair $(s,t)\in V(G)\times V(G)$ and every replacement path $P(s,t,F)=\{e_{j_1},\ldots ,e_{j_t}\}$ with at most $L$ edges, we have from \eqref{eq:HM} that there is some $i^*\in[\ell]$ such that for all $\kappa\in[t]$, we have  $C({j_{\kappa}})_{i^*}\notin S_{i^*}$ (i.e., we apply Proposition~\ref{prop:codehash} on $C$ to obtain a \HM and use \eqref{eq:HM} with $A=\{{j_1},\ldots ,{j_t}\}$ and $B=\{{r_1},\ldots ,{r_d}\}$). Therefore all the edges of $P(s,t,F)$ are retained in $G_{i^*}$. 
\end{proof}
 
\begin{proof}[Proof of Theorem~\ref{lem:num-collide-faults}]
Since we have $L\ge f$ and $f=o(\log m)$, the bounds in Theorem~\ref{thm:general-covering} follow here as well with setting $a=L$ and $b=f$, while avoiding the case when $b=\Omega(\log m)$. In order to see that the additional property holds, we only need to verify that for the Reed Solomon code $\rs$ and the concatenated code $\con$ (from Lemma~\ref{lem:concat}) when we plug in \HMC{\rs} and \HMC{\con} respectively into Lemma~\ref{lem:code-to-subgraphs}, that the parameters are as claimed in the theorem statement. 

The block length $\ell$ of $\rs$ is set to be at most $\frac{2Lf\log m}{\log L}$ in Corollary~\ref{cor:RS}. If $L\ge m^{1/c}$ then $|\mathcal{G}_F|\le\ell=O(cLf)$ and otherwise  we have $|\mathcal{G}_F|\le \ell=O(Lf\log m)$.  

The block length $\ell$ of $\con$ is set to be at most $\frac{64L^2f^2\log m}{\log L}$ in Corollary~\ref{cor:hash}. Since we apply this bound to the case where $L\le \log m$ then $|\mathcal{G}_F|\le \ell=O(L^2f^2\log m)=O(Lf\log^3 m)$.

Plugging in the bound on the above block lengths of the two codes into Lemma~\ref{lem:code-to-subgraphs} gives the bounds of the additional property in the theorem statement. Note that the encoding time of $\rs$ is $\ell\cdot \polylog(m)=Lf\polylog(m)$ and while the encoding time of a codeword $\con$ is $O(\ell^3)$, since $f\le L\le \log m$, we have that the encoding time of $\con$ is also $\polylog(m)$. 
\end{proof}

\paragraph{Useful properties of  $(L,f)$-\RP s when $L \leq f$.}
Parts of the next theorem maybe  morally seen as the analog of Theorem~\ref{lem:num-collide-faults}, only that for the setting of $L\leq f$, we bound the number of subgraphs that fully contain a given path segment with at most $L$ edges.

\begin{theorem}
\label{lem:num-collide-paths}
Let $L\le f$, then one can compute an $(L,f)$-\RP $\mathcal{G}_{L,f}$ with the same \CV\ and time bounds\footnote{\label{note1}We recall Remark~\ref{rem:app} to say that when $a=m^{o(1)}$ and $b=\Omega(\log m)$ in the statement of Theorem~\ref{thm:general-covering}, the covering number we aim to achieve is $(\alpha Lf\log m)^{b+1}$ instead of $(\alpha Lf)^{b+2}\cdot \log m$.}
as in Theorem~\ref{thm:general-covering} that in addition satisfies the following property.  
Let $P$ be a replacement path segment of at most $L$ edges. Then, there exist a collection $\mathcal{G}_{P}$ subgraphs in $\mathcal{G}_{L,f}$ that satisfy the following: 
\begin{description}
\item{(I1)} $|\mathcal{G}_{P}|=fL\cdot\polylog(m)$.
\item{(I2)} Given $P$ and $\mathcal{G}_{L,f}$, one can detect the subgraphs in $\mathcal{G}_{P}$ in time $fdL\cdot \polylog(m)$.
\item{(I3)} Every subgraph in $\mathcal{G}_{P}$ fully contains $P$.
\item{(I4)} For every set $F \subseteq E$ of at most $f$ edges, there are at least $|\mathcal{G}_{P}|/2$ subgraphs in $\mathcal{G}_{P}$ that fully avoid $F$.
\item{(I5)} Every subgraph in $\mathcal{G}_{P}$ has at most $\frac{m}{f}$ many edges.  
\item{(I6)} Computing the subset of edges  in each $G_i \in \mathcal{G}_{L,f}$ takes $\widetilde{O}(\frac{m}{f})$ time. 
\end{description}
%Finally, given $P$ and $\mathcal{G}_{L,f}$, one can detect the subgraphs in $\mathcal{G}_{P}$ in time $fdL\cdot \polylog(m)$. 
Additionally, (I5) and (I6)  when applied to the vertex variant \RP $\mathcal{G}_{L,f}^v$  over a graph $G$ on $n$  vertices  with vertex fault parameter $f$ yield the following: (I5v) Every subgraph in $\mathcal{G}_{P}$ has at most $\frac{n}{f}$ many vertices and (I6v) computing the subset of vertices in each $G_i \in \mathcal{G}_{L,f}^v$ takes $\widetilde{O}(\frac{n}{f})$ time.
\end{theorem}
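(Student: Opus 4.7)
The plan is to mirror the construction of Theorem~\ref{thm:general-covering} but to replace the Hit-and-Miss hash family with a \emph{Strong} Hit-and-Miss family from Lemma~\ref{lem:strongcodes}, whose extra guarantee that at least $\ell/2$ of the hash functions separate any given pair of small sets is precisely what will deliver property (I4). Since $L\le f$, we take $a=f$ and $b=L$ in the \HM\ framework, so the alphabet reduction of Lemma~\ref{lem:alphabet} indexes subgraphs by pairs $(i,S)$ with $i\in[\ell]$ and $S\subseteq[q]$ of size at most $L$. The subgraph $G_{i,S}$ is defined to retain exactly those edges whose label $x\in[m]$ satisfies $h_i(x)\in S$. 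Because path edges always play the role of $B$ (the smaller side) in the \HM\ definition, the ``$0$-side'' subgraph $G_{i,S}$ alone suffices to cover any replacement path, and we may discard the complementary ``$1$-side'' subgraphs; the resulting family $\mathcal{G}_{L,f}:=\{G_{i,S}:i\in[\ell],\,S\in\binom{[q]}{\le L}\}$ is a valid $(L,f)$-\RP\ by the correctness argument of Proposition~\ref{prop:RPC}, with the same covering-value and running-time bounds (up to constants) as in Theorem~\ref{thm:general-covering}, where Remark~\ref{rem:app} is invoked to stay in the Reed-Solomon regime when the AG-based choice would be worse.

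To establish (I1)--(I4), given a path $P$ with $|P|\le L$, set $S_i:=h_i(P)$ for each $i\in[\ell]$ (a set of size at most $L$) and let $\mathcal{G}_P:=\{G_{i,S_i}:i\in[\ell]\}$. Property (I3) is immediate since every edge of $P$ has hash value in $S_i$ and is therefore retained in $G_{i,S_i}$. For (I4), fix any $F\subseteq E(G)$ with $|F|\le f$ and apply the Strong \HM\ property with $A:=F$ and $B:=P$: for at least $\ell/2$ indices $i$ the function $h_i$ separates $A$ from $B$, hence $h_i(F)\cap S_i=\emptyset$ and no edge of $F$ survives in $G_{i,S_i}$. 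Property (I1), namely $|\mathcal{G}_P|=\ell=fL\cdot\polylog(m)$, is just the block length of the underlying code in the relevant regime of Lemma~\ref{lem:strongcodes} (taking $a=f$, $b=L$). Property (I2) reduces to encoding the at most $L$ edge labels of $P$ via the code $C$, which takes $\polylog(m)$ time per codeword and an additional $\polylog(m)$ to look up which of the $\ell$ precomputed subgraphs equals each $G_{i,S_i}$, matching the stated time bound.

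For (I5) and (I6) we exploit the balancedness of algebraic hash families. After padding $[m]$ up to the nearest power of the alphabet size $q$ (losing at most a constant factor), each Reed-Solomon coordinate map $h_i:[m]\to\mathbb{F}_q$ is a linear surjection whose fibers all have size exactly $m/q$; the same holds up to a constant for the AG-concatenated-with-RS code of Lemma~\ref{lem:concat}, because each coordinate is still a linear function of the message over a field of size at least $q$. Since the construction of Lemma~\ref{lem:strongcodes} ensures $q=\Omega(fL)$, we get $|E(G_{i,S})|\le L\cdot(m/q)=O(m/f)$, which is (I5). For (I6), we pre-sort, once and for all and in $\widetilde{O}(m)$ time per index $i$, the $m$ edge labels by the value of $h_i$; extracting the edge set of any $G_{i,S}$ then amounts to concatenating at most $L$ precomputed buckets of total size $O(Lm/q)=O(m/f)$.

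Finally, the vertex variants (I5v) and (I6v) follow by the same argument applied to $V(G)$ in place of $E(G)$: label vertices by $[n]$, hash the labels, and build each subgraph as the induced subgraph on $\{v:h_i(\mathrm{label}(v))\in S\}$. The main technical subtlety in the plan is verifying the balancedness claim underlying (I5)--(I6) for the AG-based Strong \HM\ of Lemma~\ref{lem:strongcodes}: balancedness is immediate for Reed-Solomon, since every coordinate is a polynomial evaluation at a fixed field element, but for the concatenated construction one must check that each coordinate of the outer AG code composed with the inner RS code still yields roughly equal preimage sizes. This can be argued via linearity of the evaluation map and a careful accounting of the inner code's contribution, at the cost of at most a constant factor absorbed into the constant hidden in (I5) and an extra polylogarithmic factor absorbed into the $\widetilde{O}(\cdot)$ of (I6).
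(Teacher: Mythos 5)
Your proposal is correct and takes essentially the same route as the paper: a Strong \HM{} built from Reed--Solomon (and, in the small-$L$ regime, AG-concatenated) codes with $a=f$, $b=L$, keeping only the ``$0$-side'' subgraphs, defining $\mathcal{G}_P$ via $S_i=h_i(P)$ for (I1)--(I4), and using $1$-wise independence (balancedness of each nonzero linear coordinate functional) for (I5). The only deviation is in (I6), where you extract each subgraph from pre-sorted hash buckets (an $\widetilde{O}(m)$-per-coordinate preprocessing, amortized within the construction time), whereas the paper enumerates the fiber of each subgraph directly by solving $\langle \vec{a_j},x\rangle\in S$ over the message space in $\widetilde{O}(mL/q)$ time per subgraph with no preprocessing; both yield the stated bound.
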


The proofs of (I1) to (I4)  of the above theorem follow by the more general statement below about code based constructions of Strong \HM, and applying to it the parameters of specific codes. The proofs of (I5) and (I6) follows by a nice property of linear codes.

\begin{lemma}\label{lem:code-to-paths}
Given a graph $G$ on $m$ edges and integer parameters $L,f,q,\ell$, and a  \code{\log_q m,\ell,\delta}{q} code $C$ with relative distance $\delta> 1-\frac{1}{2Lf}$, then,
the $(L,f)$-\RP $\mathcal{G}_{L,f}$ given by Proposition~\ref{prop:RPC} on providing \SHMC{C} has the following property. Let $P$ be a replacement path segment of $d\le L$ edges. Then, there exist a collection $\mathcal{G}_{P}$ of at most $\ell$ subgraphs in $\mathcal{G}_{L,f}$ that satisfy the following: 
\begin{itemize}
\item Every subgraph in $\mathcal{G}_{P}$ fully contains $P$.
\item For every set $F \subseteq E$ of at most $f$ edges, there are at least $|\mathcal{G}_{P}|/2$ subgraphs in $\mathcal{G}_{P}$ that fully avoid $F$.
\end{itemize}
Moreover, given $P$ and $\mathcal{G}_{L,f}$, one can detect the subgraphs in $\mathcal{G}_{P}$ in time $O(d \cdot (\ell+ \mathsf{en}(C)))$, where $\mathsf{en}(C)$ is the time needed to encode a message using $C$. 
\end{lemma}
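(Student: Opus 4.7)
The plan is to mirror the construction of $\mathcal{G}_F$ in Lemma~\ref{lem:code-to-subgraphs}, but with the roles of the replacement path $P$ and the fault set $F$ interchanged, and to use the stronger ``$\ge 1/2$ fraction of good indices'' guarantee of $\mathsf{SHM}_2(C)$ in place of the ``at least one good index'' guarantee of $\mathsf{HM}_2(C)$. Since the lemma is applied in the regime $L\le f$ of Theorem~\ref{lem:num-collide-paths}, we have $b=\min\{L,f\}=L$, so any subset of $[q]$ of size at most $L$ is an admissible $S$ in the alphabet-reduction step of Lemma~\ref{lem:alphabet}.

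Concretely, write $P=\{e_{j_1},\ldots,e_{j_d}\}$ and for each $i\in[\ell]$ define
$$T_P^{(i)} := \{C(j_\kappa)_i : \kappa\in[d]\}\subseteq [q],$$
so $|T_P^{(i)}|\le d\le L=b$. Let $G_i$ be the subgraph produced by Proposition~\ref{prop:RPC} from the Boolean hash $h'_{i,\,T_P^{(i)}}\in\mathsf{SHM}_2(C)$ with $\rho=0$, that is, the subgraph retaining exactly those edges $x$ with $C(x)_i\in T_P^{(i)}$. Set $\mathcal{G}_P:=\{G_i:i\in[\ell]\}$, which has cardinality at most $\ell$ and is contained in $\mathcal{G}_{L,f}$. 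Containment of $P$ is immediate: for every $\kappa$, $C(j_\kappa)_i\in T_P^{(i)}$ by definition, so every edge of $P$ survives in every $G_i$.

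For the avoidance property, fix $F=\{e_{r_1},\ldots,e_{r_{d'}}\}$ with $d'\le f$ and apply the Strong HM incarnation of $C$ (Proposition~\ref{prop:codehash} as strengthened in Lemma~\ref{lem:strongcodes}) to the disjoint pair $A=\{r_1,\ldots,r_{d'}\}$ and $B=\{j_1,\ldots,j_d\}$: since $\delta> 1-\frac{1}{2Lf}\ge 1-\frac{1}{2|A||B|}$, the union bound used in \eqref{equnion} shows that for at least $\ell/2$ indices $i$ we have $C(r_k)_i\neq C(j_\kappa)_i$ for every $(k,\kappa)$, i.e., $\{C(r_k)_i:k\in[d']\}\cap T_P^{(i)}=\emptyset$. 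For every such ``good'' $i$ each edge of $F$ has $C(r_k)_i\notin T_P^{(i)}$ and is therefore removed from $G_i$, yielding at least $|\mathcal{G}_P|/2$ subgraphs in $\mathcal{G}_P$ that avoid $F$, as required.

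For the detection complexity, once $P$ is known one encodes its $d$ labels under $C$ in $O(d\cdot \mathsf{en}(C))$ time, then reads off $T_P^{(i)}$ across all coordinates $i\in[\ell]$ in $O(d\ell)$ additional time, giving the claimed $O(d\cdot(\ell+\mathsf{en}(C)))$ bound. The one place I expect to need care is the bookkeeping at the interface between the Strong HM property of $C$ and the alphabet-reduction layer: specifically, verifying that $|T_P^{(i)}|\le b$ for every $i$ (which is why the $L\le f$ hypothesis enters in an essential way) and that the $\ge 1/2$ fraction of good code-level indices for the pair $(A,B)$ transfers cleanly to a $\ge 1/2$ fraction of good subgraphs in $\mathcal{G}_P$ after alphabet reduction. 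The latter works because the distinguished set $S=T_P^{(i)}$ chosen at each coordinate $i$ depends only on $P$ and not on $F$, so the ``good'' code-level indices for any prospective $F$ coincide exactly with indices at which $G_i$ avoids $F$.
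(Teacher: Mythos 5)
Your proposal is correct and follows essentially the same route as the paper's proof: define for each coordinate $i\in[\ell]$ the set of symbols taken by the path's codewords, take the subgraphs of Proposition~\ref{prop:RPC} retaining exactly the edges mapping into that set (the $\rho=0$ graphs under the alphabet reduction of Lemma~\ref{lem:alphabet}), and invoke the Strong \HM{} guarantee \eqref{eq:SHM} on the disjoint path/fault label sets to get the $\ge |\mathcal{G}_P|/2$ avoidance bound, with the same $O(d\cdot(\ell+\mathsf{en}(C)))$ detection argument. Your explicit check that $|T_P^{(i)}|\le d\le L=b$ (so the chosen sets are admissible in the alphabet-reduction layer, which is where $L\le f$ is used) is the same implicit bookkeeping the paper relies on, just stated more carefully.
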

\begin{proof}
 For every $i\in [\ell]$ let $S_i\subseteq [q]$ be defined as:
$$
S_i:=\{C(r_j)_i\mid j\in[d]\},
$$ 
where $P=\{e_{r_1},\ldots ,e_{r_d}\}$. For every $i\in[\ell]$ we include the subgraph $G_i$ in $\mathcal{G}_F$ if and only if the only edges in $G$ preserved in $G_i$ are the ones mapped to an element of $S_i$ under $C_i$. It is clear that $|\mathcal{G}_F|$ by definition is at most $\ell$. 
Moreover, 
the computation time of the indices of the graphs in $\mathcal{G}_P$ is  $O(d\cdot (\ell+\mathsf{en}(C)))$ as once we encode the $d$ edges of $P$ using $C$, we can specify the indices of the subgraphs in $\mathcal{G}_P$ explicitly as defined above.

To note that $\mathcal{G}_P$ is a subset of $\mathcal{G}_{L,f}$, notice that for every $i\in [\ell]$ and every $S_i$ as defined above, we have in \SHMC{C} a hash function $h:[m]\to\{0,1\}$ which maps to 0 exactly those edges (labels of edges) whose corresponding codeword on the $i^{\text{th}}$ coordinate is contained in $S_i$ (see the proof of Lemma~\ref{lem:alphabet} to verify this). Then, whence   \SHMC{C} is provided to Proposition~\ref{prop:RPC}, the graph $G_{i,0}$ in $\mathcal{G}_{L,f}^{\text{\SHMC{C}}}$ in the proof of Proposition~\ref{prop:RPC} is precisely the graph $G_i$ in $\mathcal{G}_P$.

All that is left to show are the structural properties of $\mathcal{G}_P$. By definition of $G_i$, it is clear that all the edges in $P$ are preserved in each $G_i$. Furthermore, for every set $F:=\{e_{j_1},\ldots ,e_{j_t}\} \subseteq E$ of at most $f$ edges, we have from \eqref{eq:SHM} that 
$$\Pr_{i\sim[\ell]}\left[\forall (x,y)\in [d]\times [t],\ C(e_{r_x})_i\neq C(e_{j_y})_i\right]\ge \frac{1}{2}.$$

Therefore all the edges of $F$ are avoided  in at least half the graphs in $\mathcal{G}_{P}$.
\end{proof}

\begin{proof}[Proof of Theorem~\ref{lem:num-collide-paths}]
Since we have $L\le f$, the bounds in Theorem~\ref{thm:general-covering} follow here as well with setting $a=f$ and $b=L$, while we avoid the case when $b=\Omega(\log m)$ in order to get the right bounds (we consider this case to be covered by the `otherwise' case construction in Theorem~\ref{thm:general-covering}). In order to see that (I1) to (I4) holds, we only need to verify that for the Reed Solomon code $\rs$ and the concatenated code $\con$ (from Lemma~\ref{lem:concat}) when we plug in \SHMC{\rs} and \SHMC{\con} respectively into Lemma~\ref{lem:code-to-paths}, that the parameters are as claimed in the theorem statement. 

The block length $\ell$ of $\rs$ is set to be at most $\frac{4Lf\log m}{\log L}$ in Lemma~\ref{lem:strongcodes}. If $L\ge m^{1/c}$ then $|\mathcal{G}_P|\le\ell=O(cLf)$ and otherwise  we have $|\mathcal{G}_P|\le \ell=O(Lf\log m)$.  

The block length $\ell$ of $\con$ is set to be at most $\frac{256L^2f^2\log m}{\log L}$ in Lemma~\ref{lem:strongcodes}. Since we apply this bound to the case where $L\le \log m$ then $|\mathcal{G}_P|\le \ell=O(L^2f^2\log m)=O(Lf\log^3 m)$.

Plugging in the bound on the above block lengths of the two codes into Lemma~\ref{lem:code-to-paths} gives (I1) to (I4) in the theorem statement. Note that the encoding time of $\rs$ is $\ell\cdot \polylog(m)=Lf\polylog(m)$ and while the encoding time of a codeword $\con$ is $O(\ell^3)$, since $L\le f\le \log m$, we have that the encoding time of $\con$ is also $\polylog(m)$. 

Thus we now look towards proving (I5) and (I6). Notice that since $L\le f$, and the Boolean \HM provided to Proposition~\ref{prop:RPC} in the proof of Theorem~\ref{thm:general-covering} arises from the alphabet reduction of Lemma~\ref{lem:alphabet}, we know that we can even exclude all the subgraphs $G_{i,1}$ (for all $i\in[\ell]$) in the proof of Proposition~\ref{prop:RPC}, to only have $\ell$ many subgraphs in $\mathcal{G}_{L,f}$. We will use this simplification later in this proof.

In order to see that every subgraph in $\mathcal{G}_{L,f}$  has at most $\frac{m}{f}$ many edges (i.e., (I5)), we only need to verify that the Reed Solomon code $\rs$ and the concatenated code $\con$ (from Lemma~\ref{lem:concat})  are 1-wise independent: A code $C\subseteq [q]^\ell$ is said to be  1-wise independent if and only if for every $i\in[\ell]$ and every $\zeta\in[q]$ we have $$  \Pr_{x\sim C}[x_i=\zeta]= \frac{1}{q}.$$

Let us first see  why it suffices for (I5)   to show that $\rs$ and $\con$ are 1-wise independent. Given $L,f$, fix a code $C\in\{\rs,\con\}$ which optimizes the parameters of Theorem~\ref{thm:general-covering}. 
Fix a subgraph $G'$ in $\mathcal{G}_{L,f}$. By construction of $\mathcal{G}_{L,f}$ there exists $h\in $ \HMC{C}, such that the edge $e_i$ in $G$ is retained in $G'$ if and only if $h(i)=0$. Since the hash functions in \HMC{C} are indexed by the set $[\ell]\times \binom{[q]}{\le L}$ (where $q$ is the alphabet size and $\ell$ is the block length of $C$), let the index of $h$ be $(j,S)\in [\ell]\times \binom{[q]}{\le L}$. Notice that the number of edges in $G'$ is simply the subset $E'\subseteq [m]$ defined as $E'=\{x\in[m]\mid C(x)_j\in S\}$. However, since $C$ is 1-wise independent, we have that $\Pr_{x\sim C}[x_j\in S]= \frac{|S|}{q}$, and thus $|E'|=m\cdot |S|/q\le mL/q$. If $C=\rs$ then $q\ge Lf\log m$, and thus $|E'|\le m/(f\log m)$, and if $C=\con$ then $q\ge Lf$, and thus $|E'|\le m/f$.  This proves (I5).

We now return our focus to showing that $\rs$ and $\con$ are 1-wise independent. In fact we will show a stronger statement: every linear code $C$ is 1-wise independent. Let $A_{\ell\times \log_q m}:=(\vec{a_1},\ldots ,\vec{a_\ell})$ be the generator matrix of $C\subseteq [q]^\ell$. Then we can rewrite the claim of showing  1-wise independence as follows: for every $i\in[\ell]$ and every $\zeta\in[q]$ we have $$  \Pr_{y\sim [q]^{\log_q m}}[(Ay)_i=\zeta]= \frac{1}{q}.$$
We now rewrite $(Ay)_i$ as $\langle \vec{a_i},y\rangle$, and since $\vec{a_i}$ is not the zero vector the claim follows (by even just a simple induction argument on the dimension).

Now we show (I6). Fix some $G_i$ in $\mathcal{G}_{L,f}$.  By construction of $\mathcal{G}_{L,f}$ we may interpret the index $i$ as some $(j,S)\in [\ell]\times \binom{[q]}{\le L}$ such that the edge $e_x$ in $G$ is retained in $G_i$ if and only if $C(x)_j\in S$. Let $A_C:=(\vec{a_1},\ldots ,\vec{a_\ell})$ be the generator matrix of $C$. We can determine  the subset $T$ of $[q]^{\log_q m}$ defined as follows:
$$
T:=\{x\in [q]^{\log_q m}\mid \langle \vec{a_j},x\rangle\in S\}.
$$

Then interpretting $T$ as a subset of $[m]$ simply gives us the edge set of $G_i$. To compute $T$ efficiently, we first compute for every  $r\in [q]^{(\log_q m) -1}$ and every $z\in S$, the value: $$\alpha:=\left(z-\sum_{w=1}^{(\log_q m) -1}\left(\vec{a_j}(w)\cdot r_w\right)\right)\cdot (\vec{a_{j}}(\log_q m))^{^{-1}}.$$  Then we include the vector $(r,\alpha)\in [q]^{\log_q m}$ into T. Thus $T$ can be computed in time $\tilde{O}(m|S|/q)=\tilde{O}(mL/q)$. And as before if $C=\rs$ then $q\ge Lf\log m$, and thus $\tilde{O}(mL/q)=\tilde{O}(m/(f\log m))$, and if $C=\con$ then $q\ge Lf$, and thus $\tilde{O}(mL/q)=\tilde{O}(m/f)$.  This proves (I6). 
\end{proof}

\section{Lower Bounds for $(L,f)$-\RPC}\label{sec:LB}
In this section we provide a lower bound construction for the covering value of $(L,f)$-\RP and establish Theorem \ref{thm:lower-bound}. Our lower bound graph is based on a modification of the graph construction used to obtain a lower bound on the size of $f$-failure FT-BFS structures, defined as follows. 
\begin{definition}[FT-BFS Structures]\cite{ParterP16,parter2015dual}
Given a (possibly weighted) $n$-vertex graph $G=(V,E)$, a source vertex $s \in V$, and a bound $f$ on the number of (edge) faults $f$, a subgraph $H \subseteq G$ is an $f$-failure FT-BFS structure with respect to $s$ if
$\dist(s,t, H \setminus F)=\dist(s,t, G \setminus F) \mbox{~for every~} t \in V, F \subseteq E(G), |F|\leq f~.$
\end{definition}
FT-BFS structures were introduced by the second author and Peleg \cite{ParterP16} for the single (edge or vertex) failure. It was shown that for any unweighted $n$-vertex graphs and any source node $s$, one can compute an $1$-failure FT-BFS subgraph with $O(n^{3/2})$ edges. This was complemented by a matching lower bound graph. In \cite{parter2015dual}, the lower bound graph construction was extended to any number of faults $f$, which would serve the basis for our $(L,f)$-\RP lower bound argument. 
\begin{fact}\label{fc:ft-bfs}\cite{parter2015dual}
For large enough $n$, and $f\geq 1$, there exists an $n$-vertex graph $G^*_f$ and a source vertex $s$ such that any $f$-failure-BFS structure with respect to $s$ has $\Omega(n^{2-1/(f+1)})$ edges.
\end{fact}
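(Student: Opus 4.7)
The plan is to exhibit an explicit graph $G^*_f$ and source $s$ in which $\Omega(n^{2-1/(f+1)})$ distinct edges are each \emph{irreplaceable}: for every such edge $e$, I will pin down a target $t$ and a fault set $F$ of size at most $f$ such that every shortest $s$-$t$ path in $G^*_f\setminus F$ traverses $e$. Any $f$-FT-BFS subgraph $H$ is then forced to contain $e$, because otherwise $H\setminus F$ either disconnects $s$ from $t$ or realizes a strictly larger $s$-$t$ distance than $\dist(s,t,G^*_f\setminus F)$. Counting the irreplaceable edges yields the bound.

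For the construction, parameterize by $k=\lceil n^{1/(f+1)}\rceil$ and build a layered graph in three parts: (i) the source $s$; (ii) a ``hub'' layer $U$ of size $\Theta(k^f)$ where each hub is attached to $s$ through a sub-gadget that supplies roughly $f+1$ edge-disjoint $s$-to-hub routes of carefully chosen lengths; and (iii) a ``target'' layer $T$ of size $\Theta(n/k^f)=\Theta(n^{1-1/(f+1)})$ joined to $U$ by a bipartite graph with $\Omega(n^{2-1/(f+1)})$ edges. The total vertex count is $\Theta(n)$. The bipartite incidence between $U$ and $T$ is chosen so that, for any target $t\in T$ and any small set of ``alternate'' hub-neighbors of $t$, a prescribed number of short $s$-$t$ routes remain independent in an edge-disjointness sense; an incidence derived from a projective-plane-like design, or a random bipartite graph of the appropriate density, both suffice. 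This is essentially the recursive gadget of \cite{parter2015dual}, equipped with the edge weights the present paper needs.

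To certify irreplaceability of a given bipartite edge $e=(u,t)$, I design a fault set $F_e$ of size exactly $f$ that kills every $s$-$t$ route of length at most $\dist(s,t,G^*_f\setminus F_e)$ except one going through $e$. Concretely, identify $f$ short ``alternative approaches'' to $t$, each using a hub neighbor of $t$ other than $u$, and fault the unique critical edge inside each of these $f$ sub-gadgets. With the sub-gadget lengths tuned so that any post-fault $s$-$t$ path avoiding $u$ becomes strictly longer than the cheapest route through $u$, the edge $e$ lies on every shortest $s$-$t$ path of $G^*_f\setminus F_e$. Summing over the $\Omega(n^{2-1/(f+1)})$ bipartite edges gives the claim.

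The main obstacle is setting up the bipartite incidence and the sub-gadget weights so that (a) each bipartite edge $e$ admits a witness $(t,F_e)$ for which $e$ is genuinely forced, and (b) the witnesses for distinct edges do not allow a single outside edge of $H$ to cover many of them simultaneously. This is a combinatorial counting argument rather than a deep new idea once the skeleton of \cite{parter2015dual} is in place: one verifies that the hub-neighborhoods of distinct targets intersect in a controlled way, which bounds the ``savings'' from reusing edges across witness pairs and hence preserves the $\Omega(n^{2-1/(f+1)})$ bound.
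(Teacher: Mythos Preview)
Your high-level plan (build a dense bipartite piece and show every bipartite edge is individually forced by some $(t,F)$ witness) is exactly the strategy of \cite{parter2015dual}, which the present paper recounts in Section~\ref{sec:LB}. However, several concrete features of your sketch diverge from the actual construction in ways that would break the argument.

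\textbf{Parameters.} With $k=\lceil n^{1/(f+1)}\rceil$ you compute $|T|=\Theta(n/k^f)$ and then write this equals $\Theta(n^{1-1/(f+1)})$; in fact $n/k^f=\Theta(n^{1/(f+1)})$. More importantly, in the real construction the target layer $X$ has size $\Theta(n)$, not $\Theta(n/k^f)$: the recursive gadget $G_f(d)$ uses $\Theta(d^{f+1})=\Theta(n)$ vertices, the leaf/hub set has size $d^f=\Theta(n^{1-1/(f+1)})$, and the remaining $\Theta(n)$ vertices form $X$. With your sizes, even a complete bipartite graph between $U$ and $T$ has only $\Theta(n)$ edges.

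\textbf{The bipartite part is complete, not a design.} You propose a projective-plane-like or random incidence and then worry about obstacle (b), namely that one outside edge might ``cover'' many witnesses. In the actual construction the bipartite graph $B$ between the leaves and $X$ is complete, and there is no such obstacle: once an edge $(z_j,x_i)$ is shown to be the unique last edge of the shortest $s$--$x_i$ path in $G\setminus \LAB_f(z_j)$, that edge must lie in $H$, full stop. The witnesses are not being counted with multiplicity or traded off against each other.

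\textbf{The gadget does not give $f+1$ routes per hub.} You describe each hub as having $f{+}1$ edge-disjoint $s$-to-hub routes, with the fault set disabling competitors' sub-gadgets. The construction of \cite{parter2015dual} (Lemma~\ref{lem:prop_induc_path}(1) here) is the opposite: each leaf $z_j$ has a \emph{unique} root-to-leaf path $P(z_j,G_f(d))$. The fault set $\LAB_f(z_j)$ is attached to the \emph{leaf} (not to the bipartite edge) and has the effect that paths to leaves $z_i$ with $i>j$ are destroyed while paths to leaves $z_i$ with $i<j$ are strictly longer. Hence under $\LAB_f(z_j)$ the shortest $s$--$x$ path for every $x\in X$ must pass through $z_j$, forcing all $|X|$ edges incident to $z_j$ simultaneously. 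This is what makes the complete bipartite graph work with only $d^f$ distinct fault sets.

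In short, the skeleton you cite is the right one, but your reinterpretation of the gadget (multiple disjoint routes, sparse designed bipartite graph, per-edge fault sets) is not the construction and, with the parameters you wrote, would not yield $\Omega(n^{2-1/(f+1)})$ forced edges.
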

In the high-level, the lower bound graph $G^*_f$ consists of a dense bipartite subgraph $B$ with $\Omega(n^{2-1/(f+1)})$ edges, and a collection of $\{s\} \times V$ paths, that serve as replacement paths from $s$ to all other vertices in $G$. The collection of paths are defined in a careful manner in a way that forces any $f$-failure FT-BFS for $s$ to include \emph{all} the edge of the bipartite graph $B$. To translate this construction into one that yields an $(L,f)$-\RP of large \CV, our key idea is to \emph{shortcut} the edge-length of $\{s\} \times V$ replacement paths of $G^*_f$ by means of introducing weights to the edges. As a result, we get a weighted graph $G^w_f$ whose all $\{s\} \times V$ replacement paths have at most $L$ edges for any given parameter $L \leq (n/f)^{1/(f+1)}$. By setting the weights carefully, one can show that any $f$-failure FT-BFS for the designated source $s$ must have $\Omega(L^f \cdot n)$ edges. To complement the argument, consider the optimal $(L,f)$-\RP $\mathcal{G}$ of minimal value for $G^w_f$. Since all the $\{s\} \times V$ paths are of length at most $L$, the replacement paths are resiliently covered by $\mathcal{G}$. This yields the following simple construction of $f$-failure FT-BFS $H \subseteq G$: Compute a shortest-path tree in each subgraph $G' \in \mathcal{G}$, and take the union of these subgraphs as the output subgraph $H$. 
Since this construction yields an $f$-failure FT-BFS with $O(|\mathcal{G}|n)$ edges, we conclude that $|\mathcal{G}|=\Omega(L^f\cdot n)$. We next explain this construction in details. 

In the next description, we use the notation of \cite{parter2015dual} and introduced several key adaptations along the way.  Our lower bound graph $G^w_f$ similarly to Fact \ref{fc:ft-bfs} is based on a graph $G_f(d)$ which is defined inductively. Note that whereas in \cite{parter2015dual}, the graph $G_f(d)$ is unweighted, for our purposes (making all replacement paths short in terms of number of edges) some edges will be given weights. For $f=1$, $G_1(d)$ consists of three components: (i) a set of vertices $U=\{u^1_1,\ldots, u^1_d\}$ connected by a path $P_1=[u^1_1,\ldots, u^1_d]$, (ii) a set of terminal vertices $Z=\{z_1,\ldots, z_d\}$, and (iii) a collection of $d$ edges $e^1_i$ of weight $w(e^1_i)=6+2(d-i)$ connecting $u^1_i$ and $z_i$ for every $i \in \{1,\ldots,f\}$. The vertex $r(G_1(d))=u^1_1$, and the terminal vertices of $Z$ are the \emph{leaves} of the graph denoted by $\Leaf(G_1(d))=Z$. 
Each leaf node $z_i \in \Leaf(G_1(d))$ is assigned a label based on a labeling function
$\LAB_1: \Leaf(G_1(d))  \to E(G_1(d))^1$. The label of the leaf corresponds to a set of edge faults under which
the path from root to leaf is still maintained. Specifically, $\LAB_1(z_i, G_1(d))=(u^1_i,u^1_{i+1})$ for $i \leq d-1$ and $\LAB_e(z_i, G_1(d))=\emptyset$. In addition, define
$P(z_i,G_1(d)) = P_1[\Root(G_1(d)),u^1_i] \circ Q^1_i$
to be the path from the root $u^1_1$ to the leaf $z_i$.

We next describe the inductive construction of the graph $G_{f}(d)=(V_{f}, E_{f})$, for every $f\ge 2$, given the graph $G_{f-1}(d)=(V_{f-1}, E_{f-1})$. The weights are introduced only in this induction step, i.e., for $f\geq 2$. 
The graph $G_{f}(d)=(V_{f}, E_{f})$ consists of the following components. 
First, it contains a path $P_f=[u^f_1, \ldots, u^f_d]$, where
the node $\Root(G_{f}(d))=u^f_1$ is fixed to be the root.
In addition, it contains $d$ disjoint copies of the graph $G'=G_{f-1}(d)$,
denoted by $G'_1, \ldots, G'_d$ (viewed by convention as ordered from left to right),
where each $G'_i$ is connected to $u^f_i$ by a collection of $d$
edges $e^f_i$, for $i \in \{1, \ldots, d\}$,
connecting the vertices $u^f_i$ with $\Root(G'_i)$.
The \emph{edge weight} of each $e^f_i$ is $w(e^f_i)=(d-i)\cdot \depth(G_{f-1}(d))$.
In the construction of \cite{parter2015dual}, each edge $e^f_i$ is replaced by a \emph{path} $Q^f_i$ of length $w(e^f_i)$. This is the only distinction compared to \cite{parter2015dual}. Note that by replacing a path $Q^f_i$ by a single edge $e^f_i$ of weight $|Q^f_i|$, the weighted length of the replacement paths would preserve but their length in terms in number of edges is considerably shorter. The leaf set of the graph $G_{f}(d)$ is the union of the leaf sets of $G'_j$'s, $\Leaf(G_{f}(d))=\bigcup_{j=1}^d \Leaf(G'_j)$. See Fig. \ref{fig:lowerbound} for an illustration for the special case of $f=2$. 

Finally, it remains to define the labels $\LAB_f(z_i)$ for each $z_i \in \Leaf(G_{f}(d))$.
For every $j \in \{1, \ldots, d-1\}$ and any leaf $z_j \in \Leaf(G'_j)$,
let $\LAB_f(z_j, G_{f}(d))=(u^f_j,u^f_{j+1}) \circ \LAB_{f-1}(z_j, G'_j)$.
Denote the size (number of nodes) of $G_f(d)$ by $\NodesIn(f,d)$,
its depth (maximal weighted distance between two nodes) by $\depth(f,d)$,
and its number of leaves by  $\NLeaf(f,d) = |\Leaf(G_f(d))|$.
Note that for $f=1$, $\NodesIn(1,d) = 2d+\sum_{i=1}^d 4+2 \cdot (d-i) \leq 7d^2$,
$\depth(1,d)=6+2(d-1)$ (corresponding to the length of the path $Q^1_1$),
and $\NLeaf(1,d)=d$. Since in our construction, we only shortcut the length of the paths, the following inductive relations hold as in \cite{parter2015dual}.
\begin{observation}[Observation 4.2 of \cite{parter2015dual}]
\label{obs:rel}~
\begin{description}
\item{(a)}
$\depth(f,d)=O(d^f)$.
\item{(b)}
$\NLeaf(f,d)=d^f$.
\item{(c)}
$\NodesIn(f,d)=c \cdot d^{f+1}$ for some constant $c$.
\end{description}
\end{observation}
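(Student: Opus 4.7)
The plan is to prove all three parts by a single straightforward induction on the fault parameter $f$, exploiting the recursive structure of $G_f(d)$. The base case $f=1$ is essentially done already in the excerpt: the explicit description gives $\NLeaf(1,d)=d$, $\NodesIn(1,d)=2d+\sum_{i=1}^d 1 \le 7d^2$, and $\depth(1,d) = 6+2(d-1) = O(d)$ (the longest path from $u^1_1$ travels along $P_1$ to some $u^1_i$ and then across $e^1_i$, yielding weighted length $(i-1)+6+2(d-i)$, maximized at $i=1$). So the claim holds at $f=1$ with $\depth(1,d)=O(d^1)$, $\NLeaf(1,d)=d^1$, $\NodesIn(1,d)\le c\cdot d^{2}$ for $c:=7$.

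For the inductive step, assume the bounds hold at level $f-1$ and use the recursive construction: $G_f(d)$ consists of the path $P_f=[u^f_1,\ldots,u^f_d]$ together with $d$ disjoint copies $G'_1,\ldots,G'_d$ of $G_{f-1}(d)$, where $G'_i$ is attached to $u^f_i$ via a single edge $e^f_i$ of weight $(d-i)\cdot\depth(f-1,d)$. For part (b), since $\Leaf(G_f(d))=\bigcup_{i=1}^d\Leaf(G'_i)$ and these are disjoint, the recurrence $\NLeaf(f,d)=d\cdot\NLeaf(f-1,d)=d\cdot d^{f-1}=d^f$ is immediate. For part (a), any leaf $z\in\Leaf(G'_i)$ lies at weighted distance from the root $u^f_1$ equal to $(i-1)+(d-i)\cdot\depth(f-1,d)+\dist_{G'_i}(\Root(G'_i),z)$, which is bounded by $d\cdot\depth(f-1,d)+d$. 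By the inductive hypothesis $\depth(f-1,d)\le C\cdot d^{f-1}$ for a suitable constant $C$, and so $\depth(f,d)\le Cd^f+d=O(d^f)$.

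For part (c), counting vertices gives the recurrence
\[
\NodesIn(f,d) \;=\; d + d\cdot\NodesIn(f-1,d) \;\le\; d + d\cdot c\cdot d^{f} \;=\; d + c\cdot d^{f+1}.
\]
To absorb the additive $d$ into the leading term uniformly in $f$, one picks the base constant $c$ slightly larger than $7$ (equivalently, notes that $\NodesIn(f,d)\le c\cdot d^{f+1}+\sum_{j=1}^{f}d^{j}\le (c+2)\cdot d^{f+1}$ for $d\ge 2$); after this single bookkeeping adjustment the bound $\NodesIn(f,d)\le c'\cdot d^{f+1}$ holds for a universal constant $c'$, completing the induction.

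Overall there is no real obstacle: because the only change from the construction of \cite{parter2015dual} is replacing each subdivision path $Q^f_i$ by the single weighted edge $e^f_i$ of the same total weight, the weighted depth is preserved exactly while vertex and leaf counts can only decrease, so the same recursions solve identically. The one mildly delicate point is the choice of the absolute constant in (c), which must be chosen once and for all at the base to survive the telescoping addition of $d+d^2+\cdots+d^f$; this is handled by the geometric-sum bound above.
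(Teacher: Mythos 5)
The paper itself gives no proof of Observation~\ref{obs:rel}: it is imported verbatim from \cite{parter2015dual} and justified only by the one-line remark that the modification merely shortcuts paths, so your attempt to verify it by a direct induction on $f$ is a reasonable and more explicit route. Parts (a) and (b) are essentially fine: the leaf recurrence $\NLeaf(f,d)=d\cdot\NLeaf(f-1,d)$ is exactly right, and the depth recurrence $\depth(f,d)\le d\cdot\depth(f-1,d)+d$ gives $O(d^f)$, up to two small points — $\depth$ is defined as the maximum weighted distance between \emph{any} two nodes (a factor-$2$ adjustment over root-to-leaf distances), and the additive $+d$ needs the same geometric-sum absorption you spell out for (c) if the hidden constant is to be uniform in $f$.

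The genuine gap is in (c). Your base case $\NodesIn(1,d)=2d+\sum_{i=1}^d 1=3d$ is inconsistent with the paper's own count $\NodesIn(1,d)=2d+\sum_{i=1}^d\bigl(4+2(d-i)\bigr)=\Theta(d^2)$, and the discrepancy is not cosmetic: with a $\Theta(d)$ base (i.e., single weighted edges already at level $1$), the recurrence $\NodesIn(f,d)=d+d\cdot\NodesIn(f-1,d)$ solves to $\Theta(d^f)$, not $\Theta(d^{f+1})$, so the statement $\NodesIn(f,d)=c\cdot d^{f+1}$ — which the paper uses in both directions when it sets $d=\lceil(n/2c)^{1/(f+1)}\rceil$ and concludes both $n/2\le|V(G_f(d))|$ and $\NodesIn(f,d)\le 5n/8$ — simply does not follow; you establish only an upper bound. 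Relatedly, your closing claim that "the same recursions solve identically" as in \cite{parter2015dual} is wrong for the node count: in the original construction each $e^f_i$ is a subdivided path with $(d-i)\cdot\depth(f-1,d)$ edges, whose internal vertices add a $\Theta(d^{f+1})$ term to the recursion at each level, and shortcutting removes exactly that term. To repair (c) you must adopt the reading that the paper's formula presupposes — shortcutting (weights) is introduced only for $f\ge 2$, so the level-$1$ gadgets keep their $\Theta(d)$-length paths and $\NodesIn(1,d)=\Theta(d^2)$ — and then also supply the lower-bound direction, e.g.\ $\NodesIn(f,d)\ge d^{f-1}\cdot\NodesIn(1,d)=\Omega(d^{f+1})$; alternatively, explicitly weaken (c) to an upper bound and check that the downstream argument in Section~\ref{sec:LB} only needs $\NodesIn(f,d)\le 5n/8$ together with (b).
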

Consider the set of $\lambda=\NLeaf(f,d)$ leaves in $G_f(d)$, $\Leaf(G_f(d)) = \bigcup_{i=1}^d \Leaf(G'_i) = \{z_1, \ldots, z_\lambda\}$, ordered from left to right according to their appearance in $G(f,d)$. 
\begin{lemma}[Slight modification of Lemma 4.3 of \cite{parter2015dual}]
\label{lem:prop_induc_path}
For every $z_j$ it holds that: \\
(1) The path $P(z_j, G_f(d))$ is the only $u^f_1-z_j$ path in $G_f(d)$.\\
(2) $P(z_j, G_f(d)) \subseteq G \setminus \LAB_f(z_j, G_f(d))$.\\
(3) $P(z_i, G_f(d)) \not\subseteq G \setminus \LAB_f(z_j, G_f(d))$
for every $i>j$.\\
(4) $\len(P(z_i, G_f(d))) > \len(P(z_j, G_f(d)))$ for every $i < j$. 
\end{lemma}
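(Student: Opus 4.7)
The plan is to verify (1)-(3) by the same induction on $f$ used in the proof of Lemma 4.3 of \cite{parter2015dual}, since our modification only contracts each subdivided path $Q^f_i$ (for $f \geq 2$) into a single weighted edge, which preserves the combinatorial adjacency structure; only statement (4), which concerns weighted lengths, is genuinely new. In particular, uniqueness in (1) follows because each copy $G'_k$ is attached to the spine $P_f$ by a single bridge edge $e^f_k$, so any $u^f_1$-$z_j$ path must travel along $P_f$ to the entry vertex $u^f_k$ of the copy containing $z_j$ and then use the inductively unique root-to-leaf path inside that copy. Containment (2) and non-containment (3) under the label $\LAB_f(z_j)$ then follow as in the unweighted version: for $z_j \in G'_k$, removing the spine edge $(u^f_k, u^f_{k+1})$ disconnects $u^f_1$ from every copy $G'_\ell$ with $\ell > k$, while removing $\LAB_{f-1}(z_j, G'_k)$ triggers the inductive non-containment inside the same copy.

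For (4), I would induct on $f$. The base case is a direct calculation: $\len(P(z_i, G_1(d))) = (i-1) + 6 + 2(d-i) = 2d+5-i$, which is strictly decreasing in $i$. For the inductive step, fix $i < j$ and let $z_i \in G'_k$, $z_j \in G'_\ell$; the global left-to-right ordering restricts to the local ordering inside each copy, so $k \leq \ell$. If $k = \ell$, the difference of the two root-to-leaf lengths equals $\len(P(z_i, G'_k)) - \len(P(z_j, G'_k))$, which is strictly positive by the induction hypothesis applied to $G_{f-1}(d)$.

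The main technical hurdle is the cross-copy case $k < \ell$. Here the internal contribution of $P(z_j)$ inside $G'_\ell$ may exceed that of $P(z_i)$ inside $G'_k$, pulling the difference the \emph{wrong} way by up to $\depth(G_{f-1}(d))$. The weight assignment $w(e^f_i) = (d-i)\cdot \depth(G_{f-1}(d))$ is chosen precisely to offset this: it contributes $(\ell-k)\cdot \depth(G_{f-1}(d))$ in favor of $z_i$, while the spine $P_f$ costs only $(\ell-k)$ unit edges against it. Collecting terms,
\[
\len(P(z_i,G_f(d))) - \len(P(z_j,G_f(d))) \;\geq\; (\ell-k)\bigl(\depth(G_{f-1}(d)) - 1\bigr) + D_{\min}(f-1) - \depth(G_{f-1}(d)),
\]
where $D_{\min}(f-1)$ denotes the minimum root-to-leaf length in $G_{f-1}(d)$. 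To close the argument, I would strengthen the induction with the auxiliary invariant $D_{\min}(f) \geq 2$, which is easily maintained via the monotone recursion $D_{\min}(f) \geq D_{\min}(f-1) \geq \dots \geq D_{\min}(1) = d+5$. With this invariant, the right-hand side above is at least $D_{\min}(f-1) - 1 > 0$ already when $\ell - k = 1$, and only grows with larger gaps, completing (4).
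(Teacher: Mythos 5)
Your proof is correct, but it takes a different route from the paper. The paper does not re-derive the four claims at all: it observes that the construction differs from that of \cite{parter2015dual} only in that each subdivided path $Q^f_i$ is contracted to a single edge $e^f_i$ whose weight equals $|Q^f_i|$, so claims (1)--(3), being purely structural, are inherited verbatim from Lemma 4.3 of \cite{parter2015dual}, and claim (4) transfers as well once ``length'' is read as ``weighted length,'' because every root-to-leaf path in $G_f(d)$ has exactly the same weighted length as the corresponding path in the unweighted graph of \cite{parter2015dual}. You instead re-prove everything from scratch by induction, and in particular give a quantitative argument for (4): your computation of the cross-copy difference, with the lower bound $(\ell-k)(\depth(G_{f-1}(d))-1)+D_{\min}(f-1)-\depth(G_{f-1}(d))$ and the auxiliary invariant $D_{\min}(f)\ge D_{\min}(f-1)\ge\dots\ge D_{\min}(1)=d+5\ge 2$, is sound (it implicitly uses that the spine edges of $P_f$ have unit weight and that $\depth(G_{f-1}(d))$ upper-bounds every root-to-leaf weighted length in a copy, both consistent with the construction; the leaves of the last copy $G'_d$, whose labels carry no spine edge, are also handled since the same-copy case only needs the inductive label). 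What each approach buys: yours is self-contained and makes explicit \emph{why} the weight choice $(d-i)\cdot\depth(G_{f-1}(d))$ suffices to dominate the spine cost and the internal-length fluctuation, whereas the paper's transfer argument is shorter and, more importantly, preserves the exact weighted lengths of all replacement paths from the original construction --- a stronger fact than monotonicity that the paper quietly relies on when it reuses the proof of Theorem 4.1 of \cite{parter2015dual} wholesale in Lemma \ref{lem:size}. So your argument fully proves the stated lemma, but it is not a drop-in replacement for the role the paper's observation plays downstream.
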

In Lemma 4.3 of \cite{parter2015dual}, the forth claim discusses the length of the paths $P(z_i, G_f(d))$. In our case, since we shortcut the path by introducing an edge weight the equals to the length of the removed sub-path, the same claim holds only for the \emph{weighted length} of the path. 
We next show that thanks to our modifications the hop-diameter (i.e., measured by number of edges) of $G_{f-1}(d)$ is bounded, and consequently, all $\{s\}\times V$ replacement paths are short.  
\begin{claim}\label{cl:hop} 
The hop-diameter of $G_{f}(d)$ is $O(f \cdot d)$. 
\end{claim}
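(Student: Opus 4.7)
The plan is to exploit the recursive structure of $G_f(d)$ and control the hop-eccentricity of the root, from which the hop-diameter bound follows by the triangle inequality. Define $r(f,d) := \max_{v \in V(G_f(d))} \dist_{\text{hop}}(\Root(G_f(d)), v)$, where $\dist_{\text{hop}}$ counts the number of edges in a path (ignoring weights). Since any two vertices $u, v \in G_f(d)$ satisfy $\dist_{\text{hop}}(u,v) \le \dist_{\text{hop}}(u, \Root(G_f(d))) + \dist_{\text{hop}}(\Root(G_f(d)), v) \le 2r(f,d)$, it suffices to show $r(f,d) = O(f\cdot d)$. The key point is that the modification from \cite{parter2015dual} replaces the paths $Q^f_i$ (which were the source of many edges) by single weighted edges $e^f_i$, so each descent from level $f$ to level $f-1$ now costs only one hop.

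Next, I would establish the base case $r(1,d) \le d$ by direct inspection of $G_1(d)$: from the root $u^1_1$, any vertex $u^1_j$ on $P_1$ is reached in $j-1 \le d-1$ hops, and any leaf $z_j$ is reached in $(j-1)+1 \le d$ hops through the single edge $e^1_j$. Then I would prove the recursive bound
\[ r(f,d) \le r(f-1, d) + d \quad \text{for every } f \ge 2 \]
by arguing as follows. Every vertex of $G_f(d)$ is either on the path $P_f = [u^f_1, \ldots, u^f_d]$ or inside one of the copies $G'_i = G_{f-1}(d)$, $i \in [d]$. A vertex $u^f_j$ is reached from $\Root(G_f(d)) = u^f_1$ in $j-1 \le d-1$ hops along $P_f$. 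A vertex $v \in G'_j$ is reached by first traversing $P_f$ to $u^f_j$ in at most $d-1$ hops, then crossing the single edge $e^f_j$ to $\Root(G'_j)$ (one hop), and finally traversing at most $r(f-1,d)$ hops inside $G'_j$ to reach $v$. Unrolling the recurrence from the base case yields $r(f,d) \le f \cdot d$, and therefore the hop-diameter of $G_f(d)$ is at most $2fd = O(f\cdot d)$.

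The argument is essentially routine given the recursive definition; the only subtle point is to make explicit what was implicit in the construction, namely that the modification that introduces the weighted edge $e^f_i$ in place of the path $Q^f_i$ is precisely what drives the hop-diameter down from the original $\Omega(d^f)$ (weighted) depth (Observation~\ref{obs:rel}(a)) to the linear-in-$f$ bound. I do not anticipate any genuine obstacle; the main care needed is to verify that the only edges connecting $G'_i$ to the rest of $G_f(d)$ is the single edge $e^f_i$ incident to $\Root(G'_i)$, so that routing between components must pass through the root of the respective sub-copy.
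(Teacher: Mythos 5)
Your proof is correct and takes essentially the same route as the paper: an induction on $f$ that exploits the recursive structure, using the fact that each copy of $G_{f-1}(d)$ hangs off $P_f$ by a single (weighted) edge $e^f_i$ rather than a long path. Your variant via the root's hop-eccentricity plus the triangle inequality (giving $2fd$ rather than $fd$) is if anything slightly more careful than the paper's direct induction on the hop-diameter, and both yield the claimed $O(f\cdot d)$ bound.
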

\begin{proof}
The claim is shown by induction on $f$. For $f=1$, the hop-diameter of $G_1(d)$ is $|P_1|=d$. 
Assume that the claim holds up to $f-1$ and that the hop-diameter of $G_{f-1}(d)$ is at most $(f-1)d$. 
The graph $G_f(d)$ is then connected to $G_{f-1}(d)$ via the path $P_f=[u^f_1,\ldots, u^f_d]$ of hop-length $d$.
Each $u^f_i$ is connected to the root of the $i$th copy of $G_{f-1}(d)$ via an edge. Thus the hop-diameter of $G_{f}(d)$ is at most $f \cdot d$.
\end{proof}
Finally, we turn to describe the graph $G^w_f$ which establishes our
lower bound. The graph $G^w_f$ consists of three components.
The first is the modified weighted graph $G_{f}(d)$ for $d\leq \lceil(n/2c)^{1/(f+1)}\rceil$,
where $c$ is some constant to be determined later.
By Obs. \ref{obs:rel}, $n/2 \le |V(G_{f}(d))|$.
Note that $d \le (5/4)^{1/(f+1)} \cdot (n/2c)^{1/(f+1)} = (5n/8c)^{1/(f+1)}$
for sufficiently large $n$, hence $\NodesIn(f,d)=c \cdot d^{f+1} \le 5n/8$.
The second component of $G^w_f$ is a set of nodes
$X=\{x_1, \ldots, x_\chi\}$ and an additional vertex $v^{*}$ that is
connected to $u^f_{d}$ and to all the vertices of $X$.
The cardinality of $X$ is $\chi=n-\NodesIn(f,d)-1$.
The third component of $G^w_f$ is a complete bipartite graph $B$
connecting the nodes of $X$ with the leaf set $\Leaf(G_f(d))$, i.e.,
the disjoint leaf sets $\Leaf(G'_1), \ldots, \Leaf(G'_d)$.
The vertex set of the resulting graph is thus
$V=V(G_{f}(d))\cup \{v^{*}\} \cup X$ and hence $|V|=n$.
By Prop. (b) of Obs. \ref{obs:rel},
$\NLeaf(G'_i)=d^{f}=\lceil(n/2c)^{1/(f+1)}\rceil^f \ge (n/2c)^{f/(f+1)},$
hence $|E(B)| =\Theta(n \cdot d^f)$.
The following lemma follows the exact same proof as in \cite{parter2015dual}. 
\begin{lemma}\label{lem:size}[Analogue of Theorem 4.1 in \cite{parter2015dual}]
Every $f$-failure \FTBFS\ $H$ w.r.t $s=u^f_1$ in $G^w_f$ must contain all the edges of $B$.
Thus, $|E(H)|=\Omega(n \cdot d^f)$. 
\end{lemma}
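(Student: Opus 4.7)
The plan is to show that for every edge $(z_j, x_k) \in B$ there is a specific fault set $F$ of size at most $f$ such that $(z_j, x_k)$ is the unique last edge of every $s$--$x_k$ shortest path in $G^w_f \setminus F$; since any $f$-failure FT-BFS must preserve such a shortest path, it must contain $(z_j, x_k)$.

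The key step is to fix an arbitrary leaf $z_j \in \Leaf(G'_{j'})$ for some $j' \in \{1,\ldots,d\}$, fix an arbitrary $x_k \in X$, and choose the fault set $F := \LAB_f(z_j, G_f(d))$, which has at most $f$ edges. First I would argue that in $G^w_f \setminus F$ the auxiliary vertex $v^*$ is unreachable from $s = u^f_1$: by the inductive definition of $\LAB_f$, the fault set contains the edge $(u^f_{j'}, u^f_{j'+1})$ of $P_f$, and the subgraphs $G'_i$ attach to $P_f$ only at the vertex $u^f_i$, so all of $u^f_{j'+1},\ldots,u^f_d$ (and hence $v^*$, which sees $X$) become disconnected from $s$. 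Consequently, the only way to reach $x_k$ from $s$ is through some leaf $z_i$ followed by the bipartite edge $(z_i, x_k)$.

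Next I would use Lemma~\ref{lem:prop_induc_path} to narrow down which leaves are usable. Part~(1) says $P(z_i, G_f(d))$ is the unique $s$--$z_i$ path, and part~(3) says this path does not survive in $G_f(d)\setminus F$ whenever $i > j$. Hence the only reachable leaves are $z_1,\ldots,z_j$. Part~(4) then says that among these, $\len(P(z_j,G_f(d)))$ is strictly the smallest. Since every bipartite edge $(z_i,x_k)$ carries the same (or no) weight (as specified in the construction of $B$), the unique shortest $s$--$x_k$ path in $G^w_f \setminus F$ is $P(z_j, G_f(d)) \circ (z_j,x_k)$, and in particular every shortest path uses the edge $(z_j, x_k)$. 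Therefore any $f$-failure FT-BFS $H$ with respect to $s$ must contain $(z_j, x_k)$. Ranging over all $j$ and $k$ gives $E(B) \subseteq E(H)$.

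Finally I would conclude by plugging in the size bound of $B$: we have $|E(B)| = |X|\cdot |\Leaf(G_f(d))| = \Theta(n) \cdot d^f$ by Observation~\ref{obs:rel}(b) and the choice $|X| = n - \NodesIn(f,d) - 1 = \Theta(n)$, yielding $|E(H)| = \Omega(n \cdot d^f)$ as claimed. The only subtle point — and the one worth double-checking — is the shortest-path argument: one must verify that no combined route through a leaf $z_i$ with $i<j$ (which is longer \emph{inside} $G_f(d)$ by part~(4)) can be outweighed by a shorter crossing in $B$. This is why the weights in the modified construction were placed only on the $e^f_i$ edges replacing the paths $Q^f_i$, preserving the relative ordering of the weighted lengths of the $P(z_i,G_f(d))$'s exactly as in the unweighted construction of \cite{parter2015dual}, so that Lemma~\ref{lem:prop_induc_path}(4) continues to hold with weighted length. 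Once this is noted, the proof is a direct weighted analogue of Theorem~4.1 of \cite{parter2015dual}.
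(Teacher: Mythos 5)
Your overall skeleton is the right one (and matches the argument of \cite{parter2015dual} that the paper cites): fix an edge $(z_j,x_k)\in B$, take $F=\LAB_f(z_j,G_f(d))$, and argue that every shortest $s$--$x_k$ path in $G^w_f\setminus F$ must end with $(z_j,x_k)$, so any $f$-failure \FTBFS\ must keep that edge. But there is a genuine flaw in the key step. Your claim that after removing $F$ the vertices $u^f_{j'+1},\ldots,u^f_d$ and $v^*$ ``become disconnected from $s$'' is false: the complete bipartite graph $B$ joins $X$ to \emph{all} leaves of $G_f(d)$, including leaves that remain reachable from $s$ (e.g., $z_j$ itself, by Lemma~\ref{lem:prop_induc_path}(2)), and $v^*$ is adjacent to all of $X$. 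Hence $s\leadsto z\to x\to v^*\to u^f_d\to\cdots$ survives, and $v^*$, $u^f_d$, and even the leaves $z_i$ with $i>j$ all stay connected to $s$ in $G^w_f\setminus F$. What is true is only that they are cut off from $s$ \emph{inside} $G_f(d)\setminus F$, i.e., every $s$--$x_k$ path must first enter $X\cup\{v^*\}$ through a bipartite edge incident to a leaf that is reachable in $G_f(d)\setminus F$. Because you believed $v^*$ was gone, you never rule out the competing routes that your (false) connectivity claim was meant to exclude: entering $X$ at some $x'\neq x_k$ via $z_j$ and continuing $x'\to v^*\to x_k$, or reaching $x_k$ (or a higher-indexed leaf) using a second bipartite edge. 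This is precisely the point where the construction's weights matter: the correct argument compares weighted lengths, noting that the direct route costs $\len(P(z_j,G_f(d)))+1$, any entry through a leaf $z_i$ with $i<j$ costs at least $\len(P(z_j,G_f(d)))+2$ by Lemma~\ref{lem:prop_induc_path}(4) (weights are integral), any detour through $v^*$ or a second $B$-edge adds at least $2$ after the first entry, and leaves with $i>j$ can only be entered through $X$ at even higher cost by parts (1) and (3). With that comparison in place, the shortest $s$--$x_k$ path is unique and uses $(z_j,x_k)$, and the rest of your argument (including the count $|E(B)|=\Theta(n\cdot d^f)$ via Observation~\ref{obs:rel}) goes through.

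A secondary remark: your choice of $F$ only kills the $P_f$-edge at the top level when $z_j$ lies in a copy $G'_{j'}$ with $j'\le d-1$; for leaves of $G'_d$ the label contains no top-level path edge, the short bypass $s\to u^f_2\to\cdots\to u^f_d\to v^*\to x_k$ survives, and the edge is not forced by this argument. This does not affect the bound, since the remaining $(d-1)\cdot d^{f-1}\cdot|X|=\Omega(n\cdot d^f)$ bipartite edges are still forced, but it means the forcing should be claimed for those leaves rather than literally for every edge of $B$.
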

We are now ready to prove the lower bound on covering value of the $(L,f)$-\RP. 
\begin{proof}[Proof of Thm. \ref{thm:lower-bound}]
Let $L=f\cdot d$ and consider the graph $G^w_f$ with the source node $s=u^f_1$.
By the construction of $G^w_f$ it holds that $(d/f)^{f+1}\leq n$. 
Let $\mathcal{G}_{L,f}$ be the optimal $(L,f)$-\RP for $G^w_f$ of minimal \CV. Our goal is to show that $|\mathcal{G}_{L,f}|=\Omega((L/f)^f)$.  We next claim that one can use this \RP (or any \RP), to compute an $f$-failure \FTBFS\ structure $H$ with $O(|\mathcal{G}_{L,f}|n)$ edges. Specifically, let $H=\bigcup_{G' \in \mathcal{G}_{L,f}} \SPT(s,G')$ where $\SPT(s,G')$ is an shortest-path tree rooted at $s$ in $G'$. It remain to show that $H$ is indeed an $f$-failure \FTBFS\ structure with respect to $s$. 

By Claim \ref{cl:hop}, every $s$-$t$ replacement path avoiding $f$ faults has $O(fd)$ edges. 
Thus, for every $P(s,t,F)$ for $|F|\leq f$ there exists a subgraph $G' \in \mathcal{G}_{L,f}$ such that $P(s,t,F)\subseteq G'$ and $F \cap G'=\emptyset$. Therefore, the $s$-$t$ path in the shortest path tree $\SPT(s,G')$ is necessarily $P(s,t,F)$. We conclude that $H \subseteq G^w_f$ is an $f$-failure \FTBFS\ structure w.r.t $s$ and with $O(|\mathcal{G}_{L,f}|n)$ edges. Combining with Lemma \ref{lem:size}, we get that $|\mathcal{G}_{L,f}|=\Omega((L/f)^f)$. 
\end{proof}

\begin{figure}[htbp]
\begin{center}
\includegraphics[width=5in]{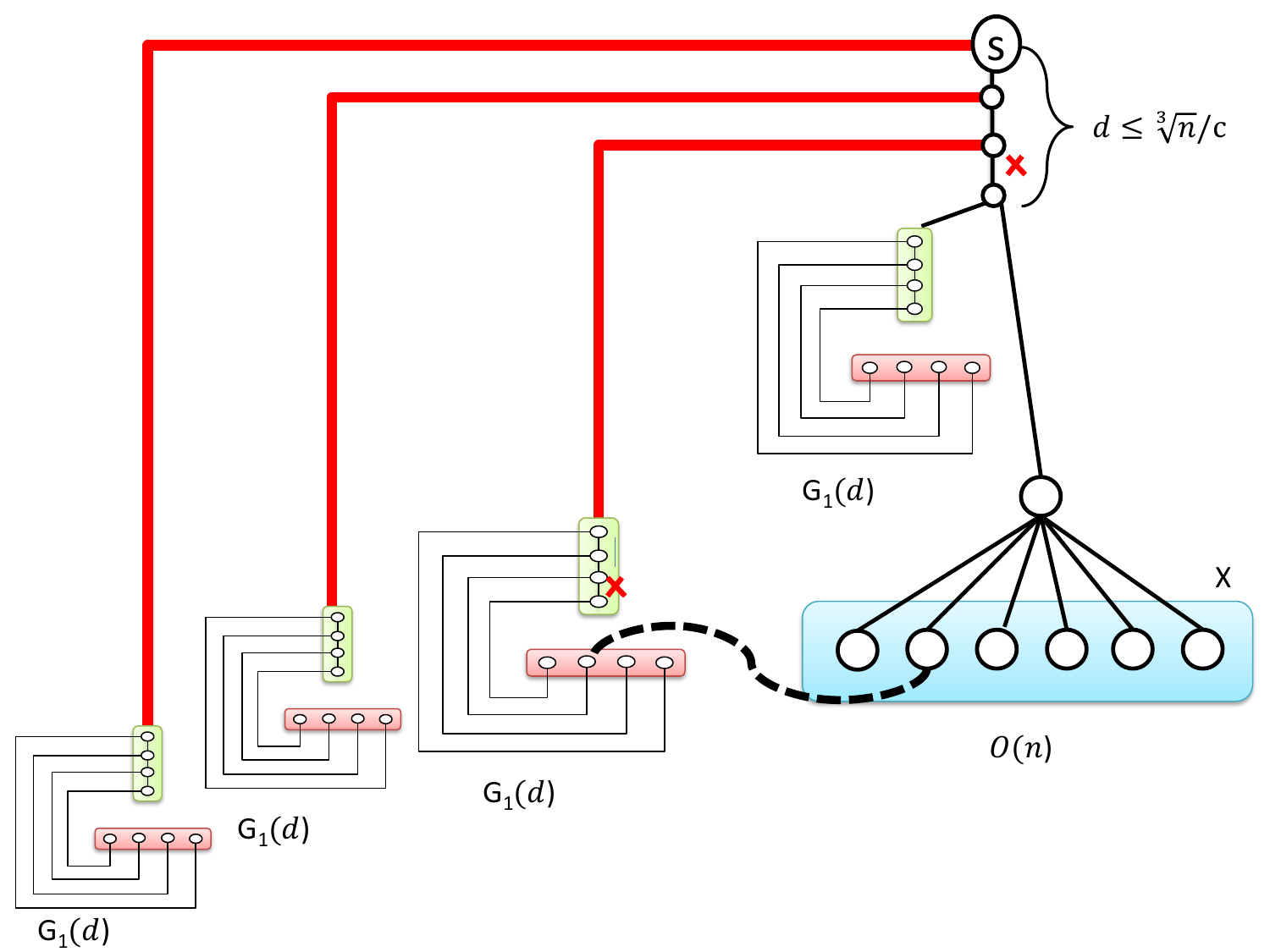}
\end{center}
\caption{Illustration of the lower-bound graph $G^w_f$ for $f=2$. The bold red edges are the only modification compared to the construction of \cite{parter2015dual}. That is, in \cite{parter2015dual} each red line correspond to a path and in our construction, it is replaced by a weighted edge whose weigh equal to the length of the path. As a result the weight of all replacement paths are preserved, but their length is edges is bounded by $O(f d)$. 
\label{fig:lowerbound}}
\end{figure}

\section{Derandomization of the Algebraic \DSO by Weimann and Yuster}
In this section, we prove Theorem \ref{thm:alg-dso-wy} by providing a derandomization of the algebraic construction of the distance sensitivity oracle of \cite{weimann2013replacement}. This construction has sub-cubic preprocessing time and sub-quadratic query time. We will use the following lemma from \cite{AlonCC19}.
\begin{lemma}\label{lem:det-hitting-set}[Lemma 2 of \cite{AlonCC19}]
Let $D_1, D_2, \ldots, D_q \subseteq V$ satisfy that $|D_i|>L$ for every $1\leq i \leq q$, and $|V|=n$. One can deterministically find in $\widetilde{O}(q \cdot L)$ time a set $R \subset V$ such that $|R|=O(n\log n/L)$ and $D_i \cap R \neq \emptyset$ for every $1\leq i \leq q$.
\end{lemma}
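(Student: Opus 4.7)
My plan is to use the greedy hitting set algorithm: repeatedly pick the vertex $v \in V$ that hits the largest number of currently uncovered $D_i$'s, add it to $R$, and remove the covered sets from consideration. To meet the running-time target, I would first truncate each $D_i$ to an arbitrary subset $D_i' \subseteq D_i$ of size exactly $L+1$; any hitting set for the $D_i'$'s is also a hitting set for the $D_i$'s, and the truncated input has total size $O(qL)$, matching the desired time bound. I would precompute, in a single pass, both the lists $D_i'$ and, for each vertex $v$, the inverted list of indices $i$ with $v \in D_i'$.

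The size bound follows from a standard averaging argument. Suppose at some point $q'$ sets remain uncovered. Because each uncovered $D_i'$ has $L+1$ vertices, the total incidence $\sum_{v \in V} |\{i : D_i' \ni v,\ D_i' \text{ uncovered}\}|$ is at least $q'(L+1)$, so by pigeonhole some vertex lies in at least $q'L/n$ uncovered sets. Picking that vertex leaves at most $q'(1 - L/n) \le q'\, e^{-L/n}$ uncovered sets. Under the standing convention $q \le \poly(n)$, after $t = \lceil n \ln q / L \rceil = O(n \log n / L)$ iterations the number of uncovered sets drops below $1$, so $R$ has size $O(n \log n / L)$ as required (in general one obtains $O((n/L) \log q)$).

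For the running time I would maintain a counter $c(v)$ equal to the number of currently uncovered $D_i'$'s containing $v$, together with a bucket structure (or max-heap) over $V$ indexed by $c(v)$; initialization costs $O(qL)$ plus $\widetilde O(n)$ for the heap. At each iteration, I extract a vertex $v^*$ with maximum $c(v^*)$, add it to $R$, and then iterate through the sets in $v^*$'s inverted list: for each $D_i'$ that is still uncovered, I mark it covered and, for each $u \in D_i'$, decrement $c(u)$ and update its position in the heap. Since each $D_i'$ is processed at most once after becoming covered, the total number of decrement operations is $\sum_i |D_i'| = O(qL)$, yielding $\widetilde O(qL)$ time overall.

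The main obstacle is the time bound rather than the size bound: one must charge work to the \emph{one-time} covering event of each $D_i'$, and must avoid rescanning sets that have already been hit. Keeping an explicit ``covered'' flag per $D_i'$ and walking $v^*$'s inverted list while skipping flagged sets (so that each set contributes $O(L)$ work in total across the whole execution) accomplishes this, and replacing the heap by a bucket array indexed by $c(v) \in \{0, 1, \ldots, q\}$ shaves the $\log n$ factor if desired, leaving a comfortable $\widetilde O(qL)$ budget.
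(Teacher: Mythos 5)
The paper never proves Lemma~\ref{lem:det-hitting-set}: it is imported verbatim as Lemma~2 of \cite{AlonCC19}, and the proof there is precisely the greedy blocker-set argument you give, so your route coincides with the cited source rather than with anything internal to this paper. Your argument is correct: after truncating each set to $L+1$ elements, the averaging step shows each greedy pick shrinks the number of uncovered sets by a factor $1-L/n$, giving $|R|=O((n/L)\log q)$, and charging all counter decrements to the one-time covering event of each $D_i'$ yields $\widetilde{O}(qL)$ time (one small implementation point: build the heap or bucket structure only over the $O(qL)$ vertices that occur in some $D_i'$, not over all of $V$, so that the initialization cost never exceeds the stated budget when $n\gg qL$). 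The caveat you flag is genuine and worth keeping: with $q$ unrestricted the bound $O(n\log n/L)$ is false as literally stated (take $L=n/2$ and let the $D_i$ range over all $(L+1)$-subsets of $V$; any hitting set must have size at least $n/2$), so one must read the lemma with $q\le\poly(n)$, or with the bound $O((n/L)\log q)$. This is exactly how the lemma is used in this paper: in the DSO preprocessing $q=|\mathcal{D}_L|=O(n^{2+\epsilon})$, and in Lemma~\ref{cl:small-perfectexp} the conclusion is anyway stated in the $O(\log|\mathcal{S}|)$ form, i.e.\ the $\log q$ dependence is made explicit there.
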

We start by providing a short overview of the randomized algebraic construction of \cite{weimann2013replacement}. As we will see, despite the fact that the query algorithm of \cite{weimann2013replacement} is in fact deterministic, due to the derandomization of the preprocessing part, the query algorithm will be similar to that of \cite{AlonCC19}.
Following \cite{AlonCC19}, it will be convenient to set $\epsilon=1-\alpha$. Throughout, we describe the construction for $0 < \epsilon <1$, $f=O(\log n/\log\log n)$ and a bound $L=n^{\epsilon/f}$.
We need the following definition.
\begin{definition}[Long and Short $(s,t,F)$]
A triplet $(s,t,F) \in V \times V \times E(G)^f$ is $L$-\emph{short} if $d^{L}(s,t,G\setminus F)=\dist(s,t,G\setminus F)$. That is, there \emph{exists} a $P(s,t,F)$ replacement path with at most $L$ edges in $G$. 
Otherwise,  $(s,t,F)$ is $L$-\emph{long}\footnote{In particular, for an $L$-long $(s,t,F)$ triplet it holds that \emph{every} $s$-$t$ shortest path in $G \setminus F$ has at least $L+1$ edges.}. When $L$ is clear from the context, we may omit it and write short (or long) $(s,t,F)$.
\end{definition}

\paragraph{Outline of the Weimann-Yuster \DSO.} 
The preprocessing algorithm starts by computing an $(L,f)$-\RP $\mathcal{G}_{L,f}=\{ G_1,\ldots, G_r \subseteq G\}$ for all replacement paths with at most $L$ edges, where $r=O(f n^{\epsilon} \log n)$. This \RP is generated randomly by sampling each edge in $G$ into $G_j$ independently with probability of $1-1/L$ for every $j \in \{1,\ldots,r\}$. Let $R$ be a random sample of $O(f n\log n/L)$ vertices in $G$, that we call \emph{hitting set} as they hit every replacement path segment with at least $L$ edges, w.h.p. 

Given the $(L,f)$-\RP $\mathcal{G}_{L,f}$ and the hitting set $R$, there are two variants of the algorithm. In one variant, a collection of matrices $A_1,\ldots, A_r$ is computed in in time $O(r \cdot M^{0.681}\cdot n^{2.575+\epsilon})$ for storing the all-pairs distances in $G_1,\ldots, G_r$. In an alternative variant, the algorithm computes for every subgraph $G_j \in \mathcal{G}_{L,f}$ a pair of matrices $B_j$ and $D_j$ in time $O(rMn^{2.376+\epsilon})$. The matrix $B_j$ stores the $R\times R$ distances in $G_j$ and it is computed based on a matrix $D_j$ in $O(|R|^2 n)$ time. 

For a query $(s,t,F)$, the \emph{query algorithm} first computes a collection of $O(f\log n)$ graphs $\mathcal{G}_F \subseteq \mathcal{G}_{L,f}$ that avoid all edges of $F$. For an $L$-short query, the distance $\dist_{G\setminus F}(s,t)$ is obtained by taking the minimum $s$-$t$ distance over all subgraphs $G' \in \mathcal{G}_F$.
To support $L$-long queries $(s,t,F)$, the algorithm uses the matrices $A_j$ (or the matrix pairs $D_j,B_j$) to compute a dense graph $G^F$ with vertex set $V(G^F)=R \cup \{s,t\}$. The edge weight $(x,y)$ for every $x,y \in V(G^F)$ is set to be the minimum $x$-$y$ distance over all the subgraphs in $\mathcal{G}_F$. The answer to the $(s,t,F)$ query is obtained by computing the $s$-$t$ distance in $G^F$.  In the preprocessing variant that computes the $A_j$ matrices, the query algorithm takes $\widetilde{O}(n^{2-2\epsilon/f})$ time. In the variant that computes the $B_j,D_j$ matrices, the query time is $O(n^{2-\epsilon/f})$. In the following subsections, we explain how to derandomize the preprocessing algorithm and combine it with the modified query algorithm of \cite{AlonCC19}. 

The structure of the remaining of the section is as follows. In Sec. \ref{sec:ft-trees}, we present an improved construction of a structure called Fault-Tolerant trees. Then, in Subsec. \ref{sec:det-description}, we provide a complete description of the preprocessing and query time algorithms, both will be based on the construction of the FT-trees.

\subsection{Algebraic Construction of Fault-Tolerant Trees}\label{sec:ft-trees}
For a given vertex pair $s,t$, the FT-tree $\FT_{L,f}(s,t)$ consists of $O(L^f)$ \emph{nodes}\footnote{To avoid confusion, we call the vertices of the FT-trees \emph{nodes}.}. Each node is labeled by a pair $\langle P,F \rangle$ where $P$ is an $s$-$t$ path in $G \setminus F$ with at most $L$ edges, and $F$ is a sequence of at most $f$ faults which $P$ avoids. \cite{AlonCC19} described a construction of FT-trees $\FT_{L,f}(s,t)$ for every pair $s,t$ and used it to implement the combinatorial DSO of \cite{weimann2013replacement}. The computation time of the FT-trees algorithm by \cite{AlonCC19} is $O(m\cdot n \cdot L^{f+1})$, which is too costly for our purposes (e.g., the implementation the algebraic DSO of \cite{weimann2013replacement}). 

\paragraph{Defining FT-Trees.}
Fix a pair $s,t \in V$. For every $i \in \{0,\ldots, f\}$, and every sequence of faults $F \subseteq E$, $|F|\leq f-i$, the tree $\FT_{L,i}(s,t,F)$ is defined in an inductive manner. Throughout, the paths $P^{L}(s,t,F)$ refer to \emph{some} shortest $s$-$t$ path in $G \setminus F$ with at most $L$ edges. If there are several such paths, the algorithm picks one as will be described later. 

\noindent\textbf{Base case:} The tree $\FT_{L,0}(s,t,F)$ for every $F \subseteq E$ and $|F| \leq f$ is defined as follows. 
If $d^{L}(s,t,G \setminus F)=\infty$ (i.e., there is no $s$-$t$ path with at most $L$ edges in $G \setminus F$), then  $\FT_{L,0}(s,t,F)$ is empty. Otherwise, $\FT_{L,0}(s,t,F)$ consists of a single node (root node) labeled by $\langle P^{L}(s,t,F), F \rangle$. This root node is associated with a binary search tree which stores the edges of the path $P^{L}(s,t,F)$.

\noindent\textbf{Inductive step:} Assume the construction of $\FT_{L,j}(s,t,F)$ for every $j$ up to $i$, and every $F \subseteq E$, $|F|\leq f-j$. The tree $\FT_{L,i+1}(s,t,F')$ is defined as follows for every set $F'$ of $f-(i+1)$ faults in $E$.
If $d^{L}(s,t,G \setminus F')=\infty$, then $\FT_{L,i+1}(s,t,F')$ is empty.
Assume from now on that $d^{L}(s,t,G \setminus F') <\infty$.
The root node $r$ of $\FT_{L,i+1}(s,t,F')$ is labeled by $\langle P^{L}(s,t,F'), F' \rangle$, and the edges of $P^{L}(s,t,F')$ are stored in a binary search tree. 
This root node is connected to the roots of the trees $\FT_{L,i}(s,t,F' \cup \{a_j\})$ for every 
$a_j \in P^{L}(s,t,F')$ satisfying that $d^{L}(s,t,G \setminus (F' \cup \{a_j\}))<\infty$. 
Letting, $r_j$ be the root node $\FT_{L,i}(s,t,F' \cup \{a_j\})$ (if such exists), we have:
$$\FT_{L,i+1}(s,t,F')=\{\FT_{L,i}(s,t,F' \cup \{a_j\}) \cup \{(r, r_j)\} ~\mid~ a_j \in P^{L}(s,t,F'), d^{L}(s,t,G \setminus (F' \cup \{a_j\}))<\infty\}~.$$
For $i=f$, we abbreviate $\FT_{L,f}(s,t,\emptyset)=\FT_{L,f}(s,t)$.
\begin{observation}\label{obs:nnode-ft}
Each tree $\FT_{L,f}(s,t)$ has at most $L^f$ nodes (in the case of vertex faults, it has at most $(L+1)^f$ nodes).
\end{observation}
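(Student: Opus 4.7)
The plan is to prove the stated bound by straightforward induction on the level parameter $i$ used in the inductive construction of the FT-tree, applied with $i=f$ at the end.

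Specifically, I would show that for every $i\in\{0,1,\ldots,f\}$, every $s,t\in V$, and every valid fault set $F$ with $|F|\le f-i$, the tree $\FT_{L,i}(s,t,F)$ contains at most $L^i$ nodes (up to a small multiplicative constant that is absorbed in the observation's asymptotic bound). The base case $i=0$ is immediate from the construction: $\FT_{L,0}(s,t,F)$ is either empty or a single-node tree labeled by $\langle P^L(s,t,F),F\rangle$, hence has at most $1=L^0$ nodes. For the inductive step, fix $\FT_{L,i+1}(s,t,F')$. If the tree is empty there is nothing to prove; otherwise, by construction it has one root node, and the children of the root are exactly the roots of the subtrees $\FT_{L,i}(s,t,F'\cup\{a_j\})$, one for each edge $a_j$ on the chosen path $P^L(s,t,F')$ for which that extended sub-instance is nonempty. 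Since $P^L(s,t,F')$ contains at most $L$ edges, the root has at most $L$ children, and each subtree has at most $L^i$ nodes by the inductive hypothesis, giving a total of at most $1+L\cdot L^i$ nodes, which telescopes to $1+L+L^2+\cdots+L^f\le L^f$ in the order-of-magnitude sense claimed by the observation.

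For the vertex-fault variant the argument is identical except for the branching count at the root. Here the children of the root of $\FT_{L,i+1}(s,t,F')$ correspond to internal vertices of $P^L(s,t,F')$ (that can fail), and a path on at most $L$ edges has at most $L+1$ vertices, so the root has at most $L+1$ children. The same induction then yields a bound of $(L+1)^f$ on the number of nodes.

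There is no real obstacle here; the only mildly subtle accounting point is the additive $+1$ contributed by the root at each recursion level, but the geometric sum $\sum_{k=0}^{f}L^k$ is of order $L^f$ for $L\ge 2$, matching the stated bound. The observation is therefore essentially a direct bookkeeping consequence of the definition of FT-trees, combined with the length-$L$ cap on the paths that determine the branching factor at each level.
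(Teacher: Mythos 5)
Your proof is correct and follows essentially the same route as the paper, which simply notes that the tree has depth at most $f$ and branching factor at most $L$ (respectively $L+1$ for vertex faults); your induction is just a formalization of that counting. The small slack you flag (the geometric sum $\sum_{k=0}^{f}L^k$ versus the stated $L^f$) is present in the paper's statement as well and is harmless, since the observation is only ever used asymptotically as $O(L^f)$.
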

\begin{proof}
The depth of the tree $\FT_{L,f}(s,t)$ is at most $f$. For the case of edge faults, 
each node in $\FT_{L,f}(s,t)$ has at most $L$ children as each node is labeled by a path of $\leq L$ edges. 
In the case of vertex faults, a path of at most $L$ edges, has $L+1$ vertices. 
\end{proof}

\paragraph{Algebraic Construction of FT-Trees.}
We now turn to provide a new algorithm for computing the FT-Trees $\FT_{L,f}(s,t)$ based on the $(L,f)$-\RP of Thm. \ref{thm:general-covering}. This algorithm will be applied in the preprocessing phase of the $f$-\DSO. 
The next theorem improves upon the $\widetilde{O}(m \cdot n \cdot L^{f+1})$-time algorithm provided in \cite{AlonCC19} for dense graphs. The key difference from \cite{AlonCC19} is that the algorithm of \cite{AlonCC19} is combinatorial (e.g., uses Dijkstra for shortest path computations), and our algorithm is algebraic (e.g., uses matrix multiplication). %This algorithm is made possible due to the $(L,f)$-\RP computation in the previous section.

\begin{theorem}[Improved Computation of FT-Trees]\label{thm:ft-trees-alg}
For every $L$ and $f=O(\log n/\log\log n)$, there exists a deterministic algorithm that computes $\bigcup_{s,t \in V}\FT_{L,f}(s,t)$ in time: \vspace{-5pt}
\begin{enumerate} 
\item $\widetilde{O}((\alpha c Lf)^{f+1} \cdot L M n^{\omega})$ if $L \geq m^{1/c}$ for some constant $c$, and
\item  $\widetilde{O}((\alpha Lf \log n)^{f+1} \cdot L M n^{\omega})$ otherwise,
%$O((2fL\log n)^{f+1}(n^2 (M n)^{1/(4-\omega)}+n^{2}L))$.
\end{enumerate}
\vspace{-8pt}
where $\alpha$ is the universal constant of Theorem \ref{thm:general-covering}.
\end{theorem}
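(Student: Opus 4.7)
The plan is to follow the high-level strategy outlined in the introduction of the section: build the FT-trees in two stages, where the first stage precomputes bounded-hop all-pairs distances in each subgraph of an $(L,f)$-\RP, and the second stage traverses these precomputed tables to populate every node of every FT-tree.

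First I would invoke Theorem~\ref{thm:general-covering} (in the form given by Theorem~\ref{lem:num-collide-faults}, since $L \geq f$ holds in the relevant parameter regime when $f = O(\log n/\log\log n)$) to compute an $(L,f)$-\RP $\mathcal{G}_{L,f}$ whose \CV\ is $(\alpha c Lf)^{f+1}$ if $L \geq m^{1/c}$, and $(\alpha Lf \log n)^{f+1}$ otherwise. Then, for each $G' \in \mathcal{G}_{L,f}$, I would run the $APSP^{\leq L}$ subroutine of Lemma~\ref{lem:shortAPSP}, which costs $\widetilde{O}(LMn^{\omega})$ per subgraph and outputs both the distances $d^{L}(s,t,G')$ and a representative shortest path $P^{L}(s,t,G')$ for every ordered pair $(s,t)$. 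The total cost of this precomputation is $\widetilde{O}(\text{\CV}(\mathcal{G}_{L,f}) \cdot LMn^{\omega})$, which already matches the two bounds claimed in the theorem.

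Next I would define a subroutine $\mathsf{Query}(s,t,F)$: given any fault set $F$ of size at most $f$, use Theorem~\ref{lem:num-collide-faults} to extract in time $\widetilde{O}(f L)$ the small collection $\mathcal{G}_F \subseteq \mathcal{G}_{L,f}$ of at most $fL \cdot \polylog(m)$ subgraphs that avoid $F$ and that jointly cover every replacement path of at most $L$ edges avoiding $F$; then return $\min_{G' \in \mathcal{G}_F} d^{L}(s,t,G')$ together with a witness path $P^{L}(s,t,G^*)$ from a minimizing subgraph $G^*$. Both the distance and the path are looked up in the precomputed tables in time $\widetilde{O}(fL)$, and the returned path is a valid $s$-$t$ shortest path of at most $L$ edges in $G \setminus F$ by the covering property of $\mathcal{G}_F$. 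Using $\mathsf{Query}$, I would build each tree $\FT_{L,f}(s,t)$ by the inductive definition: the root calls $\mathsf{Query}(s,t,\emptyset)$, and each internal node $\langle P, F \rangle$ spawns, for every edge $a_j \in P$, a child obtained from $\mathsf{Query}(s,t, F \cup \{a_j\})$. Since each tree has at most $L^f$ nodes (Observation~\ref{obs:nnode-ft}) and $n^2$ trees are built, the total work for this phase is $\widetilde{O}(n^2 \cdot L^{f+1} \cdot f)$, which is dominated by the APSP phase because $\text{\CV}(\mathcal{G}_{L,f}) = \Omega((Lf)^{f+1})$ and $n^\omega \geq n^2$.

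The main obstacle, as flagged in the introduction of this section, is that the witness paths returned by $\mathsf{Query}$ are not globally consistent tie-breakers: different invocations may pick different shortest paths of the same length, depending on which minimizing $G' \in \mathcal{G}_F$ is selected. Running the deterministic unique-shortest-path algorithm from \cite{AlonCC19} is too costly in our regime, so I would instead argue directly that arbitrary tie-breaking is harmless for the correctness of the FT-tree. The key structural claim is this: for any node $\langle P, F \rangle$ and any query set $F' \supseteq F$ with $|F'| \leq f$, if $F' \cap P = \emptyset$ then $P$ is automatically a shortest path of $G \setminus F'$ with at most $L$ edges, because $d^{L}(s,t,G\setminus F') \geq d^{L}(s,t,G\setminus F) = \len(P)$ while $P$ remains present in $G \setminus F'$; otherwise any $e \in F' \cap P$ yields a valid child node to recurse into. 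Thus the standard root-to-leaf query procedure on $\FT_{L,f}(s,t)$ correctly reports $d^{L}(s,t,G\setminus F')$ regardless of how each $\mathsf{Query}$ call broke its ties, and the two time bounds follow from combining Theorem~\ref{thm:general-covering} with Lemma~\ref{lem:shortAPSP} as above.
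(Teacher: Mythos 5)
Your proposal is correct and takes essentially the same route as the paper: compute the $(L,f)$-\RP of Theorem~\ref{thm:general-covering} augmented with the property of Theorem~\ref{lem:num-collide-faults}, run the $APSP^{\leq L}$ algorithm of Lemma~\ref{lem:shortAPSP} on every subgraph, and then label each of the $O(n^2 L^f)$ tree nodes by minimizing over the $\widetilde{O}(fL)$ subgraphs in $\mathcal{G}_F$, so the APSP phase dominates and gives exactly the two stated bounds. Your final paragraph on arbitrary tie-breaking is sound but not needed for this particular theorem (the FT-tree is defined with respect to whichever $P^L(s,t,F)$ the algorithm picks); the paper defers that delicacy to the query-correctness argument in Claim~\ref{cl:wy-query-corr}, where your observation coincides with what is done there.
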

The first step of the algorithm applies Theorem \ref{thm:general-covering} to compute an $(L,f)$-\RP $\mathcal{G}_{L,f}$.
%such that \CV($\mathcal{G}_{L,f}$)=$\widetilde{O}((2Lf\log n)^{f+1})$. 
Then, it applies the $APSP^{\leq L}$ algorithm of Lemma \ref{lem:shortAPSP} to compute in each $G' \in \mathcal{G}_{L,f}$, the collection of all $V(G') \times V(G')$ shortest paths $P^{L}_{G'}(s,t)$ with at most $L$ edges, for every $s,t \in V(G')$.

This computations serves the basis for the following key task in the construction of the FT-trees: Given a triplet $s,t,F$, compute $d^{L}(s,t,G \setminus F)$ and some path $P^{L}_{s,t,F}$ if such exists. 
%We show that this computation can be done efficiently given an $(L,f)$ strong-RPC $\mathcal{G}_{L,f}$
\begin{lemma}\label{lem:tree-label-node-compute}
Consider a pre-computation of the $(L,f)$-\RP $\mathcal{G}_{L,f}$ for $f=O(\log n/\log\log n)$, and the application of algorithm $APSP^{\leq L}$ in each of the subgraphs $G' \in \mathcal{G}_{L,f}$. Then, given a triplet $(s,t,F)$, in time $\widetilde{O}(L)$, one can compute the distance $d^{L}(s,t,G \setminus F)$ and a corresponding path $P^{L}(s,t,F)$ (if such exists).
\end{lemma}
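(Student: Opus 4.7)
The plan is to reduce the query to a short enumeration over a small, fault-specific subfamily of $\mathcal{G}_{L,f}$. Concretely, given $(s,t,F)$ I would invoke Theorem~\ref{lem:num-collide-faults} on the fault set $F$ (using $d=|F|\le f$) to obtain a subcollection $\mathcal{G}_F \subseteq \mathcal{G}_{L,f}$ of at most $fL\cdot\polylog(m)$ subgraphs that (i) all avoid every edge of $F$, and (ii) together contain at least one $P^L(s,t,F)$ whenever such a path exists. Crucially, Theorem~\ref{lem:num-collide-faults} also guarantees that the \emph{indices} of $\mathcal{G}_F$ inside $\mathcal{G}_{L,f}$ can be produced directly from the code/hash family underlying $\mathcal{G}_{L,f}$ in $fdL\cdot\polylog(m)$ time, without scanning the entire family.

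The query algorithm then simply looks up the precomputed value $d^L(s,t,G')$ from the $APSP^{\le L}$ output on each $G' \in \mathcal{G}_F$, sets $d^\star := \min_{G' \in \mathcal{G}_F} d^L(s,t,G')$, and returns $d^\star$ together with the corresponding precomputed path achieving this minimum (retrieving its at most $L$ edges one by one).

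For correctness, note that every $G' \in \mathcal{G}_F$ is a subgraph of $G\setminus F$, so any path of at most $L$ edges realizing $d^L(s,t,G')$ is also present in $G\setminus F$; hence $d^\star \ge d^L(s,t,G\setminus F)$. Conversely, if a replacement path $P^L(s,t,F)$ of at most $L$ edges exists, then property (ii) of Theorem~\ref{lem:num-collide-faults} furnishes some $G^\star \in \mathcal{G}_F$ containing all of its edges, which together with property (i) yields $d^L(s,t,G^\star) \le \len(P^L(s,t,F)) = d^L(s,t,G\setminus F)$, so $d^\star = d^L(s,t,G\setminus F)$. If no such short path exists, then $d^L(s,t,G')=\infty$ for every $G' \in \mathcal{G}_F$ and the algorithm correctly reports $\infty$.

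Finally, for the running time, the assumption $f=O(\log n/\log\log n)$ makes $f$ a polylog factor, so both $|\mathcal{G}_F|$ and the time to identify it are bounded by $L\cdot \polylog(n) = \widetilde{O}(L)$; each distance lookup is $O(1)$, and reconstructing the winning path costs $O(L)$, yielding the claimed $\widetilde{O}(L)$ bound. The only place I expect genuine care is in invoking the right regime of Theorem~\ref{lem:num-collide-faults} (which requires $L\ge f$ and $f=o(\log m)$, both of which hold under the lemma's hypothesis), so that the subfamily $\mathcal{G}_F$ can be located without traversing the potentially much larger full collection $\mathcal{G}_{L,f}$—this is precisely what the code-based construction of Section~\ref{sec:ft-trees}'s preceding tools provides.
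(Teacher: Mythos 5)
Your proposal is correct and follows essentially the same route as the paper: invoke Theorem~\ref{lem:num-collide-faults} to locate the fault-avoiding subfamily $\mathcal{G}_F$ in $\widetilde{O}(L)$ time (using $f=O(\log n/\log\log n)$ to absorb $f$ into polylog factors), and return $\min_{G'\in\mathcal{G}_F} d^{L}(s,t,G')$ together with the corresponding precomputed $APSP^{\le L}$ path, with correctness following from the two inequalities you state. Your write-up is, if anything, slightly more explicit than the paper's about the converse inequality and the $\infty$ case.
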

\begin{proof}
By Theorem~\ref{lem:num-collide-faults}, given the $(L,f)$-\RP $\mathcal{G}_{L,f}$, one can compute in time $\widetilde{O}(L)$ a collection of subgraphs $\mathcal{G}_F$ that fully avoid $F$. In addition, it holds that for \emph{any} $s$-$t$ path $P$ in $G \setminus F$ with at most $L$ edges, there must be exists a subgraph $G' \in \mathcal{G}_F$ that fully contain $P$. In particular, letting $P^*$ be the shortest $s$-$t$ path with at most $L$ edges in $G \setminus F$ (breaking ties in an arbitrary manner), there is a subgraph in $\mathcal{G}_F$  that fully contains $P^*$. 
Since the algorithm $APSP^{\leq L}$ is applied on each of the subgraphs $G' \in \mathcal{G}_F$, we have that 
\begin{equation}\label{eq:mindist}
d^{L}(s,t,G \setminus F)=\min_{G' \in \mathcal{G}_F}d^{L}(s,t,G')~.
\end{equation}
The desired path $P^{L}(s,t,F)$ corresponds to the output path of algorithm $APSP^{\leq L}$  in the subgraph $G'\in \mathcal{G}_F$ that minimizes the distance of Eq. (\ref{eq:mindist}).
\end{proof}
The computation of the FT-tree $\FT_{L,f}(s,t)$ for every $s,t \in V$ is described as follows. The root node is simply 
$P^L(s,t)$ as computed by applying algorithm $APSP^{L}$ in $G$. If $d^{L}(s,t,G)=\infty$, then $\FT_{L,f}(s,t)$ is empty. The computation of the binary search tree for storing $P^L(s,t)$ can be computed in $\widetilde{O}(L)$ time. Now, for every labeled node $\langle P^L(s,t,F), F \rangle$, the algorithm computes its child nodes $\langle P^L(s,t,F \cup \{a_j\}),F \cup \{a_j\} \rangle$ for every $a_j \in P^L(s,t,F)$. For that purpose, it applies the algorithm of Lemma \ref{lem:tree-label-node-compute} with input $(s,t,F \cup \{a_j\})$ for every $a_j \in P^L(s,t,F)$.  We are now ready to complete the proof of Theorem \ref{thm:ft-trees-alg}.  
\begin{proof}[Proof of Theorem \ref{thm:ft-trees-alg}]
The correctness of the algorithm follows by Lemma \ref{lem:tree-label-node-compute}. Therefore, it remains to bound the computation time. 
The computation of the $(L,f)$-\RP is done in time $O($\CV$(\mathcal{G}_{L,f})\cdot m)$. 
Applying algorithm $APSP^{\leq L}$ on every $G' \in \mathcal{G}_{L,f}$ takes $O($\CV$(\mathcal{G}_{L,f}) \cdot L M n^{\omega})$ time by Lemma \ref{lem:shortAPSP}.  The computation of each child node in the FT-tree takes $O(f L \log n)$ time, by Lemma \ref{lem:tree-label-node-compute}. By Observation \ref{obs:nnode-ft}, the total number of nodes in all the trees is bounded by $O(L^f \cdot n^2)$. Thus, the total time to compute all the FT-trees is bounded by $O($\CV$(\mathcal{G}_{L,f}) \cdot L M n^{\omega})$. The lemma holds by plugging the covering values of Theorem \ref{thm:general-covering} (the first and last bounds).
\end{proof}
%The final critical set $\mathcal{D}_{L,f}$ is given by the collection of all the labels $\langle P, F \rangle$ of the nodes in $\bigcup_{s,t \in V}\FT_{L,f}(s,t)$. That is, $\mathcal{D}_{L,f}=\{ \langle P, F \rangle \in \FT_{L,f}(s,t) ~\mid~ s,t \in V\}$. By Obs. \ref{obs:nnode-ft},  we have:
%\begin{corollary}\label{cor:critical}
%$\mathcal{D}_{L,f}=O(n^2 \cdot L^f)$.
%\end{corollary}
The applicability of the FT-trees in the context of DSOs is expressed in the next lemma.
\begin{lemma}\label{lem:query-FT-trees}[Lemma 17 of \cite{AlonCC19}]
Given the computation of the trees $\FT_{L,f}(s,t), s,t \in V$, for every triplet $(s,t,F)$ one can compute $d^{L}(s,t,G)$ and a replacement path $P^L(s,t,F)$ (if such exists) in time $O(f^2 \log L)$.
\end{lemma}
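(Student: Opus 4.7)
My plan is to describe a simple top-down traversal of the tree $\FT_{L,f}(s,t)$. Starting at the root, which is labeled by $\langle P^{L}(s,t,\emptyset),\emptyset\rangle$, I maintain the invariant that the current node carries a label $\langle P^{L}(s,t,F'),F'\rangle$ with $F'\subseteq F$. At each visited node I test whether the stored path $P^{L}(s,t,F')$ is disjoint from $F$: since the path's edges are stored in a balanced binary search tree, I probe the BST with each of the at most $f$ edges of $F$, which costs $O(f\log L)$ per node. If no edge of $F$ is found on the path, I stop and output the length and description of that path. Otherwise, I pick one such edge $a_j\in F\cap P^{L}(s,t,F')$ and descend into the child node whose label is $\langle P^{L}(s,t,F'\cup\{a_j\}),F'\cup\{a_j\}\rangle$, which exists by the inductive construction of the FT-tree (or is empty, in which case $d^{L}(s,t,G\setminus F)=\infty$).

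The correctness hinges on two observations. First, the traversal terminates within $f$ steps: each descent strictly enlarges $F'$, and once $F'=F$ the label stores exactly $P^{L}(s,t,F)$, which by definition avoids $F$. Second, whenever the algorithm returns a path $P^{L}(s,t,F')$ with $F'\subseteq F$ that happens to be disjoint from $F$, this path has length exactly $d^{L}(s,t,G\setminus F)$: on the one hand, $P^{L}(s,t,F')$ is an $s$-$t$ path of at most $L$ edges in $G\setminus F$ (it avoids $F'$ by construction and avoids $F\setminus F'$ by the stopping condition), so its length is at least $d^{L}(s,t,G\setminus F)$; on the other hand, any replacement path for $(s,t,F)$ is also a candidate replacement path for $(s,t,F')$, hence its length is at least $d^{L}(s,t,G\setminus F')$, the length of $P^{L}(s,t,F')$. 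The two inequalities together pin down the distance.

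Putting it together, the traversal visits at most $f+1$ nodes, and at each node spends $O(f\log L)$ time to probe the BST, giving total query time $O(f^{2}\log L)$. Retrieving the path itself comes for free with the BST pointer of the returned node. The main subtle point worth double-checking during write-up is the correctness argument above, in particular that choosing \emph{any} violating edge $a_j$ (as opposed to a specific canonical one) still leads to a valid descent—this is ensured by the construction of $\FT_{L,f}(s,t)$, where every edge $a_j$ appearing on the path at a node corresponds to a child, provided $d^{L}(s,t,G\setminus(F'\cup\{a_j\}))<\infty$; if this distance is $\infty$, so is $d^{L}(s,t,G\setminus F)$ and the algorithm can safely report $\infty$.
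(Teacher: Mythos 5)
Your proposal is correct and follows essentially the same route as the paper's proof: a top-down walk of $\FT_{L,f}(s,t)$ that at each node uses the stored binary search tree to find an edge of $F$ on the current path in $O(f\log L)$ time, descending into the corresponding child until a path disjoint from $F$ is found (or an empty child certifies $d^{L}(s,t,G\setminus F)=\infty$), over at most $f$ levels. Your explicit two-sided inequality argument showing that the returned path's length equals $d^{L}(s,t,G\setminus F)$ is a welcome elaboration of a point the paper leaves implicit, but it does not change the approach.
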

\begin{proof}
Given $(s,t,F)$, we query the FT-tree $\FT_{L,f}(s,t)$ as follows. First check if the path $P^L(s,t)$ labeled at the root of the tree intersects $F$. If no, then output $P^L(s,t)$. Otherwise, letting $a_j \in P^L(s,t)\cap F$, we continue with the child node labeled by $P^L(s,t,\{a_j\})$. Again, if $P^L(s,t,\{a_j\}) \cap F=\emptyset$, we output that path and otherwise continues to its child node $P^L(s,t,\{a_j,a_{j'}\})$ for some $a_{j'} \in P^L(s,t,\{a_j\}) \cap F$. Using the binary search tree at each node $P^L(s,t,F')$, finding some edge $e' \in P^L(s,t,F') \cap F$ can be done in $O(f\log L)$ time. Since the depth of the tree is $f$, the total time is $O(f^2 \log L)$. 
\end{proof}

\subsection{Deterministic Preprocessing and Query Algorithms}\label{sec:det-description}
The randomized preprocessing algorithm of Weimann and Yuster has two randomized ingredients. The first is the computation of the $(L,f)$-\RP given by the subgraphs $G_1,\ldots, G_r$. The second is a computation of the set $R$ which, w.h.p., hits every $L$-length segment of every long $P(s,t,F)$ paths. Our deterministic preprocessing algorithm is presented below:

\begin{mdframed}[hidealllines=false,backgroundcolor=gray!30]
\center \textbf{Deterministic Preprocessing Algorithm}
%\begin{flushleft}
%Input: A forest $F$ with a set $\mathcal{C}$ of $O(n/\log^2 x)$ active components. \\
%Output: A forest $F'$ with a set $\mathcal{C}_{out}$ of $n/x$ active components.
%\end{flushleft}
\begin{itemize}
\item \textbf{(i): Compute FT-trees}. Using $(L,f)$-\RP of Thm. \ref{thm:general-covering}, apply Theorem \ref{thm:ft-trees-alg} to compute the collection of trees $\bigcup_{s,t} \FT_{L,f}(s,t)$ with $O(n^2\cdot L^f)$ nodes.

\item \textbf{(ii): Compute Critical Paths.} Let $\mathcal{D}_{L,f}$ be the collection of all the pairs $\langle P, F \rangle$ corresponding to the nodes of the FT-trees. Define the collection of \emph{critical paths}
$\mathcal{D}_L=\{ P ~\mid~ \langle P, F \rangle \in \mathcal{D}_{L,f}, |P|\in [L/4, L]\}$ which consists of all sufficiently long paths. 

\item \textbf{(iii): Compute Hitting Set for the Critical Paths.} Apply the algorithm of Lemma \ref{lem:det-hitting-set} to compute  a hitting set $R \subseteq V$ for the paths in $\mathcal{D}_L$ where $|R|=O(n\log n/L)$. 
\end{itemize}
\end{mdframed}
%The deterministic preprocessing algorithm first computes (i) the FT-trees $\bigcup_{s,t} \FT_{L,f}(s,t)$ of Theorem \ref{thm:ft-trees-alg}, (ii) the critical set $\mathcal{D}_{L,f}$ of Sec. \ref{subsec:critical}, and (iii) the improved $(L,f)$ RPC $\mathcal{G}^*_{L,f}$ of Theorem \ref{thm:nice-covering}.
%Then, it defines the subset of paths $\mathcal{D}_L=\{ P ~\mid~ \langle P, F \rangle \in \mathcal{D}_{L,f}, |P|\in [L/4, L]\}$ which contains all the sufficiently long replacement path segments in the critical set $\mathcal{D}_{L,f}$.
This completes the description of the preprocessing algorithm. We note that the computation of the FT-trees substitutes the $A_j,B_j,D_j$ matrices used in \cite{weimann2013replacement}.

%The remaining parts of the preprocessing algorithm are exactly the same as in \cite{weimann2013replacement}. That is, the algorithm iterates over all subgraphs $G_j \in \mathcal{G}^*_{L,f}$, and computes either an $A_j$ matrix (storing all-pairs distances in $G_j$) or matrix pairs $D_j,B_j$. The matrix $B_j$ stores the $R \times R$ distances in $G_j$, and $D_j$ is a matrix that stores all the distances but with a query complexity of $O(n)$ time. 

\begin{lemma}[Preprocessing time]\label{lem:wy-pre}
The preprocessing time of the deterministic algorithm is bounded by 
\begin{enumerate} 
\item $\widetilde{O}((\alpha c Lf)^{f+1} \cdot L M n^{\omega})$ if $L \geq m^{1/c}$ for some constant $c$,
\item  $\widetilde{O}((\alpha Lf \log n)^{f+1} \cdot L M n^{\omega})$ otherwise,
where $\alpha$ is the universal constant of Theorem \ref{thm:general-covering}.
%$O((2fL\log n)^{f+1}(n^2 (M n)^{1/(4-\omega)}+n^{2}L))$.
\end{enumerate}
\end{lemma}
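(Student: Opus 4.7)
The plan is to bound the time of each of the three steps of the preprocessing algorithm separately and argue that step (i) dominates. Step (i) applies Theorem~\ref{thm:ft-trees-alg} directly, which gives the two cases stated in the lemma. So it remains to show that steps (ii) and (iii) fit within the same bounds.

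For step (ii), note that $\mathcal{D}_{L,f}$ is obtained by simply traversing the nodes of the FT-trees $\{\FT_{L,f}(s,t)\}_{s,t\in V}$. By Observation~\ref{obs:nnode-ft}, the total number of nodes over all $n^2$ pairs is at most $n^2 \cdot L^f$. Each node stores a path of at most $L$ edges (together with its binary search tree, already produced in step~(i)), so enumerating $\mathcal{D}_L$ and filtering by the length condition $|P|\in[L/4,L]$ costs $O(n^2\cdot L^{f+1})$ time. In particular, $|\mathcal{D}_L|\le n^2\cdot L^f$.

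For step (iii), we apply Lemma~\ref{lem:det-hitting-set} with $q=|\mathcal{D}_L|\le n^2\cdot L^f$ sets, each of which is a vertex set of size at least $L/4$. This yields a deterministic running time of $\widetilde{O}(q\cdot L)=\widetilde{O}(n^2\cdot L^{f+1})$ to produce the set $R$ of size $O(n\log n/L)$.

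To conclude, both steps (ii) and (iii) cost $\widetilde{O}(n^2\cdot L^{f+1})$. Since $\omega\geq 2$, $M\geq 1$, and the covering-value factor $(\alpha c L f)^{f+1}$ (resp.\ $(\alpha L f\log n)^{f+1}$) is at least $L^{f+1}$, the cost of steps (ii)+(iii) is absorbed by the bound of step (i) given by Theorem~\ref{thm:ft-trees-alg}. The only mild subtlety is to verify that we really do not need a separate accounting for the binary-search-tree preprocessing attached to each node of each FT-tree, but this is already included in the construction of the trees in step (i) (see Theorem~\ref{thm:ft-trees-alg} and Lemma~\ref{lem:tree-label-node-compute}). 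Therefore the total preprocessing time matches the two cases claimed in the lemma statement.
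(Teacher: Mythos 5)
Your proposal is correct and follows essentially the same route as the paper: the cost is dominated by the FT-tree construction of Theorem~\ref{thm:ft-trees-alg}, while the enumeration of $\mathcal{D}_L$ (with $|\mathcal{D}_L|=O(n^2\cdot L^f)$ by Observation~\ref{obs:nnode-ft}) and the hitting-set computation via Lemma~\ref{lem:det-hitting-set} cost only $\widetilde{O}(n^2\cdot L^{f+1})$, which is absorbed into the stated bounds. You merely spell out the absorption argument more explicitly than the paper does, which is fine.
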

\begin{proof}
The computation time is dominated by the computation of the FT-trees, see Theorem \ref{thm:ft-trees-alg}.
%and $(L,f)$ weak-RPC $\mathcal{G}^w_{L,f}$ takes $O((2fL\log n)^{f+1}(Mn^{\omega}+n^{2}L))$ time.
The FT-trees consists of $O(n^2 \cdot L^f)=O(n^{2+\epsilon})$ labeled nodes, and thus $|\mathcal{D}_L|=O(n^{2+\epsilon})$. By Lemma \ref{lem:det-hitting-set}, the computation of the hitting set $R$ takes $O(n^{2+\epsilon+\epsilon/f})$ time, and $|R|=O(n\log n/L)$. 
%The remaining part of the algorithm involves the computation of either the $A_j$ matrix or the matrix pairs $B_j,D_j$ for every graph $G_j \in \mathcal{G}^*_{L,f}$. The computation of a single $A_j$ matrix takes $O(M^{0.681}n^{2.575})$ and thus $O(|\mathcal{G}^*_{L,f}|M^{0.681}n^{2.575})=O((2f\log n)^f \cdot M^{0.681} n^{2.575+\epsilon})$. By Cor. 3.8 in \cite{weimann2013replacement}, the computation of all the matrices $B_j,D_j$ takes $O((2f\log n)^f \cdot M \cdot n^{\omega+\epsilon})$.
\end{proof}
By setting the matrix multiplication exponent to $\omega=2.373$, and $\epsilon=1-\alpha$, Lemma \ref{lem:wy-pre} achieves the bound of Theorem \ref{thm:alg-dso-wy}.

\paragraph{The Query Algorithm.} Once the FT-trees are computed, the query algorithm is the same as in
\cite{AlonCC19}, for completeness we describe it here. Note that in contrast to \cite{AlonCC19}, we do not assume here that the shortest path ties are decided in a consistent manner. Thus the correctness of the procedure is somewhat more delicate. Given a short query $(s,t,F)$, i.e., $d^{L}(s,t,G\setminus F)=\dist(s,t,G\setminus F)$, the desired distance $d^{L}(s,t,G\setminus F)$ can be computed in time $O(f^2\log L)$ by using the query algorithm of Lemma \ref{lem:query-FT-trees}. From now on assume that the query $(s,t,F)$ is long. Unlike \cite{weimann2013replacement} we would not be able to show that there are few subgraphs in the $(L,f)$-\RP $\mathcal{G}_{L,f}$ that fully avoid\footnote{There are $\widetilde{O}(L)$ such subgraphs which is too costly for our purposes.} $F$. Nevertheless, we will still be able to efficiently compute the dense graph $G^F$, e.g., within nearly the same time bounds as in \cite{weimann2013replacement}. 
Recall that $R$ is the hitting-set of the critical set of replacement paths. The vertex set of the graph $G^F$ is given by $V^F=R \cup \{s,t\}$, and the weight of each edge $(x,y) \in V^F \times V^F$ is given by $w(x,y)=d^{L}(x,y,G\setminus F)$. This weight can be computed by applying the query algorithm of Lemma \ref{lem:query-FT-trees} on the FT-tree $\FT_{L,f}(x,y)$ with the query $(x,y,F)$. 

To answer the $(s,t,F)$ query it remains to compute the $s$-$t$ distance in the dense graph $G^F$. Using the method of feasible price functions and in the exact same manner as in \cite{weimann2013replacement}, this computation is done in $\widetilde{O}(|E(G^F)|)=\widetilde{O}(n^{2-2\epsilon/f})$. This completes the description of the query algorithm. Given the computation of the FT-trees in the preprocessing step, by Lemma \ref{lem:query-FT-trees} the computation of the graph $G^F$ takes $O(|E(G^F)|\cdot f^2 \log L)=\widetilde{O}(n^{2-2\epsilon/f})$ time. This matches the query time of Weimann and Yuster \cite{weimann2013replacement} (up to poly-logarithmic terms). We finalize the section by showing the correctness of the query algorithm. Due to the fact that we do not assume uniqueness of shortest paths as in \cite{AlonCC19}, the argument is more delicate.
\begin{claim}\label{cl:wy-query-corr}
$\dist(s,t,G^F)=\dist(s,t,G \setminus F)$.
\end{claim}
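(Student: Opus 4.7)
The plan is to show the two inequalities separately. The direction $\dist(s,t,G^F) \ge \dist(s,t,G\setminus F)$ is immediate: every edge $(x,y)$ of $G^F$ carries weight $d^{L}(x,y,G\setminus F) \ge \dist(x,y,G\setminus F)$, so any $s$-$t$ walk in $G^F$ has total weight at least $\dist(s,t,G\setminus F)$ by the triangle inequality for distances in $G\setminus F$.

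For the reverse direction $\dist(s,t,G^F) \le \dist(s,t,G\setminus F)$, my plan is to exhibit an $s$-$t$ walk in $G^F$ of weight at most $\dist(s,t,G\setminus F)$ by decomposing a shortest $s$-$t$ path of $G\setminus F$ into checkpoints lying in $V(G^F) = R \cup \{s,t\}$. The key device is to work with a \emph{hop-minimal} shortest $s$-$t$ path $P^*$---one using the fewest edges among all shortest $s$-$t$ paths in $G\setminus F$---rather than an arbitrary one; since $(s,t,F)$ is long, $|P^*| > L$. Starting from $x_0 = s$, I would build the checkpoints $x_0, x_1, \ldots$ iteratively while maintaining a hop-minimal shortest $x_j$-$t$ path $P^{(j)}$ in $G\setminus F$. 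If $|P^{(j)}| \le L$ I would set $x_{j+1} = t$, closing the walk via the $G^F$-edge of weight $d^{L}(x_j, t, G\setminus F) = \omega(P^{(j)})$ obtained through Lemma~\ref{lem:query-FT-trees}. Otherwise I would take $v$ to be the vertex at hop-position $L$ on $P^{(j)}$, query $\FT_{L,f}(x_j, v)$ to obtain a shortest $x_j$-$v$ path $P^{L}(x_j, v, F)$ in $G\setminus F$ with at most $L$ edges, and pick $x_{j+1}$ as an interior vertex of $P^{L}(x_j, v, F)$ that lies in $R$. The $G^F$-edge from $x_j$ to $x_{j+1}$ then has weight at most $\omega(P^{L}(x_j, v, F)_{[x_j, x_{j+1}]})$, and telescoping the resulting inequalities would give $\sum_i d^{L}(x_{i-1}, x_i, G\setminus F) \le \omega(P^*)$ as required.

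The principal obstacle I foresee is the absence of uniqueness of shortest paths, which \cite{AlonCC19} sidestepped via a lex-smallest tie-breaking rule whose computation is too expensive to derandomize in our setting. My substitute is a splicing argument driven by hop-minimality: if $|P^{L}(x_j,v,F)|$ were strictly less than $L$, replacing the first $L$ edges of $P^{(j)}$ by $P^{L}(x_j,v,F)$ and removing any cycles from the concatenation with $P^{(j)}_{[v,t]}$ would yield a shortest $x_j$-$t$ path in $G \setminus F$ with strictly fewer hops than $P^{(j)}$, contradicting the hop-minimal choice; hence $|P^{L}(x_j,v,F)| = L$ exactly, which places it in $\mathcal{D}_L$ and guarantees that the hitting set $R$ meets $V(P^{L}(x_j,v,F))$. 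A secondary concern is ensuring $x_{j+1} \ne x_j$ so that the iteration makes progress and the remaining weighted distance strictly decreases each round; I would arrange this by invoking Lemma~\ref{lem:det-hitting-set} on the interior vertex sets $V(P) \setminus \{\text{endpoints of } P\}$ for $P \in \mathcal{D}_L$, whose sizes remain $\Theta(L)$, so the size bound $|R| = O(n\log n/L)$ is preserved while forcing each hit vertex to be strictly past $x_j$ along $P^{L}(x_j,v,F)$.
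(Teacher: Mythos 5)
Your route is genuinely different from the paper's, and its skeleton can be made to work. The paper fixes \emph{one} hop-minimal shortest $s$-$t$ path in $G\setminus F$, partitions it once into consecutive segments of $[L/4,L/2]$ edges, and uses the hop-minimality splicing argument (the same idea you use) only to show that \emph{every} shortest path between segment endpoints has at least $L/4$ edges; hence each queried path $P^{L}(s_i,t_i,F)$ lies in $\mathcal{D}_L$, is hit by $R$ exactly as $R$ is built in the preprocessing, and the concatenation of these segments directly exhibits the required $s$-$t$ walk in $G^F$. You instead re-derive a hop-minimal suffix path at every checkpoint and advance greedily; your local telescoping inequality is fine, since $\dist(x_{j+1},t,G\setminus F)\le \len\bigl(P^{L}(x_j,v,F)[x_{j+1},v]\bigr)+\dist(v,t,G\setminus F)$ gives $w_{G^F}(x_j,x_{j+1})+\dist(x_{j+1},t,G\setminus F)\le \dist(x_j,t,G\setminus F)$, and your splicing argument pinning $|P^{L}(x_j,v,F)|=L$ (hence membership in $\mathcal{D}_L$) is sound. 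The trade-off is that the one-shot partition needs neither distinct hit vertices nor a termination argument, while your iteration needs both.

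Those two points are where your write-up has real gaps. First, to force $x_{j+1}\neq x_j$ you alter the preprocessing so that $R$ hits the interior vertex sets of the paths in $\mathcal{D}_L$; but Claim~\ref{cl:wy-query-corr} is about $G^F$ as built from the $R$ of step (iii) of the stated algorithm (hitting $V(P)$ for $P\in\mathcal{D}_L$), and with that $R$ the only guaranteed hit on $P^{L}(x_j,v,F)$ may be $x_j$ itself, stalling your iteration. So as written you prove the claim for a modified oracle; the modification is harmless for the bounds, but it is avoidable (the paper's proof shows it is not needed). Second, your termination measure, ``the remaining weighted distance strictly decreases,'' fails in the regime Theorem~\ref{thm:alg-dso-wy} actually allows: weights lie in $[-M,M]$, so $d^{L}(x_j,x_{j+1},G\setminus F)$ can be zero or negative, the remaining distance need not drop, and a priori the checkpoints could cycle, in which case you never produce a finite $s$-$t$ walk in $G^F$. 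The fix lives inside your own framework: the inequalities above are in fact tight, so the walk $P^{L}(x_j,v,F)[x_{j+1},v]\circ P^{(j)}[v,t]$ has weight exactly $\dist(x_{j+1},t,G\setminus F)$; removing (necessarily zero-weight) cycles yields a shortest $x_{j+1}$-$t$ path with at most $|P^{(j)}|-1$ edges, so the hop length of the maintained hop-minimal suffix path is a strictly decreasing potential. With these two repairs (or by switching to the paper's single-partition argument) your proof is correct.
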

\begin{proof}
The correctness for the short queries $(s,t,F)$ follows by the correctness of Lemma \ref{lem:query-FT-trees}.
Consider a long query $(s,t,F)$ and let $P(s,t,F)$ be the $s$-$t$ shortest path in $G \setminus F$ with the minimal number of edges. If there are several such paths, pick one in an arbitrary manner. 
By definition, $P=P(s,t,F)$ has at least $L$ edges. 
Partition it into segments of length\footnote{E.g., partition $P(s,t,F)$ into consecutive segments of length $L/4$, while the last segment have length at most $L/2$.} $[L/4,L/2]$ and let $s_i$-$t_i$ be the endpoints of the $i$th segment. That is, $P=P[s_1=s,t_1=s_2] \circ P[s_2,t_2] \circ \ldots P[s_\ell,t_\ell=t]$. 

By the definition of $P$, \emph{every} $s_i$-$t_i$ shortest path in $G \setminus F$ \emph{must} have at least $L/4$ edges. To see this, assume towards contradiction otherwise that there exists a pair $s_i,t_i$ with a shorter (in number of edges) $s_i$-$t_i$ shortest path in $G \setminus F$. This implies that we can obtain an $s$-$t$ shortest path $P''$ of the same weight but with fewer edges, contradiction to the minimality (in edges) of $P$. 
Since $d^{L/2}(s_i,t_i,G \setminus F)=\dist(s_i,t_i,G \setminus F)$ for every $i \in \{1,\ldots, \ell\}$, there is an $s_i$-$t_i$ path $P^{L}(s_i,t_i,F)$ of length at most $L$ in the FT-tree $\FT_{L,f}(s_i,t_i)$. Specifically, this path can be found by applying the query algorithm of Lemma \ref{lem:query-FT-trees} with the query $(s_i,t_i,F)$. By Lemma \ref{lem:query-FT-trees}, this results in the distance
$d^{L}(s_i,t_i,G \setminus F)$ along with a path $P^{L}(s_i,t_i,F)$. %Note that $P^{L}(s_i,t_i,F)$ has at least $L/4$ edges and at most $L$ edges.

Consider now an alternative $s$-$t$ path $P'=P^{L}(s_1,t_1,F) \circ P^{L}(s_2,t_2,F)\circ \ldots \circ P^{L}(s_\ell,t_\ell,F)$. Since $d^{L/2}(s_i,t_i,G \setminus F)=\dist(s_i,t_i,G \setminus F)$ for every $i \in \{1,\ldots, \ell\}$, we have that $P' \cap F=\emptyset$ and $\len(P')=\len(P)=\dist(s,t,G\setminus F)$. 

By definition, every $P^{L}(s_i,t_i,F) \in \mathcal{D}_{L,f}$, and since $P^{L}(s_i,t_i,F)$ has at least $L/4$ edges and at most $L$ edges, $P^{L}(s_i,t_i,F) \in \mathcal{D}_L$.
Since $R$ is a hitting-set of all paths in $\mathcal{D}_L$, there exists some $x_i \in P^{L}(s_i,t_i,F) \cap R$ for every $i$.
This implies that $P'$ can be written as a concatenation of replacement path segments each with at most $L$ edges and with both endpoints in $V(G^F)=R \cup \{s,t\}$. Let $\{s=x_0, x_1,\ldots, x_k, x_{k+1}=t\}$ be the ordered set of the representatives of the $V(G^F)$ vertices on $P'$. By the description of the query algorithm, for every $i \in \{0,\ldots, k\}$, it holds that $w(x_i,x_{i+1})=d^{L}(x_i,x_{i+1}, G \setminus F)$. By the above argument, $d^{L}(x_i,x_{i+1}, G \setminus F)=\dist(x_i,x_{i+1},G \setminus F)$. In addition, for \emph{every} pair $x,y \in V(G^F)$, $w(x,y)=d^{L}(x,y, G \setminus F)\geq \dist(x,y,G\setminus F)$. We therefore conclude that $\dist(s,t,G^F)=\len(P')=\dist(s,t,G \setminus F)$.
\end{proof}

\section{Derandomization of Fault Tolerant Spanners}
We next consider the applications of the $(L,f)$-\RP to deterministic constructions of fault-tolerant spanners resilient to at most $f$ \emph{vertex} faults. For a given $n$-vertex (possibly) weighted graph $G=(V,E)$, a subgraph $H\subseteq G$ is an $f$-fault tolerant $(\alpha,\beta)$-spanner if
$$\dist(s,t,H \setminus F)\leq \alpha \cdot \dist(s,t,G \setminus F)+\beta, \mbox{~for every~} s,t \in V, F \subseteq V, |F|\leq f~.$$
When $\beta=0$, the spanner is called multiplicative spanner, denoted by $f$-fault tolerant $t$-spanner for short, $t$ is the stretch factor. When $\alpha=1$, the spanner is additive. 

\subsection{Multiplicative Vertex Fault-Tolerant Spanners}
Chechik, Langberg, Peleg, and Roddity \cite{chechik2010fault} presented the first non-trivial construction of $f$ fault-tolerant multiplicative spanners resilient to vertex faults. The size overhead of their construction (compared to standard spanner) is $k^f$, that is, exponential in the number of faults. Dinitz and Krauthgamer \cite{dinitz2011fault} provided a simpler and sparser solution by using the notion of \RP{}s. They showed:
\begin{theorem}[Theorem 1.1 of \cite{dinitz2011fault}]\label{thm:mike}
For every graph $G=(V,E)$ with positive edge lengths and odd $t\geq 3$, 
there is an $f$-fault tolerant $t$-spanner with size $O(f^{2-2/(t+1)}\cdot n^{1+2/(t+1)}\log n)$. 
\end{theorem}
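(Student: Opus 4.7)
The plan is to mimic Dinitz-Krauthgamer via the $(L=2,f)$ vertex variant of \RP. Sample $r$ vertex subsets $V_1,\ldots,V_r\subseteq V$ by including each vertex independently with probability $p=1/f$, and set $G_i=G[V_i]$. For any fixed edge $e=(u,v)\in E$ and vertex fault set $F\subseteq V\setminus\{u,v\}$ with $|F|\le f$, the probability that $\{u,v\}\subseteq V_i$ while $V_i\cap F=\emptyset$ is at least $p^2(1-p)^f=\Omega(1/f^2)$. Taking a union bound over the at most $n^{f+2}$ pairs $(e,F)$, choosing $r=O(f^3\log n)$ ensures that w.h.p.\ every such pair is covered by some $G_i$; a Chernoff bound simultaneously ensures $|V_i|=O(n/f)$ for all $i$. (Equivalently, the deterministic vertex $(L=2,f)$-\RP from Theorem~\ref{lem:num-collide-paths} gives the same coverage, while property (I5v) yields $|V(G_i)|\le n/f$ for free.)

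Next, for each $G_i$, invoke the classical greedy algorithm of Alth\"ofer~et~al.\ to obtain an ordinary $t$-spanner $H_i\subseteq G_i$ with $|E(H_i)|=O(|V(G_i)|^{1+2/(t+1)})=O((n/f)^{1+2/(t+1)})$ for odd $t$. Output $H=\bigcup_i H_i$. To verify the fault-tolerance guarantee, fix any $s,t\in V$ and any $F\subseteq V$ with $|F|\le f$, and let $P=e_1\circ\cdots\circ e_k$ be a shortest $s$-$t$ path in $G\setminus F$. Each edge $e_j=(u_j,v_j)$ of $P$ avoids $F$, so by the covering property some $G_{i_j}$ contains $e_j$ and satisfies $V(G_{i_j})\cap F=\emptyset$; the $t$-spanner $H_{i_j}$ therefore supplies a $u_j$-to-$v_j$ walk of weight at most $t\cdot w(e_j)$ using only vertices of $V(G_{i_j})\subseteq V\setminus F$. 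Concatenating these walks yields an $s$-$t$ walk in $H\setminus F$ of weight at most $t\cdot \dist(s,t,G\setminus F)$, proving the stretch bound.

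For the size, summing over $r=\widetilde{O}(f^3)$ subgraphs gives
\[
|E(H)|\le r\cdot O\!\left((n/f)^{1+2/(t+1)}\right)=\widetilde{O}\!\left(f^{2-2/(t+1)}\cdot n^{1+2/(t+1)}\right),
\]
matching the claimed bound up to logarithmic factors. The delicate point is balancing $r$ against per-subgraph size while maintaining coverage: the choice $p=1/f$ makes the coverage probability $\Omega(1/f^2)$ per sample (so only $O(f^3\log n)$ samples suffice) while keeping each $|V_i|$ at $O(n/f)$, and this is exactly the balance the covering value of an $(L=2,f)$-\RP is engineered to achieve. No step beyond this balancing is particularly technical.
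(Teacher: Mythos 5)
Your argument is correct and is essentially the proof the paper relies on: Theorem~\ref{thm:mike} is imported from Dinitz--Krauthgamer, whose conversion scheme is exactly your construction (a randomized vertex $(L{=}2,f)$-\RP{} of $O(f^3\log n)$ induced subgraphs on $O(n/f)$ vertices each, a greedy $t$-spanner inside each, and the union, with the same edge-by-edge stretch argument that the paper reuses verbatim when it derandomizes this scheme in Theorem~\ref{thm:spanner-mult-full}). The only nit is the degenerate choice $p=1/f$ at $f=1$, where $(1-p)^f=0$; taking $p=1/(f+1)$ or $p=1/(2f)$ fixes this and leaves all stated bounds unchanged.
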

This theorem is a consequence of a general conversion scheme that turns any $\tau(n,m)$-time algorithm for constructing $t$-spanners with size $s(n)$ into an 
algorithm for constructing $f$-fault tolerant $t$-spanner with size $O(f^3\log n \cdot s(2n/f))$ and time complexity $O(f^3\log n \cdot \tau(2n/f,m))$. Specifically, applying this conversion to the greedy spanner algorithm yields an $f$-fault tolerant $(2k-1)$-spanner with $O(f^3\log n\cdot (n/f)^{1+1/k})$ edges in time
$O(f^3\log n k \cdot m \cdot (2n/f)^{1+1/k})$. 
In this section we provide the derandomization of Theorem 2.1 of \cite{dinitz2011fault} (which used to obtain Theorem 1.1) and show: 
\begin{theorem}[Derandomized of Theorem 2.1 of \cite{dinitz2011fault}]\label{thm:spanner-mult-full}
If there is a deterministic algorithm $\cA$ that on every $n$-vertex $m$-edge graph builds a $t$-spanner of size $s(n)$ and time $\tau(n,m,t)$, then there is an algorithm that on any such graph builds an $f$-fault tolerant $t$-spanner of:
\begin{enumerate}
\item size $O(f^3 \cdot s(n/f))$ and time $O(f^3 (\tau(n/f,m,t)+ m))$, if $f \geq n^{1/c}$ for some constant $c\in \mathbb{N}$.
\item size $O(\log^5 n \cdot s(n/f))$ and time $O(\log^5 n (\tau(n/f,m,t)+ m))$, if $f \leq \log n$
\item size $O((f\log n)^3 \cdot s(n/f))$ and time $O((f\log n)^3 (\tau(n/f,m,t)+ m))$, if $f \in [\log n, n^{o(1)}]$.
\end{enumerate}
\end{theorem}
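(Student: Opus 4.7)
The plan is to carry out the Dinitz--Krauthgamer reduction with their random vertex sampling replaced by the deterministic vertex variant of $(L,f)$-\RP from Theorem~\ref{thm:general-covering}. First, apply the vertex-fault version of Theorem~\ref{thm:general-covering} with $L=2$ to obtain an $(L=2,f)$-\RP $\mathcal{G}=\mathcal{G}_{2,f}^v$ of $G$. By property (I5v) of Theorem~\ref{lem:num-collide-paths}, every subgraph $G_i\in\mathcal{G}$ spans at most $n/f$ vertices; (I6v) produces the vertex set of each $G_i$ in $\widetilde{O}(n/f)$ time, and reading off the induced edges of $G_i$ then costs an additional $O(m)$ per subgraph. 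Next, for every $G_i\in\mathcal{G}$ invoke the deterministic $t$-spanner algorithm $\cA$ on $G_i$ to obtain a $t$-spanner $H_i$ of $G_i$ of size at most $s(n/f)$ in time $\tau(n/f,m,t)$. The output is $H:=\bigcup_i H_i$.

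Plugging $a=f$ and $b=L=2$ into Theorem~\ref{thm:general-covering} yields $|\mathcal{G}|=O(f^3)$ when $f\geq n^{1/c}$, $|\mathcal{G}|=O(f^4\log n)=O(\log^5 n)$ when $f\leq \log n$, and $|\mathcal{G}|=O((f\log n)^3)$ in the remaining regime $f\in[\log n, n^{o(1)}]$, matching the three cases of the theorem statement. The size of $H$ is thus $|\mathcal{G}|\cdot s(n/f)$ and the running time is $O(|\mathcal{G}|\cdot(\tau(n/f,m,t)+m))$, consistent with the claimed bounds.

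For correctness, fix any $s,t\in V$ and any fault set $F\subseteq V$ with $|F|\leq f$, and let $P=(s=v_0,v_1,\ldots,v_k=t)$ be a shortest $s$-$t$ path in $G\setminus F$, so that $k=\dist(s,t,G\setminus F)$. For every consecutive pair $(v_{j-1},v_j)$, the \RP property supplies a subgraph $G_{i_j}\in\mathcal{G}$ that contains both $v_{j-1}$ and $v_j$ (hence also the edge $(v_{j-1},v_j)$, since $G_{i_j}$ is vertex-induced) and satisfies $V(G_{i_j})\cap F=\emptyset$. The $t$-spanner guarantee of $H_{i_j}$ inside $G_{i_j}$ gives $\dist(v_{j-1},v_j,H_{i_j})\leq t\cdot\dist(v_{j-1},v_j,G_{i_j})\leq t$, and since $V(H_{i_j})\cap F=\emptyset$ the same bound holds in $H\setminus F\supseteq H_{i_j}\setminus F=H_{i_j}$. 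Summing over the $k$ edges of $P$ yields $\dist(s,t,H\setminus F)\leq t\cdot k=t\cdot\dist(s,t,G\setminus F)$, which is exactly the $f$-fault-tolerant $t$-spanner property.

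The main point that needs verification is that the vertex variant of the construction underlying Theorem~\ref{lem:num-collide-paths} simultaneously delivers the covering guarantee \emph{and} the small-subgraph property (I5v); this follows from the fact that \SHMC{\rs} and \SHMC{\con} arise from linear codes, so each hash function retains exactly a $1/q$ fraction of the universe, which for the vertex variant translates to induced subgraphs on at most $n/f$ vertices. Once this is in place, the remainder is a direct translation of the Dinitz--Krauthgamer reduction: the three regimes of the theorem fall out of the corresponding regimes of Theorem~\ref{thm:general-covering}, and the spanner correctness proof reduces to the standard edge-by-edge covering argument applied to a shortest path in $G\setminus F$.
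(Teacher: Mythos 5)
Your proposal is correct and follows essentially the same route as the paper: compute the vertex variant of the $(2,f)$-\RP via Theorem~\ref{lem:num-collide-paths}, run $\cA$ on each subgraph, take the union, and prove the stretch bound edge-by-edge along a replacement path, with the size/time bounds read off from Theorem~\ref{thm:general-covering} and property (I5v). The only cosmetic difference is that you phrase the per-edge stretch bound as if the graph were unweighted ($\dist(v_{j-1},v_j,G_{i_j})\le 1$ and $k=\dist(s,t,G\setminus F)$); the paper states it for positive edge weights as $\dist(u,v,H_j)\le t\cdot w(u,v)$, and your argument carries over verbatim with that substitution.
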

\begin{proof}
The algorithm applies the vertex variant of Theorem \ref{lem:num-collide-paths} to compute $(L=2, f)$ \RP $\mathcal{G}$. Then, it applies the fault-free algorithm $\mathcal{A}$ for computing the $t$-spanner $H_j$ for each subgraph $G_j \in \mathcal{G}$. The output spanner $H=\bigcup_{j=1}^{r} H_j$ is simply the union of all these spanner subgraphs.

We first consider correctness. Fix a replacement-path $P(s,t,F)$. It is required to show that $\dist(s,t,H \setminus F)\leq t \cdot \dist(s,t,G \setminus F)$ and thus it is sufficient to show that 
$\dist(u,v,H \setminus F)\leq w(u,v)$ for every edge $(u,v)\in P(s,t,F)$, where $w(u,v)$ is the weight of the edge $(u,v)$ in $G$. Since $\mathcal{G}$ is an $(2,f)$-\RP, there exists a subgraph $G_j \in \mathcal{G}$ satisfying that $(u,v) \in G_j$ and $F \cap V(G_j)=\emptyset$. Thus, the $t$-spanner $H_j \subseteq H$ satisfies that $\dist(u,v,H_j \setminus F)=\dist(u,v,H_j)\leq t w(u,v)$, as desired. 

We now turn to show that the computation time is $O(|\mathcal{G}|\cdot (\tau(n/f,m,t)+m))$ and that the size of the spanner is $O(|\mathcal{G}| \cdot  s(n/f,m,t))$.  By Theorem \ref{lem:num-collide-paths}(I5v), 
we get that $|V(G_j)|=O(n/f)$ for every $G_j \in \mathcal{G}$. The bounds then follows by plugging the covering value $|\mathcal{G}|$ and the computation time of the covering of Theorem \ref{thm:general-covering}.
\end{proof}

\subsection{Nearly Additive Fault-Tolerant Spanners} In \cite{BraunschvigCPS15}, the approach of \cite{dinitz2011fault} was extended to provide vertex fault-tolerant spanners with nearly additive stretch. 

\begin{theorem}\label{thm:additiveft-spanner}[Derandomization of Theorem 3.1 of \cite{BraunschvigCPS15}]
Let $\mathcal{A}$ be an algorithm for computing $(\mu,\alpha)$-spanner of size $O(n^{1+\delta})$ in time $\tau$ for an $n$-vertex $m$-edge graph $G=(V,E)$. Set  $L=\lceil \alpha \cdot \epsilon^{-1}\rceil+1$.  Then, for any $\epsilon>0$ and $f \leq L$, one can compute an $f$-vertex fault-tolerant $(\mu+\epsilon,\alpha)$-spanner with: 
\begin{enumerate}
\item $O((c'f L)^{f+1} \cdot n^{1+\delta})$ edges in time $\widetilde{O}((f c'L)^{f+1}\cdot \tau)$, if $L\geq n^{1/c}$ for some constant $c \in \mathbb{N}$. 

\item $O((c'f L)^{f+2} \cdot \log n\cdot n^{1+\delta})$ edges in time $\widetilde{O}((c'f L)^{f+2} \cdot \log n \cdot \tau)$, if $L \leq \log n$.

\item $O((c'f L \log n)^{f+1} \cdot n^{1+\delta})$ edges in time $\widetilde{O}((c'f L \log n)^{f+1}\cdot \tau)$, otherwise,
\end{enumerate}
for some constant $c'$. 
\end{theorem}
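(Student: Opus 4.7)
The plan is to mirror the derandomization of Theorem~\ref{thm:spanner-mult-full}, but rather than instantiating the $(L,f)$-\RP at $L=2$, to use it at the parameter $L=\lceil \alpha\epsilon^{-1}\rceil+1$ suggested by the segment-decomposition scheme of \cite{BraunschvigCPS15}. The hypothesis $f\le L$ puts us in the regime $a=\max\{L,f\}=L$ and $b=\min\{L,f\}=f$ of Theorem~\ref{thm:general-covering}, so its first, third and fourth cases (after replacing $m$ by $n$ for the vertex-fault variant, and in line with Remark~\ref{rem:app} to avoid the second case) instantiate to exactly the three covering-value expressions stated in the theorem above. The claimed size and running-time bounds of the output spanner then follow by multiplying the covering value of the \RP by the base-algorithm size $O(n^{1+\delta})$ and running time $\tau$, respectively.

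Concretely, first I would invoke the vertex variant of Theorem~\ref{thm:general-covering} to build an $(L,f)$-\RP $\mathcal{G}_{L,f}=\{G_1,\ldots,G_r\}$ of $G$, then run $\mathcal{A}$ on every $G_j$ to obtain a $(\mu,\alpha)$-spanner $H_j\subseteq G_j$, and return $H:=\bigcup_{j=1}^{r} H_j$.

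For correctness, I would fix an arbitrary triple $(s,t,F)$ with $F\subseteq V$ and $|F|\le f$, take any shortest $s$-$t$ path $P=P(s,t,F)$ in $G\setminus F$, and decompose it into $k:=\lceil |P|/L\rceil$ consecutive subpaths $P_1,\ldots,P_k$ of at most $L$ edges, with endpoints $(u_i,v_i)$. Each $P_i$ is itself a shortest replacement path $P(u_i,v_i,F)$ on at most $L$ edges, so by Definition~\ref{def:RP} there exists some $G_{j_i}\in\mathcal{G}_{L,f}$ that contains all edges of $P_i$ and no vertex of $F$. The spanner $H_{j_i}\subseteq H$ then preserves the $(u_i,v_i)$-distance in $G_{j_i}$ up to factors $(\mu,\alpha)$ while avoiding $F$, giving $\dist(u_i,v_i,H\setminus F)\le \mu\cdot|P_i|+\alpha$. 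Summing over $i$, using $\sum_i |P_i|=|P|=\dist(s,t,G\setminus F)$ and $k\le |P|/L+1$, together with $L\ge \alpha/\epsilon$, yields $\dist(s,t,H\setminus F)\le (\mu+\epsilon)\dist(s,t,G\setminus F)+\alpha$, establishing that $H$ is an $f$-vertex fault-tolerant $(\mu+\epsilon,\alpha)$-spanner.

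The only non-routine issue is the calibration of $L$: it must be large enough that the per-segment additive slack $k\alpha$ amortizes into an $\epsilon$-multiplicative overhead against the true replacement distance, which the choice $L=\lceil\alpha\epsilon^{-1}\rceil+1$ achieves by forcing $\alpha/L\le\epsilon$. Apart from this calibration, the argument is a direct application of the deterministic $(L,f)$-\RP machinery and carries over from the randomized scheme of \cite{BraunschvigCPS15} without any new obstacle.
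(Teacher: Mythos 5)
Your proposal is correct and follows essentially the same route as the paper: build the $(L,f)$-\RP of Theorem~\ref{thm:general-covering} (the paper uses $L+1$, an immaterial constant shift), run $\mathcal{A}$ on each subgraph and take the union, then decompose each replacement path into segments of at most $L$ edges and amortize the per-segment additive $\alpha$ into an $\epsilon$-multiplicative overhead via $L\ge\alpha/\epsilon$. The size and time bounds follow from the covering values exactly as in the paper's argument.
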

\begin{proof}
The proof follows the exact same line as Theorem 3.1 of \cite{BraunschvigCPS15} only when using 
Theorem \ref{thm:general-covering} to build an $(L+1,f)$-\RP $\mathcal{G}=\{G_1,\ldots, G_\gamma\}$.
It then applies algorithm $\mathcal{A}$ on each of these subgraphs, and take the union of the output spanner as the final subgraph $H$. The size and time bounds are immediate by Theorem \ref{thm:general-covering}. 
To see the stretch argument, it is sufficient to show that for any path of length at most $L$ in $G \setminus F$, there is a corresponding path in $H \setminus F$ of bounded length. The stretch argument for longer paths is obtained by decomposing it into $L$-length segments (except perhaps for the last segment), and accumulating the additive stretch from each segment. 
Fix an $L$-length path $P \subseteq P(s,t,F)$, and let $u,v$ be the endpoints of $P$.
Since $\mathcal{G}$ is an $(L+1,f)$-\RP, w.h.p., there exists a subgraph $G_i \in \mathcal{G}$ such that $P \subseteq G_i$ and $F\cap G_i=\emptyset$. 
Since $H_i$ is an $(\mu, \alpha)$-spanner for $G_i$, we have that 
$$\dist(u,v,H_i \setminus F)=\dist(u,v,H_i)\leq \mu\cdot L+\alpha~.$$
Partition any path $P(s,t,F)$ into $\lceil (1/L)\cdot \dist(s,t,G\setminus F)\rceil$ segments each of length at most $L$. We then have that
$$\dist(s,t,H \setminus F)\leq \mu \cdot \dist(s,t,G \setminus F)+\alpha \cdot \lceil (1/L)\cdot \dist(s,t,G \setminus F)\rceil~.$$
Since $1/L<\epsilon/\alpha$, the stretch bound holds.
\end{proof}

\subsection*{Acknowledgment}
We would like to thank Swastik Kopparty, Gil Cohen, and Amnon Ta-Shma for discussion on coding theory, Moni Naor for discussion on universal hash functions, and Eylon Yogev for various discussions.
 
\bibliographystyle{alpha} 
\bibliography{crypto}

\appendix

\section{Comparison with \cite{parter2019small} and \cite{BodwinDinitz20}}\label{sec:comparison}
In \cite{parter2019small}, the second author provided the first deterministic constructions of $(L,f)$-\RP for $L\geq f$. The notion of $(L,f)$-\RP is introduced for the first time in the current paper, and in \cite{parter2019small} the construction is referred to as a \emph{derandomization of the FT-sampling technique}. The construction of \cite{parter2019small,parter2019small-arxiv} is based on a computation of a family of perfect hash functions $\mathcal{H}=\{h: [n] \to [2(L+f)^2]\}$ with $\poly(Lf\log n)$ functions. The covering subgraph family $\mathcal{G}$ of \cite{parter2019small,parter2019small-arxiv} consists of $|\mathcal{H}|\cdot (4Lf)^{2f}=(4Lf\log n)^{O(1)+2f}$ subgraphs. In the context of \cite{parter2019small}, it was sufficient for the value of the covering to be polynomial in $L$, and for the computation time to be polynomial in $n$. Also note that despite the fact that \cite{parter2019small,parter2019small-arxiv} explicitly considers the setting where $L \geq f$, their construction can be extended to provide a covering of value $\poly(f\log n)$ also for the case\footnote{This is similarly to the random construction of $(L,f)$-RPC, where the sampling probability also differs between when $L \leq f$ and $L > f$.} of $L\leq f$. Specifically, this can be done by 
applying very minor modifications to Lemma 17 of \cite{parter2019small-arxiv}: set $a=f$ and $b=L$, then 
let the set $S_{h,i_1,i_2,\ldots, i_b}$ of the lemma be given by
\begin{equation}\label{eq:fault-sets}
S_{h,i_1,i_2,\ldots, i_b}=\{ \ell \in [n] ~\mid~ h(\ell) \in \{i_1,i_2,\ldots, i_b\}\}, \forall h \in \mathcal{H} \mbox{~and~} i_1,i_2,\ldots, i_b\in [2(L+f)^2]~.
\end{equation}
I.e., the only modification for $L\leq f$ is in replacing the $\notin$ sign with $\in$ in Eq. (\ref{eq:fault-sets}). The argument then follows in a symmetric manner as in the proof of Lemma 17 of \cite{parter2019small-arxiv}. To summarize, the construction of \cite{parter2019small,parter2019small-arxiv} provides an $(L,f)$-\RP of value $\poly(\min\{L,f\}\log n)$.

%\textbf{MP: in the below paragraph say that this is obtained by using more delicate coding scheme than just perfect hash functions, as the latter is more than really needed.}
In this work, we considerably optimize the construction of \cite{parter2019small} in several ways. First, we almost match the optimal values $(L,f)$-\RP{}s for a wide range of parameters (e.g., when $f=O(1)$), providing a polynomial improvement in $\max\{L,f\}$ compared to \cite{parter2019small,parter2019small-arxiv}. 
Second, we establish several key properties of $(L,f)$-\RP{}s (e.g., Theorems \ref{lem:num-collide-paths}) which have extensive applications. Those properties follow immediately by the randomized construction, and are proven in a quite natural manner in our deterministic setting as well. For example, in order to provide a ``perfect" derandomization of Weimann and Yuster DSO \cite{weimann2013replacement} as provided in the paper, we must use our nearly optimal constructions of $(L,f)$-\RP{}s. Using the suboptimal $(L,f)$-RPC constructions of \cite{parter2019small,parter2019small-arxiv} lead to a polynomially larger query time compared to that of \cite{weimann2013replacement}. Third, we provide the first lower bound for the values of the $(L,f)$ covering. We also note that our techniques differ from \cite{parter2019small,parter2019small-arxiv} and are based on various coding schemes.

%\textbf{MP: in the below elaborate a bit on the technical difference in the computation of the \RP{}s.}
Independent to our work, very recently \cite{BodwinDinitz20}  presented a (randomized) slack version of the greedy algorithm to obtain (vertex) fault-tolerant spanners of \emph{optimal} size. To derandomize their construction, \cite{BodwinDinitz20} provided a deterministic construction $(L=2,f)$-\RP (using our terminology) with additional properties. The work of \cite{BodwinDinitz20} leaves a gap in the running time depending on the value of the number of faults, $f$. Specifically, for $f\geq n^c$ for some constant $c$, their derandomization matches the bounds of their randomized construction. In contrast, for smaller values of $f$, there is a gap of $\poly(f)$ factor in the running time. In our work, using the generalized construction of $(L,f)$-\RP with $L=2$ and in particular using  Theorem \ref{lem:num-collide-paths} (instead of Theorem 5.3 of \cite{BodwinDinitz20}) we close this gap. 

Elaborating, their non-optimality in derandomization stems from a not completely tight analysis of some additional properties of the \RP that they construct, and in order to compensate for this analysis, they rely on using ``bulkier'' objects such as  almost $k$-wise independent families in a black-box manner.

More formally, by using Theorem \ref{lem:num-collide-paths} instead of Lemma 5.3 of \cite{BodwinDinitz20}, we show:
\begin{lemma}[Improvement of Thm. 2.1 of \cite{BodwinDinitz20}]\label{lem:imp}
There is a deterministic algorithm which computes an $f$-(vertex) fault tolerant $(2k-1)$ spanner with at most $O(f^{1-1/k}n^{1+1/k})$ edges in time $\widetilde{O}(f^{1-1/k}n^{2+1/k}+m\cdot f^2)$ (matching the bounds of the randomized construction of Theorem 1.1 of \cite{BodwinDinitz20}).
\end{lemma}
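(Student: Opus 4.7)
The plan is to run the slack greedy algorithm of \cite{BodwinDinitz20} on top of our $(L{=}2,f)$-\RP from Theorem~\ref{lem:num-collide-paths}, in place of their specific hash-family-based construction. First, I would invoke the vertex variant of Theorem~\ref{lem:num-collide-paths} with $L{=}2$ to obtain a family $\mathcal{G}$ of subgraphs of $G$. Since $L{=}2\le f$, Theorem~\ref{thm:general-covering} (with $a=f$, $b=2$) yields $|\mathcal{G}|=\widetilde{O}(f^3)$ for \emph{every} value of $f$; this is precisely what closes the gap in Theorem~2.1 of \cite{BodwinDinitz20}, whose derandomization blew up to $\widetilde{O}(f^6)$ subgraphs in the regime $f=o(n^c)$, while matching, up to polylogarithmic factors, the $\widetilde{O}(f^3)$ sampled subgraphs underlying Theorem~1.1 of \cite{BodwinDinitz20}. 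The vertex version of property (I5v) further guarantees $|V(G_i)|\le n/f$ for every $G_i\in\mathcal{G}$, and (I6v) supplies an $\widetilde{O}(n/f)$-time extraction routine for each $G_i$.

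Next, I would feed $\mathcal{G}$ into the slack greedy procedure of \cite{BodwinDinitz20} verbatim, treating $\mathcal{G}$ as a black box. Their analysis invokes the subgraph family only through three properties: (a)~for every pair $(u,v)$ and every $F\subseteq V$ with $|F|\le f$ and $\{u,v\}\cap F=\emptyset$, some $G_i\in\mathcal{G}$ contains both $u$ and $v$ and is vertex-disjoint from $F$; (b)~$|V(G_i)|\le n/f$ for all $i$; (c)~$|\mathcal{G}|=\widetilde{O}(f^3)$. Property~(a) is exactly the $(L{=}2,f)$-\RP axiom applied to the single-edge replacement path $uv$, property~(b) is (I5v), and property~(c) is the covering value above. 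With these, running the classical greedy $(2k{-}1)$-spanner inside each $G_i$ produces at most $O((n/f)^{1+1/k})$ edges per subgraph, and the slack charging of \cite{BodwinDinitz20} delivers a total of $\widetilde{O}(f^{1-1/k}n^{1+1/k})$ edges (rather than the naive $\widetilde{O}(f^{2-1/k}n^{1+1/k})$ one would obtain from a union over all $\widetilde{O}(f^3)$ subgraphs). The running time decomposes into $\widetilde{O}(m\cdot f^3)$ for building $\mathcal{G}$ via Theorem~\ref{thm:general-covering}, plus $\widetilde{O}(f^3\cdot n/f)=\widetilde{O}(f^2 n)$ for extracting all subgraphs via (I6v), plus the dominating $\widetilde{O}(f^{1-1/k}n^{2+1/k})$ for the greedy spanner computations, giving the claimed $\widetilde{O}(f^{1-1/k}n^{2+1/k}+mf^2)$ total.

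The main obstacle is verifying that the slack greedy analysis of \cite{BodwinDinitz20} really does factor through properties (a)--(c) alone, rather than secretly relying on a probabilistic independence feature of their hash family that our deterministic $\mathcal{G}$ does not automatically supply. If any step in their argument is phrased via almost-$k$-wise independence, I would replace it by a purely combinatorial averaging argument on the fault-avoiding subcollection $\mathcal{G}_P$ that Theorem~\ref{lem:num-collide-paths} already produces for every short path $P$; in particular, the guarantee (I4) that at least half of $\mathcal{G}_P$ avoids any given $F$ with $|F|\le f$ is a deterministic surrogate for the independence conditions the randomized analysis uses, and together with (I5v)--(I6v) it should suffice to carry the slack greedy accounting through unchanged and recover the randomized size and time bounds of Theorem~1.1 of \cite{BodwinDinitz20}.
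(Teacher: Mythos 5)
Your overall plan coincides with the paper's: plug the vertex variant of the $(L=2,f)$-\RP of Theorem~\ref{lem:num-collide-paths} into the slack greedy algorithm of \cite{BodwinDinitz20}, using (I5v)/(I6v) for the $n/f$ vertex bound and the extraction routine, and (I4) as the deterministic substitute for the hash-family guarantee. However, your primary framing is off: the analysis of \cite{BodwinDinitz20} does \emph{not} factor through your properties (a)--(c) alone, and the fallback you relegate to the last paragraph is in fact the mandatory route. The slack greedy decides whether to add an edge $e$ by a threshold test over the collection $L_e$ of subgraphs containing both endpoints of $e$, so both correctness (their Lemma 5.4) and the size charging (their Lemma 5.5) require that for \emph{every} fault set $F$ a constant fraction of $L_e$ avoids $F$; the bare \RP axiom in your (a), which guarantees only one good subgraph, is too weak. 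The paper's proof consists precisely of this identification: $\mathcal{G}_{P=e}$ from (I1)--(I2) plays the role of $L_e$, (I4) gives $|\mathcal{G}_{e,F}|/|\mathcal{G}_e|\ge 1/2$, and one sets $\tau=1/3$ in Algorithm 2 of \cite{BodwinDinitz20}, after which their Lemmas 5.4 and 5.5 apply verbatim.

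The genuine gap is in your running-time accounting. You pay $\widetilde{O}(m f^3)$ to materialize the entire family via Theorem~\ref{thm:general-covering}, yet state a total of $\widetilde{O}(f^{1-1/k}n^{2+1/k}+mf^2)$; the $mf^3$ term is neither within the $mf^2$ budget nor dominated by $f^{1-1/k}n^{2+1/k}$ in general (e.g., $m=\Theta(n^2)$ and $f=n^{1/2}$ give $mf^3=n^{3.5}$ against $mf^2=n^3$ and $f^{1-1/k}n^{2+1/k}\le n^3$). The paper never constructs the $\widetilde{O}(f^3)$ subgraphs explicitly: the family stays implicit (through the underlying code), and for each edge $e$ one computes only the $\widetilde{O}(f)$-size collection $\mathcal{G}_e$ in $\widetilde{O}(f^2)$ time via (I2), for $\widetilde{O}(mf^2)$ in total, and extracts the vertex set of each needed subgraph in $\widetilde{O}(n/f)$ time via (I6v); the remaining time analysis then follows Lemma 5.6 of \cite{BodwinDinitz20} line by line. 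Without replacing your explicit construction step by this lazy, per-edge access, your argument yields the size bound but not the claimed time bound.
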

\begin{proof}
Let $\mathcal{G}_{2,f}$ be the $(2,f)$-RPC of Theorem \ref{lem:num-collide-paths}. By claim (I2) of Theorem \ref{lem:num-collide-paths}, there is a collection of $\widetilde{O}(f)$ subgraphs $\mathcal{G}_{P=e}$ that contain both endpoints of $e$. This is the analogue to the set $L_e$ defined by \cite{BodwinDinitz20}.
For every fixed set $F$ of at most $f$ vertex faults, let $\mathcal{G}_{e,F}$ be the subset of subgraphs in $\mathcal{G}_e$ that fully avoid $F$. To provide a spanner of optimal size in Alg. 2 of \cite{BodwinDinitz20}, it is required that for every $P,F$, the ratio $|\mathcal{G}_{e,F}|/|\mathcal{G}_{e}|\geq c$, for some constant $c$. 
Indeed, by claim (I4) it holds that $|\mathcal{G}_{P,F}| \geq |\mathcal{G}_P|/2$ for every $F$. 
By setting $\tau$ to $1/3$ in Alg. 2 of \cite{BodwinDinitz20} the correctness and the size of the spanner follows by Lemma 5.4 and 5.5 in \cite{BodwinDinitz20}.  (In contrast, in \cite{BodwinDinitz20} the ratio $|\mathcal{G}_{e,F}|/|\mathcal{G}_e|$ depends also on some parameter $\delta$ of their universal hash function). 

It remains to bound the running time. By \ref{lem:num-collide-paths}, for every $e$, computing the collection $\mathcal{G}_e$ takes $\widetilde{O}(f^2)$ time. Thus, taking $\widetilde{O}(f^2 m)$ time for all the edges.
Next, for every fixed edge $e$, computing the vertices of each subgraph in $\mathcal{G}_e$ takes $\widetilde{O}(n/f)$ time per subgraph and $|\mathcal{G}_e|\cdot\widetilde{O}(n/f)=\widetilde{O}(fn)$ in total. 
The rest of the time argument works line by line as in Lemma 5.6 of \cite{BodwinDinitz20}.
\end{proof}

\section{Missing Proofs}\label{sec:miss-proof}
\APPENDRANDLFRPC

\section{Improved \RP given Input Sets} \label{sec:input-RPC}
In  this section, we show an improved \RP computation based on a given input set $\mathcal{D}$. 
Specifically, we consider a relaxed notion of the problem as suggested by Alon, Chechik, and Cohen \cite{AlonCC19} and provide an $(L,f)$-\RP for this relaxed notion with nearly optimal covering value.  The main result of this section is the following.

\begin{theorem}\label{thm:relaxed-RPC}
Let $L,f$ be integer parameters such that $L\ge f$. 
There exists an algorithm $\mathcal{A}$ that takes as input a graph $G$ on $n$ vertices and $m$ edges and  a list $\mathcal{D}=\{ ( P_1, F_1 ), \ldots, ( P_k, F_k )\}$ of $k$ pairs of $L$-length replacement paths $P_i$ and set of faults $F_i$ that it avoids\footnote{In the problem statement of \cite{AlonCC19}, $k=O(n^{2+\epsilon})$.} and outputs a restricted $(L,f)$-\RP $\mathcal{G}(\mathcal{D})$ satisfying that for every $(P_i, F_i ) \in \mathcal{D}$, there is a subgraph $G' \in \mathcal{G}(\mathcal{D})$ that contains $P_i$ and avoids $F_i$.  
Moreover, the running time of $\mathcal{A}$ is 
     $(m+k)\cdot (\log m)^{O(1)}\cdot    (\alpha Lf\log m)^{f}$,
where  $\alpha\in\mathbb N$ is some small universal constant. 
\end{theorem}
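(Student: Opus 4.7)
The plan is to reduce the problem to a set-cover instance on a carefully chosen candidate family of subgraphs generated by a Strong Hit-and-Miss hash family, and then run the classical greedy algorithm. The payoff is that we only need to cover the $k$ input pairs, so the $\log k$ overhead of greedy substitutes for one of the $L$ factors in the covering value of Theorem~\ref{thm:general-covering}, matching (up to $\polylog$ factors) the randomized bound of Lemma~\ref{lem:rand-RPC}.

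Concretely, I would begin from the Reed-Solomon Strong \HM of Lemma~\ref{lem:strongcodes} with parameters $\ell,q = O(Lf\log m)$, and then apply the Strong-\HM analogue of Lemma~\ref{lem:alphabet} to obtain a Boolean family of size $\ell \cdot \binom{q}{\le f}$. Each $(j,S)\in[\ell]\times\binom{[q]}{\le f}$ yields, via Proposition~\ref{prop:RPC}, a candidate subgraph $G_{j,S}\subseteq G$ that retains edge $e$ iff $h_j(e)\notin S$. The key observation is that $G_{j,\,h_j(F_i)}$ covers $(P_i,F_i)$ whenever $h_j(P_i)\cap h_j(F_i)=\emptyset$, since all of $F_i$ is then excluded while all of $P_i$ is retained; by the Strong-\HM property, at least $\ell/2$ of the $j\in[\ell]$ enjoy this separation. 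Thus every input pair is covered by at least $\ell/2$ members of the candidate family, so the uniform fractional solution $x_{j,S}\equiv 2/\ell$ is feasible for the set-cover LP, with value $2\binom{q}{\le f}\le(\alpha Lf\log m)^f$.

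I would then run greedy set cover with universe $\mathcal{D}$ and sets drawn from the at most $k\ell$ \emph{useful} candidates $\{G_{j,h_j(F_i)}\}_{(i,j)}$. The classical $\ln k$-approximation yields a cover $\mathcal{G}(\mathcal{D})$ of size $O\bigl((\alpha Lf\log m)^f\cdot\log k\bigr)$, which fits inside $(\log m)^{O(1)}\cdot(\alpha Lf\log m)^f$ in all interesting regimes (typically $k\le m^{O(1)}$). For efficiency I would (a) precompute the $\ell\times m$ evaluation matrix of the underlying code in $m\cdot\polylog(m)$ time via Corollary~\ref{cor:RS}; (b) for each input pair $(P_i,F_i)$ scan the $\ell$ rows of this matrix, performing $O(L+f)$ table lookups and an $O(L+f)$-time disjointness check per row, to emit its list of at most $\ell$ covering indices; (c) execute greedy using a max-heap keyed on residual coverage counts, running in time proportional to the total incidence $O(k\ell)$; and (d) materialise each chosen subgraph in $O(m)$ time. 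Summing these yields the claimed bound $(m+k)\cdot(\log m)^{O(1)}\cdot(\alpha Lf\log m)^f$.

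The main technical obstacle is recognising that the vanilla Boolean \HM of Theorem~\ref{thm:HM} is insufficient: it guarantees only the \emph{existence} of a single separating hash function for each pair, which would give a trivial set-cover LP value equal to $|\mathcal{F}|$ and hence no improvement over Theorem~\ref{thm:general-covering}. The Strong variant, built from a code of relative distance $>1-1/(2Lf)$, is what supplies the constant-fraction separation property and hence the $1/\binom{q}{\le f}$ fractional-coverage lower bound that drives the greedy's performance. A secondary issue is avoiding explicit enumeration of all of $\mathcal{F}$ (whose size $\approx(Lf\log m)^{f+1}$ would blow the target runtime), which is handled by only materialising the canonical subgraphs $G_{j,\,h_j(F_i)}$ indexed by pair-hash index pairs $(i,j)$, as described above.
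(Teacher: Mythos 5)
Your construction is correct, but it takes a genuinely different route from the paper. The paper never runs a set-cover procedure over subgraphs: it first computes, for each pair $(P_i,F_i)$, the set of indices $j$ of the Strong \HM that separate $P_i$ from $F_i$ (at least $\ell/2$ of them), then invokes the deterministic hitting-set routine of Lemma~\ref{lem:det-hitting-set} to select $O(\log k)$ hash \emph{functions} hitting all these index sets, and finally outputs, for each selected function $h$, \emph{all} subgraphs $G_{h,i_1,\ldots,i_f}$ over every tuple of excluded values, giving value $O((2Lf\log m)^f\log k)$ (Lemmas~\ref{cl:small-perfectexp} and~\ref{lem:smaller-covering-family}). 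You instead run greedy set cover directly over the canonical subgraphs $G_{j,h_j(F_i)}$ and bound the output by $\ln k$ times a fractional cover of value $2\binom{q}{\le f}$; both arguments hinge on exactly the same half-separation property of the Strong \HM, and the resulting value and time bounds are equivalent up to constants. Two remarks. First, a small patch is needed in your LP step: the greedy guarantee is relative to the fractional optimum over the family greedy is allowed to use, and your uniform certificate $x_{j,S}\equiv 2/\ell$ is stated over all of $[\ell]\times\binom{[q]}{\le f}$ while greedy only sees the canonical candidates. This is easily fixed by observing that the coverage in your certificate comes solely from canonical sets, and the number of distinct canonical sets is at most $\ell\binom{q}{\le f}$, so placing weight $2/\ell$ on each of them is feasible with value at most $2\binom{q}{\le f}$; with that one line, the $\ln k$ guarantee applies. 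Second, a trade-off worth noting: the paper's more rigid output (all value-tuples over only $O(\log k)$ selected hash functions) immediately yields the extra structural property of Lemma~\ref{lem:nice-num-collide-faults}, namely that for each input pair only $O(\log k)$ output subgraphs avoid its fault set, which your instance-adaptive greedy cover does not obviously provide; on the other hand, your cover can be smaller in favorable instances since it never materialises tuples that are not the image of some $F_i$. Your per-pair separation test costs $O(\ell(L+f))$ per pair, which matches the paper's own accounting in Lemma~\ref{cl:small-perfectexp}, so that is not a point against you.
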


Towards the goal of proving Theorem~\ref{thm:relaxed-RPC}, we start by showing that for every $a,b,N$, given an explicit set $\mathcal{S}=\{( A,B ) ~\mid~ A, B\subseteq [N], |A| \leq a,|B| \leq b, A \cap B =\emptyset\}$, there exists considerably smaller set of hash function $\mathcal{H}_{\mathcal{S}}=\{h: [N] \to [q]\}$ with the following property. For every $(A,B)\in \mathcal{S}$, there exists a function $h \in \mathcal{H}_{\mathcal{S}}$ that does not collide on $(A,B )$. The next lemma should be compared with Corollary~\ref{cor:RS}. The latter works for any pair of disjoint sets $A, B$, while the next lemma satisfies the collision-free property for every $( A,B )\in \mathcal{S}$. This allows us to obtain a considerably smaller family of functions.

\begin{lemma}\label{cl:small-perfectexp}
Let $b\le a\le N$ all be integers. There is an algorithm $\mathcal{A}$ which given a set $\mathcal{S}=\{ ( A,B) \rangle ~\mid~ A, B\subseteq [N], |A|\leq a,|B|\leq b, A \cap B =\emptyset\}$ and a \code{N,a,b,\ell}{q}{}-Strong \HM $\H$ as input, and outputs a collection of hash functions $\mathcal{H}_{\mathcal{S}}=\{h: [N] \to [q]\}$ such that the following holds:
\begin{itemize}
\item{(P1)} For every $(A,B) \in \mathcal{S}$,  $\exists h \in \mathcal{H}_{\mathcal{S}}$ such that  $\forall (x,y)\in A\times B$, we have $h(x)\neq h(y)$.
\item{(P2)} $|\mathcal{H}_{\mathcal{S}}|=O(\log |\mathcal{S}|)$.
\end{itemize}
Moreover, $\mathcal{A}$ runs in time ${O}(T_{\H}+a\cdot \ell\cdot |\mathcal{S}|)$, where $T_\H$ is the computation time of $\H$.
\end{lemma}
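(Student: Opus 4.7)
My plan is to recast this as a set cover problem and use the ``average-case'' guarantee of the Strong \HM to derive a logarithmic-size cover via a standard greedy argument. For each $i \in [\ell]$, let $C_i \subseteq \mathcal{S}$ be the subset of pairs $(A,B) \in \mathcal{S}$ on which $h_i$ is collision-free, i.e., $h_i(x) \neq h_i(y)$ for all $(x,y) \in A \times B$. The Strong \HM property \eqref{eq:SHM} is \emph{pointwise}: for every fixed $(A,B)$, at least $\ell/2$ of the indices $i$ satisfy $(A,B) \in C_i$. Summing over $(A,B) \in \mathcal{S}$ yields $\sum_{i \in [\ell]} |C_i| \ge |\mathcal{S}| \cdot \ell/2$, so by averaging some index $i^*$ with $|C_{i^*}| \ge |\mathcal{S}|/2$ must exist.

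I would then run greedy set cover: at round $t$, pick $i_t$ maximizing $|C_{i_t} \cap \mathcal{S}_t|$ (starting from $\mathcal{S}_1 = \mathcal{S}$), add $h_{i_t}$ to $\mathcal{H}_{\mathcal{S}}$, and update $\mathcal{S}_{t+1} = \mathcal{S}_t \setminus C_{i_t}$. Because the Strong \HM guarantee is pointwise, it survives restriction to any subset $\mathcal{S}_t \subseteq \mathcal{S}$, so the same averaging argument shows $|\mathcal{S}_{t+1}| \le |\mathcal{S}_t|/2$. After $\lceil \log_2 |\mathcal{S}| \rceil + 1 = O(\log |\mathcal{S}|)$ rounds every pair is covered, establishing both (P1) and (P2).

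For the running time, after paying $T_{\H}$ to obtain $\H$, I will precompute the incidence matrix $M \in \{0,1\}^{\ell \times |\mathcal{S}|}$ with $M[i,(A,B)] = \mathbf{1}[(A,B) \in C_i]$. A single entry is obtained in $O(a)$ time: hash $\{h_i(x) : x \in A\}$ into a table, then query each of the at most $b \le a$ elements of $B$ in $O(1)$ time per lookup. Thus $M$ is built in $O(a \cdot \ell \cdot |\mathcal{S}|)$ time. The greedy phase maintains row counters $c_i := |C_i \cap \mathcal{S}_t|$ and, whenever a pair $(A,B)$ is removed, decrements $c_i$ for every $i$ with $M[i,(A,B)] = 1$; the total bookkeeping is proportional to the number of $1$'s in $M$, i.e.\ $O(\ell \cdot |\mathcal{S}|)$, plus $O(\ell)$ per round for the argmax. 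Adding everything gives exactly the claimed $O(T_{\H} + a \cdot \ell \cdot |\mathcal{S}|)$ bound.

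The only step requiring care is checking that the averaging argument still goes through after restriction to $\mathcal{S}_t$, and it does precisely because the Strong \HM property is a \emph{per-pair} (rather than joint) guarantee. Aside from that, the proof is essentially standard set-cover boilerplate; the conceptual content is simply observing that \eqref{eq:SHM} translates directly into a ``each element is covered by at least a constant fraction of sets'' condition, which is exactly what the greedy analysis consumes.
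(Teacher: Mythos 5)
Your proposal is correct and follows essentially the same route as the paper: the paper forms, for each pair $(A,B)\in\mathcal{S}$, the index set $\mathcal{H}_{A,B}=\{i\in[\ell]\mid h_i \text{ is collision-free on } A\times B\}$ (of size $\ge \ell/2$ by the Strong \HM{} property) and then invokes the deterministic hitting-set algorithm of Lemma~\ref{lem:det-hitting-set} as a black box, which is exactly the greedy halving argument you carry out explicitly in the dual (set-cover) view, with the same $O(T_{\H}+a\cdot\ell\cdot|\mathcal{S}|)$ time accounting. The only difference is that you inline the greedy analysis rather than citing the hitting-set lemma, which is fine.
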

\begin{proof}

For every  $(A,B)\in \mathcal{S}$, let $\mathcal{H}_{A,B}=\{i\in[\ell]\mid \forall (x,y)\in A\times B,\ h_i(x)\neq h_i(y)\}$. Since $\H$ is a Strong \HM, we have $|\mathcal{H}_{A,B}|\geq \nicefrac{\ell}{2}$. The desired collection of hash functions  $\mathcal{H}_{\mathcal{S}}$ is obtained by computing a small hitting set for the sets $\{\mathcal{H}_{A,B} ~\mid~ ( A,B ) \in \mathcal{S}\}$. This can be done by the algorithm of Lemma \ref{lem:det-hitting-set}.
 
We next analyze the computation time. First we   compute the $\ell\times N$ Boolean matrix  $M_{\H}$ corresponding to $\H$ where the $(i,x)^{\text{th}}$ entry of $M_{\H}$ is simply $h_{i}(x)$.  After the computation of $M_{\H}$ we simply go over each $(A,B)\in\mathcal{S}$ and compute the sets $\mathcal{H}_{A,B}$. The computation time of all the $\mathcal{H}_{A,B}$ sets takes $O(|\mathcal{S}|\cdot \ell\cdot (a+b))$ time. Then, the set $\mathcal{H}_{\mathcal{S}}$ is computed by applying the hitting set algorithm of Lemma \ref{lem:det-hitting-set} with parameters $n=\ell, L=\ell/2$, and $q=|\mathcal{S}|$. Thus the total computation time is $O(T_{\H}+a\cdot \ell\cdot |\mathcal{S}|)$. 
\end{proof}

Finally, we show how to compute a covering graph family $\mathcal{G}^*_{L,f}$ for the critical set $\mathcal{D}_{L,f}$.  The proof of the following lemma is similar to that of Theorem \ref{thm:general-covering}, but it is based on Lemma~\ref{lem:strongcodes} and Lemma \ref{cl:small-perfectexp} rather than on Theorem~\ref{thm:HM}.
For the sake of brevity, we only prove the below for Reed-Solomon codes, as it suffices to give the claim in Theorem~\ref{thm:relaxed-RPC}.

\begin{lemma}\label{lem:smaller-covering-family}
Given a critical set $\mathcal{D}$, there is a deterministic algorithm for computing an $(L,f)$-\RP $\mathcal{G}(\mathcal{D})$ of cardinality $O((2Lf\log N)^{f}\cdot \log (|\mathcal{D}|))$ in time $\widetilde{O}((2Lf\log N)^{f+1}\cdot m+(n \cdot L^f)\cdot (L\cdot f)^2)$.
\end{lemma}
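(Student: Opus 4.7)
\textbf{Proof proposal for Lemma~\ref{lem:smaller-covering-family}.} Label the edges of $G$ by $[m]$ and, for each pair $(P_i,F_i) \in \mathcal{D}$, let $A_i \subseteq [m]$ be the set of edge labels of $P_i$ and let $B_i \subseteq [m]$ be the set of edge labels of $F_i$. Then $|A_i| \le L$, $|B_i| \le f$, and $A_i \cap B_i = \emptyset$. Set $\mathcal{S} := \{(A_i,B_i) : i \in [|\mathcal{D}|]\}$. By Proposition~\ref{prop:RPC}, applied to any Boolean hash family $\H'$ for which every $(A,B) \in \mathcal{S}$ is distinguished in the sense of \eqref{eq:HM}, we obtain a collection of at most $2|\H'|$ subgraphs that forms the desired restricted $(L,f)$-\RP. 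Thus the task reduces to building such an $\H'$ of size $O((Lf\log m)^f \cdot \log |\mathcal{D}|)$.

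To build $\H'$, I will first use Lemma~\ref{lem:strongcodes} to instantiate the Reed-Solomon Strong \HM: with $a = L$, $b = f$, this gives a \code{m, L, f, \ell}{q}{}-Strong \HM $\H$ where $\ell = O\!\left(\tfrac{Lf \log m}{\log L}\right)$ and $q = O\!\left(\tfrac{Lf \log m}{\log L}\right)$, computable in time $\widetilde{O}(m \cdot Lf)$. Feeding both $\H$ and the explicit set $\mathcal{S}$ into Lemma~\ref{cl:small-perfectexp} then distils a sub-family $\mathcal{H}_{\mathcal{S}} \subseteq \H$ of size $O(\log |\mathcal{D}|)$ that already satisfies the property (P1) with respect to $\mathcal{S}$, in additional time $\widetilde{O}(T_{\H} + L \cdot \ell \cdot |\mathcal{D}|) = \widetilde{O}\!\left(m \cdot Lf + L^2 f \cdot |\mathcal{D}|\right)$. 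This step is what replaces the ``$\log N$ blow-up'' arising from Theorem~\ref{thm:HM} by the target-dependent $\log |\mathcal{D}|$ factor.

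Next, I will carry out the alphabet reduction of Lemma~\ref{lem:alphabet} in a target-aware fashion. For each $h \in \mathcal{H}_{\mathcal{S}}$ and each subset $S \subseteq [q]$ with $|S| \le f$, define the Boolean function
\[
h'_{h,S}(x) \;=\; \begin{cases} 0 & \text{if } h(x) \in S,\\ 1 & \text{otherwise,}\end{cases}
\]
and let $\H' := \{ h'_{h,S} : h \in \mathcal{H}_{\mathcal{S}},\, S \in \binom{[q]}{\le f}\}$. By the same argument as in the proof of Lemma~\ref{lem:alphabet}, for every $(A,B) \in \mathcal{S}$ one can pick some $h^* \in \mathcal{H}_{\mathcal{S}}$ that distinguishes $(A,B)$ (guaranteed by (P1)) and set $S^* := h^*(B)$; then $h'_{h^*,S^*}$ maps $B$ entirely to $0$ and $A$ entirely to $1$, so \eqref{eq:HM} holds for $(A,B)$ in $\H'$. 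The size is $|\H'| = O(\log |\mathcal{D}|) \cdot \binom{q}{\le f} = O(\log |\mathcal{D}|) \cdot (2Lf\log m)^f$, matching the claimed covering value after multiplying by the $2$ from Proposition~\ref{prop:RPC}.

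For the running time, the dominant cost is materialising $\H'$ and invoking Proposition~\ref{prop:RPC}, which scans each of the $O(m)$ edges once per Boolean hash function, for a total of $\widetilde{O}\!\left((2Lf\log m)^{f+1} \cdot m\right)$; adding the earlier $\widetilde{O}(L^2 f \cdot |\mathcal{D}|)$ term from Lemma~\ref{cl:small-perfectexp} and using the bound $|\mathcal{D}| \le n L^f$ that one has in the intended application yields the stated $\widetilde{O}\!\left((2Lf\log N)^{f+1}\cdot m + (n L^f) \cdot (Lf)^2\right)$. The only conceptual subtlety (rather than a genuine obstacle) is that one cannot first alphabet-reduce and then apply Lemma~\ref{cl:small-perfectexp}: the alphabet-reduced family is no longer a Strong \HM with constant success probability, so the order above---filter first over the large alphabet, then reduce the alphabet for each surviving function---is important.
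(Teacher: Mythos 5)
Your proposal is correct and follows essentially the same route as the paper's proof: instantiate the Reed--Solomon Strong \HM of Lemma~\ref{lem:strongcodes} with $a=L$, $b=f$, filter it against $\mathcal{S}=\{(E(P_i),F_i)\}$ via Lemma~\ref{cl:small-perfectexp} to get $O(\log|\mathcal{D}|)$ functions, and then blow up over all $\le f$-subsets of the alphabet to obtain the subgraphs (implicitly using $|\mathcal{D}|=O(nL^f)$ for the second time term, exactly as the paper does). The only difference is presentational: you route the last step through Lemma~\ref{lem:alphabet} and Proposition~\ref{prop:RPC} (picking up a harmless factor of $2$), whereas the paper writes the subgraphs $G_{h,i_1,\ldots,i_f}$ directly, which is the same construction unrolled; your closing remark about why the filtering must precede the alphabet reduction is accurate and consistent with the paper's ordering.
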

\begin{proof}
Set $a=L$, $b=f$ and $N=m$ and let $\mathcal{S}=\mathcal{D}_{L,f}$. Note that since each pair in $\mathcal{D}$ is given by $(P, F )$ where $P \cap F=\emptyset$, $|P|\leq L$ and $|F|=
\leq f$, the set $\mathcal{S}$ is a legal input to Claim \ref{cl:small-perfectexp} combined with Reed-Solomon Strong \HM from Lemma~\ref{lem:strongcodes}. 

We then safely apply Claim \ref{cl:small-perfectexp} to compute a collection of hash functions $\mathcal{H}_{\mathcal{S}}=\{h: [N] \to [2ab\log N]\}$ that satisfies properties (P1) and (P2).
For every $h \in \mathcal{H}_{\mathcal{S}}$ and for every subset $i_1,\ldots, i_b \in [1,2ab\log N]$, define:
\begin{equation}\label{eq:graph-def-hash-smaller}
G_{h,i_1,i_2,\ldots, i_b}=\{ e_\ell \in E(G) ~\mid~ h(\ell) \notin \{i_1,i_2,\ldots, i_b\}\}~.
\end{equation}
Overall, $\mathcal{G}(\mathcal{D})=\{G_{h,i_1,i_2,\ldots, i_b} ~\mid~ h \in \mathcal{H}_{\mathcal{S}}, i_1,i_2,\ldots, i_b \in [1,2ab\log N]\}$. 

The cardinality of $\mathcal{G}^w_{L,f}$ is bounded by $O(|\mathcal{H}_{\mathcal{S}}|\cdot (2Lf\log N)^{b})=O((2Lf\log N)^{f}\cdot \log(|\mathcal{D}|))$,. To show that $\mathcal{G}(\mathcal{D})$ satisfies properties of Theorem~\ref{thm:relaxed-RPC}, it is sufficient to show that it resiliently covers all the pairs in the critical set $\mathcal{D}_{L,f}$. 
Fix $(P,F)\in \mathcal{D}$ where $P$ is a $u$-$v$ path. We will show that there exists at least one subgraph $G' \in \mathcal{G}(\mathcal{D})$ satisfying that $P \subseteq G'$ and $F \cap G'=\emptyset$. 
Letting $A=E(P)$ and $B=F$, we have that $(A,B) \in \mathcal{S}$.
By property (P1) of $\mathcal{H}_{\mathcal{S}}$, there exists a function $h$ that does not collide on $A,B$. That is, there exists a function $h \in \mathcal{H}$ such that $h(i) \neq h(j)$ for every $i \in A$ and $j \in B$.
Thus, letting $B=\{s_1,\ldots, s_b\}$ and $i_1=h(s_1),\ldots, i_b=h(s_b)$, we have that $h(s'_j)\notin \{i_1,\ldots, i_b\}$ for every $s'_j \in A$. Therefore, the subgraph $G_{h,i_1,i_2,\ldots, i_b}$ satisfies that 
$A  \subseteq S_{h,i_1,i_2,\ldots, i_b}$ and $B \cap  S_{h,i_1,i_2,\ldots, i_b}=\emptyset$. 

Finally, we analyze the computation time. By Cl. \ref{cl:small-perfectexp}, the computation of $\mathcal{H}_{\mathcal{S}}$ takes $\widetilde{O}(L f \cdot m+(n \cdot L^f)\cdot (L\cdot f)^2)$ time.
Next, consider the evaluation all functions in $\mathcal{H}_{\mathcal{S}}$ on all the elements in $[m]$. This takes $\widetilde{O}(\log(|\mathcal{D}|)\cdot m)=\widetilde{O}(m \cdot \log(|\mathcal{D}|))$. Next, for a fixed hash function $h \in \mathcal{H}_{\mathcal{S}}$ and $i_1,i_2,\ldots, i_b \in [1, 2ab\log N]$, the computation of the subgraph $G_{h,i_1,i_2,\ldots, i_b}$ can be done in $O(m)$ time. Thus, the computation of all the subgraphs takes $\widetilde{O}((L\cdot f\log N)^{f}\cdot \log(n\cdot L^f)\cdot m)$ time.
\end{proof}
Finally, we show that for every $(P,F) \in \mathcal{D}$, there are at most $O(\log N)$ subgraphs in $\mathcal{G}(\mathcal{D})$ that contain no edge from $F$.
\begin{lemma}\label{lem:nice-num-collide-faults}
Fix $(P,F) \in \mathcal{D}$. Then, $|\{G' \in \mathcal{G}^w_{L,f} ~\mid~ F \cap G'=\emptyset\}|=O(\log N)$. 
\end{lemma}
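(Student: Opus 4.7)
The plan is to unpack the indexing of the subgraph family built in Lemma~\ref{lem:smaller-covering-family}. Each subgraph of $\mathcal{G}^w_{L,f}$ has the form $G_{h,i_1,\ldots,i_b}$ for some $h \in \mathcal{H}_\mathcal{S}$ and some $b$-subset $\{i_1,\ldots,i_b\}\subseteq [2Lf\log N]$, and an edge $e$ lies in the subgraph iff $h(e)\notin\{i_1,\ldots,i_b\}$. Rewriting the avoidance condition, $F\cap G_{h,i_1,\ldots,i_b}=\emptyset$ is equivalent to $h(F)\subseteq \{i_1,\ldots,i_b\}$. So the first step is to bound, hash-function-by-hash-function, the number of admissible $b$-subsets containing the image set $h(F)$.

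The key structural observation is that if $h$ is \emph{injective on $F$}, then $|h(F)|=|F|=b$, and the only $b$-subset of $[2Lf\log N]$ that contains $h(F)$ is $h(F)$ itself. Hence at most one subgraph of the form $G_{h,\cdot}$ can avoid $F$. Summing over $h\in\mathcal{H}_\mathcal{S}$ and using the bound $|\mathcal{H}_\mathcal{S}|=O(\log|\mathcal{S}|)=O(\log N)$ from Claim~\ref{cl:small-perfectexp} would then yield the desired bound of $O(\log N)$.

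The main obstacle is therefore to guarantee injectivity of every $h\in\mathcal{H}_\mathcal{S}$ on the particular fault set $F$, since the hitting-set guarantee only asserts non-collision across pairs $A\times B$ with $A=E(P)$ rather than inside $B=F$. My plan is to mildly enrich the input to Claim~\ref{cl:small-perfectexp}: besides the pairs $(E(P_i),F_i)$ drawn from $\mathcal{D}$, also adjoin, for every $(P_i,F_i)\in\mathcal{D}$ and every pair of distinct edges $e\neq e'\in F_i$, the singleton pair $(\{e\},\{e'\})$. The enriched collection still has polynomial size $O(|\mathcal{D}|\cdot f^2)$, so the resulting $\mathcal{H}_\mathcal{S}$ retains cardinality $O(\log N)$, and now the hitting-set property certifies that each $h\in\mathcal{H}_\mathcal{S}$ separates every pair of edges inside every $F$ appearing in $\mathcal{D}$, which is exactly the injectivity we need.

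An alternative route — which avoids any modification of the construction — is to appeal directly to the Strong-HM property of the underlying Reed--Solomon family from Lemma~\ref{lem:strongcodes}: since the alphabet size $q=\Theta(Lf\log N)$ exceeds $\Omega(f^2\log N)$ whenever $L\ge f$, a union bound over the $\binom{f}{2}$ pairs inside $F$ shows that a constant fraction of the base hash functions are injective on $F$, and one can re-run the greedy covering argument inside Claim~\ref{cl:small-perfectexp} restricted to this fraction, keeping $|\mathcal{H}_\mathcal{S}|$ at $O(\log N)$ up to a constant factor. Either route turns the structural one-subset-per-hash-function bound into the final $O(\log N)$ count.
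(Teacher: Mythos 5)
Your core counting argument is the same as the paper's: a subgraph $G_{h,i_1,\ldots,i_f}$ avoids $F$ exactly when $h(F)\subseteq\{i_1,\ldots,i_f\}$, so one wants (essentially) one admissible index set per $h\in\mathcal{H}_{\mathcal{S}}$, and then concludes with $|\mathcal{H}_{\mathcal{S}}|=O(\log|\mathcal{S}|)=O(\log N)$ from Claim~\ref{cl:small-perfectexp}. The paper's proof of Lemma~\ref{lem:nice-num-collide-faults} does precisely this, taking $\mathcal{G}_F=\{G_{h,h(e_{j_1}),\ldots,h(e_{j_f})}\mid h\in\mathcal{H}_{\mathcal{S}}\}$ and counting one subgraph per hash function, with no injectivity detour. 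You are right that injectivity of $h$ on $F$ is exactly what makes the ``one index set per $h$'' step airtight, and it is to your credit that you flagged it.

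The problem is that both of your repairs fail for the same reason: they conflate an \emph{existential} guarantee over $\mathcal{H}_{\mathcal{S}}$ with the \emph{universal} one your count needs. Property (P1) of Claim~\ref{cl:small-perfectexp} (like the Strong $\mathsf{HM}$ property and any hitting-set selection via Lemma~\ref{lem:det-hitting-set}) only promises that for each pair placed in $\mathcal{S}$ there exists \emph{some} $h\in\mathcal{H}_{\mathcal{S}}$ separating it. Hence adjoining the pairs $(\{e\},\{e'\})$ for $e\neq e'\in F_i$ yields, for each such pair, one separating function --- not your asserted conclusion that every $h\in\mathcal{H}_{\mathcal{S}}$ separates all pairs inside every $F_i$. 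Your second route has the same defect: the constant fraction of base functions injective on $F$ (which is correctly computed from the relative distance via a union bound over $\binom{f}{2}$ pairs) depends on $F$, and rerunning the greedy hitting set on the per-index good sets again certifies only one good $h$ per $(P_i,F_i)$; you cannot instead restrict globally to functions injective on \emph{all} $F_i$ simultaneously, since by the same union bound such functions need not exist once $k\binom{f}{2}(1-\delta)\gg 1$. And universality really is needed for your count: a single $h^*\in\mathcal{H}_{\mathcal{S}}$ with $|h^*(F)|<f$ already contributes every index set of size at most $f$ containing $h^*(F)$, i.e.\ up to $\Theta(Lf\log N)$ subgraphs avoiding $F$, which destroys the $O(\log N)$ bound. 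So as written the proposal does not close the gap it identifies; the paper's own argument simply counts the canonical subgraph $G_{h,h(e_{j_1}),\ldots,h(e_{j_f})}$ for each $h$ --- one per hash function --- which is also the collection $\mathcal{G}_F$ that the covering property of Lemma~\ref{lem:smaller-covering-family} and the query procedure actually use.
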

\begin{proof}
Consider the construction of $\mathcal{G}^*_{L,f}$ described in the proof of Lemma \ref{lem:smaller-covering-family}. Let $\mathcal{S}=\mathcal{D}_{L,f}$ and let $\mathcal{H}_{\mathcal{S}}=\{h: [N] \to [2ab\log N]\}$ be the covering graph family for $\mathcal{S}$. 
Fix $(P,F) \in \mathcal{D}_{L,f}$. We claim that the only subgraphs in $\mathcal{G}^*_{L,f}$ that fully avoid a fixed set of exactly $F=\{e_{j_1}, \ldots, e_{j_f}\}$ edge faults is given by the subset of subgraphs $\mathcal{G}_F=\{ G_{h,h(e_{j_1}),\ldots, h(e_{j_f})} ~\mid~ h \in \mathcal{H}_{\mathcal{S}}\}$. 
To see this consider a subgraph $G'=G_{h,i_1,\ldots, i_f}$ where there exists $e_{j_\ell}$ such that $h(e_{j_\ell})\notin \{i_1,\ldots, i_f\}$. In this case,   we have that $e_{j_{\ell}}\in G'$. 
Since $\mathcal{G}_F$ consists of exactly one subgraph per hash function in $\mathcal{H}_{\mathcal{S}}$, we get that $|\mathcal{G}_F|=O(\log(|\mathcal{D}_{L,f}|)=O(\log N)$.
%\textbf{MP: this part requires adaptation so that it will work for any $i\leq f$.}
\end{proof}

\end{document}